\def\proofrefautopageref#1{\hyperref[#1]{\,\guilsinglright\;p.\,\pageref*{#1}}}%
\renewcommand\proofrefsize[1]{%
  \begingroup
  \tiny%
  \fboxsep=1pt\relax
  \let\autopageref\proofrefautopageref
    \parbox[][][c]{3em}{#1}%
  \endgroup
}
\DeclarePairedDelimiterXPP\opnorm[1]{}{\lVert}{\rVert}{}{{#1}}
\DeclarePairedDelimiterXPP\onenorm[1]{}{\lVert}{\rVert}{_1}{{#1}}
\DeclarePairedDelimiterXPP\twonorm[1]{}{\lVert}{\rVert}{_2}{{#1}}
\DeclarePairedDelimiterXPP\dianorm[1]{}{\lVert}{\rVert}{_\diamond}{{#1}}
\let\leq\leqslant
\let\geq\geqslant
\newcounter{maintheorem}
\newcommand\bibalias[2]{%
  \@namedef{bibali@#1}{#2}%
}
\newcommand\biba@deblank[1]{\romannumeral\biba@deblank@#1/ /} 
\long\def\biba@deblank@#1 /{\biba@deblank@i#1/}
\long\def\biba@deblank@i#1/#2{\z@#1}
\newtoks\biba@toks
\let\bibalias@oldcite\cite
\def\cite{%
  \@ifnextchar[{%
    \biba@cite@optarg%
  }{%
    \biba@cite{}%
  }%
}
\newcommand\biba@cite@optarg[2][]{%
  \biba@cite{[#1]}{#2}%
}
\newcommand\biba@cite[2]{%
  \biba@toks{\bibalias@oldcite#1}%
  \def\biba@comma{}%
  \def\biba@all{}%
  \def\biba@argkeys{}%
  \@for\biba@one@:=#2\do{%
    \edef\biba@one{\expandafter\@firstofone\biba@one@\@empty}%
    \edef\biba@one{\expandafter\biba@deblank\expandafter{\biba@one}}
    \edef\biba@argkeys{\biba@argkeys\biba@comma\biba@one}%
    \@ifundefined{bibali@\biba@one}{%
      \edef\biba@all{\biba@all\biba@comma\biba@one}%
    }{%
      \PackageInfo{bibalias}{%
        Replacing citation `\biba@one' with `\@nameuse{bibali@\biba@one}'
      }%
      \edef\biba@all{\biba@all\biba@comma\@nameuse{bibali@\biba@one}}%
    }%
    \def\biba@comma{,}%
  }%
  %
  %
  \immediate\write\@auxout{\noexpand\bgroup\noexpand\renewcommand\noexpand\citation[1]{}\noexpand\citation{\biba@argkeys}\noexpand\egroup}%
  %
  %
  \edef\biba@tmp{\the\biba@toks{\biba@all}}%
  \biba@tmp%
}
\begin{document}
\title{Universal thermodynamic implementation of a process with a variable work cost}
\author{Philippe Faist}
\affiliation{Dahlem
  Center for Complex Quantum Systems, Freie Universität Berlin, 14195 Berlin,
  Germany}
\date{Jan 22, 2026}
\begin{abstract}
  The minimum amount of thermodynamic work required in order to implement a
  quantum computation or a quantum state transformation can be quantified using
  frameworks based on the resource theory of thermodynamics, deeply rooted in
  the works of Landauer and Bennett.  For instance, the work we need to invest
  in order to implement $n$ independent and identically distributed (i.i.d.)
  copies of a quantum channel is quantified by the \emph{thermodynamic capacity}
  of the channel when we require the implementation's accuracy to be guaranteed
  in diamond norm over the $n$-system input.
  Recent work showed that work extraction can be implemented \emph{universally},
  meaning the same implementation works for a large class of input states, while
  achieving a \emph{variable work cost} that is optimal for each individual
  i.i.d.\@ input state.
  Here, we revisit some techniques leading to derivation of the thermodynamic
  capacity, and leverage them to construct a thermodynamic implementation of $n$
  i.i.d.\@ copies of any time-covariant quantum channel, up to some process
  decoherence that is necessary because the implementation reveals the amount of
  consumed work.  The protocol uses so-called \emph{thermal operations} and
  achieves the optimal per-input work cost for any i.i.d.\@ input state; it
  relies on the conditional erasure protocol in our earlier work, adjusted to
  yield variable work.
  We discuss the effect of the work-cost decoherence.  While it can
  significantly corrupt the correlations between the output state and any
  reference system, we show that for any time-covariant i.i.d.\@ input state,
  the state on the output system faithfully reproduces that of the desired
  process to be implemented.
  As an immediate consequence of our results, we recover recent results for
  optimal work extraction from i.i.d.\@ states up to the error scaling and
  implementation specifics, and propose an optimal preparation protocol for
  time-covariant i.i.d.\@ states.
\end{abstract}
\maketitle

Ever since the founding works of Landauer and Bennett on the thermodynamics of
computation~\cite{%
  Landauer1961_5392446Erasure,Landauer1996_physical,%
  Bennett1982IJTP_ThermodynOfComp,Bennett1973IBMJRD_LogRevComp}, a rich line of
research has quantified fundamental restrictions on possible state
transformations~\cite{%
  BookBinder2018_ThermoQuantumRegime,Goold2016JPA_review,%
  BookSagawa2022_entropy,%
  Sagawa2009PRL_minimal,%
  Horodecki2013_ThermoMaj,Brandao2013_resource,%
  YungerHalpern2016NatComm_NATSandNATO,%
  Brandao2015PNAS_secondlaws,Lostaglio2017NJP_noncommutativity} as well as the
minimal thermodynamic resources, quantified in terms of thermodynamic work,
required in order to perform tasks such as work extraction, state preparation,
and to implement any quantum
channel~\cite{Dahlsten2011NJP_inadequacy,delRio2011Nature,Aberg2013_worklike,%
  Faist2015NatComm,PhDPhF2016,Gour2018NatComm_entropic,Faist2018PRX_workcost}.
To implement thermodynamic tasks, such as the extraction of thermodynamic work
from a state $\rho$ that is out of equilibrium, one can appeal to the framework
of \emph{thermal
  operations}~\cite{Horodecki2013_ThermoMaj,Brandao2013_resource}, in which
arbitrary energy-conserving unitary interactions with a heat bath are allowed.
To derive fundamental limits, it is convenient to use the framework of
\emph{Gibbs-preserving maps}, where one derives bounds that apply even to an
agent who could carry out for free any arbitrary quantum channel that fixes the
thermal state~\cite{Faist2015NatComm,Faist2018PRX_workcost}.  Bounds derived
with Gibbs-preserving maps are often conveniently formulated as semidefinite
programs, and they automatically apply in the context of thermal operations.
The gap between both types of operations can be traced back to the fact that
thermal operations are time-covariant, meaning that they commute with time
evolution~\cite{Faist2015NJP_Gibbs,%
  Gour2018NatComm_entropic,Marvian2014PRA_modes,%
  Lostaglio2015NC_beyond,Lostaglio2015PRX_coherence,Faist2015NJP_Gibbs,%
  Tajima2024arXiv_gibbspreserving,Tajima2025arXiv_universal}.

Rather than a state transformation $\rho\to\sigma$, one can quantify the work
cost of implementing a particular process, specified by a quantum channel
$\mathcal{E}_X$.  If we care only about successfully implementing
$\mathcal{E}_X$ on a fixed input state $\sigma_X$, we might demand that the
implementation correctly prepare the output state $\mathcal{E}_X(\sigma_{XR})$
for any purification $\ket\sigma_{XR}$ of the fixed input
$\sigma_X$~\cite{Faist2015NatComm}.  (Reproducing the correlations between the
output system and $R$ ensures that the process $\mathcal{E}$ was implemented on
$\sigma_X$, and can carry a different work cost than implementing the simple
state transformation
$\sigma_X \to \mathcal{E}(\sigma_X)$~\cite{Faist2015NatComm,PhDPhF2016}.)  

For small systems and if $\sigma_X$ is sufficiently mixed, an implementation
that correctly prepares $\mathcal{E}_X(\sigma_{XR})$ when given $\sigma_{XR}$ as
an input is also guaranteed to implement $\mathcal{E}_X$ accurately on other
input states.  This follows from the fact that $\mathcal{E}(\sigma_{XR})$ fully
specifies the Choi matrix of $\mathcal{E}$.  This argument only applies to small
systems and for sufficiently mixed $\sigma_X$, because the accuracy to which the
Choi state is specified depends on dimensional factors and the minimal
eigenvalue of $\sigma_X$.

We might, instead, demand that an implementation of $\mathcal{E}_X$ reproduce
the correct output for all input states, with a guaranteed accuracy in diamond
norm.  This condition guarantees the implementation prepares
$\mathcal{E}_X(\sigma_{XR})$ accurately, for any input $\sigma_{XR}$.  Such
implementations are called \emph{universal}.
An asymptotically optimal universal implementation of an i.i.d.\@ channel
$\mathcal{E}^{\otimes n}$, with a deterministic work cost, was constructed in
Refs.~\cite{Faist2019PRL_thcapacity,Faist2021CMP_impl}.  Its work cost is given
by the channel's \emph{thermodynamic capacity}; it is equal to the worst case
over all individual work costs of input-state-specific implementations for all
possible i.i.d.\@ input states.  The proof relies on Schur-Weyl
duality~\cite{PhDHarrow2005,Haah2017IEEETIT_sampleoptimal} and bears resemblance
to the quantum reverse Shannon theorem's
proof~\cite{Bennett2014_reverse,Berta2011_reverse}.
Our Ref.~\cite{Faist2021CMP_impl} presents its main theorems in the case where
universality is requested over all input states, but its proof techniques extend
immediately to universality over any arbitrary set of input states.  This fact
could have been better communicated in
Refs.~\cite{Faist2019PRL_thcapacity,Faist2021CMP_impl}, and we append complete
statements to this effect in \cref{appx:UniversalWithSetOfInputs}.
By considering the process $\mathcal{E}_X(\cdot) = \tr(\cdot)\,\gamma_X$ which
deterministically prepares the output thermal state $\gamma_X$ at the background
inverse temperature $\beta>0$, and any arbitrary set of states
$\hat{\mathscr{S}}_X$, these results give a universal work extraction scheme
from i.i.d.\@ states with an optimal deterministic extracted work per copy
$\beta^{-1}\min_{\sigma_X\in\hat{\mathscr{S}}_X}\DD{\sigma_X}{\gamma_X}$, where
$\DD{\sigma}{\sigma'} = \tr`\big[\sigma`\big(\log\sigma-\log\sigma')]$ is the
quantum relative entropy.  This work cost coincides with the worst-case free
energy over inputs from the given set.

Recently, a series of works have studied in closer detail how the specific task
of work extraction can be implemented
universally~\cite{Safranek2023PRL_work,Chakraborty2025QST_sample,%
  Watanabe2024PRL_black,Watanabe2025arXiv_universal}.
In particular, Ref.~\cite{Watanabe2025arXiv_universal} presents a work
extraction protocol with a variable work cost, rather than a deterministic work
cost.  The protocol of Ref.~\cite{Watanabe2025arXiv_universal} is universal over
all i.i.d.\@ input states.  Namely, when given as an input
$\sigma_X^{\otimes n}$, it extracts an asymptotic amount of work per copy
$\beta^{-1}\DD{\sigma_X}{\gamma_X}$, which is optimal even for a protocol that
would have been specifically designed to operate only on $\sigma_X^{\otimes n}$.

In this contribution, we revisit the techniques of our earlier
Refs.~\cite{Faist2019PRL_thcapacity,Faist2021CMP_impl}, and show that they can
lead to universal, variable-work cost implementations of a large class of
time-covariant quantum processes, including work extraction as well as the
preparation of time-covariant states.
Specifically, given an arbitrary process $\mathcal{E}^{\otimes n}_{X\to X}$ in
which $\mathcal{E}_{X\to X}$ commutes with the Hamiltonian time evolution (i.e.,
it is \emph{time-covariant}), we exhibit a protocol with variable work cost that
implements a map $\mathcal{T}_{X^n\to X^n}$ with the following properties.
(i)~\emph{Implemented with thermal operations:} It consists of a thermal
operation that operates on $X^n$ and a battery $W$ (along with an ancillary heat
bath that appears in the thermal operation);
(ii)~\emph{Optimal variable work cost:} When given the input state
$\sigma^{\otimes n}$, it consumes an amount of work per copy asymptotically
equal to
$\beta^{-1}`\big[ \DD{\mathcal{E}(\sigma)}{\Gamma} - \DD{\sigma}{\Gamma}]$;
(iii)~\emph{Universal:} It universally implements the map
$\mathcal{E}^{\otimes n}$, up to a dephasing operation that amounts to revealing
to the environment how much work was used.  We show that for any
time-covariant $\sigma_X$, the implementation outputs
$[\mathcal{E}(\sigma_X)]^{\otimes n}$ when given as input
$\sigma_X^{\otimes n}$, while a specific type of dephasing affects the
correlations between $X$ and an input purifying reference system.
The case of work extraction is recovered by considering the process
$\mathcal{E}(\cdot) = \tr`(\cdot)\,\gamma$.

A core technical contribution involves adapting our conditional erasure protocol
in Ref.~\cite{Faist2021CMP_impl}.  \emph{Conditional erasure} is the task of
resetting a system $S$ to some standard state while having access to some memory
register $M$~\cite{delRio2011Nature,Faist2015NatComm}.  (This task can be
thought of as \emph{conditional thermalization} or even \emph{conditional work
  extraction}, if the final fixed state of $S$ is the thermal state.)
Ref.~\cite{Faist2021CMP_impl} provides the tools necessary to create a
\emph{semiuniversal} implementation of conditional erasure at a deterministic
work cost of some predetermined amount $w$, and which universally implements
conditional erasure over the set of all states whose individual erasure work
cost does not exceed the threshold $w$.  A universal, variable-work-cost
conditional erasure protocol can be obtained by performing a suitable
measurement on $SM$ and conditionally applying a semiuniversal protocol with the
necessary threshold value $w$ depending on the measurement outcome.
The conditional erasure protocol translates to an implementation of
$\mathcal{E}$ as follows: We first implement the unitary Stinespring dilation of
$\mathcal{E}^{\otimes n}$ at no work cost involving some ancillary systems $E^n$
acting as the environment, and then proceed to conditionally erase $E^n$ using
$X^n$ as our memory register.

We further investigate a decoherence effect that is inherent to a variable-work
implementation of a quantum process.  Any such process reveals the amount of
work consumed in the final charge state of the battery.  Yet this amount of
consumed work must be correlated with the states involved in the process given
that the process uses a variable amount of work.  Also, the implementation
interacts differently with the environment when different amounts of work are
consumed.  Revealing this information causes, in general, a form of decoherence
of the process that can significantly affect the final state.  Yet, for any
time-covariant i.i.d.\@ input state, this decoherence leaves the output state on
the system invariant and only affects its correlation with a reference system.

In the following, we introduce the thermodynamic framework and notation
(\cref{sec:Setup}) and revisit the simpler situation of work extraction from a
state (\cref{sec:WarmupWorkExtraction}).  We then present our main results:
first for conditional erasure (\cref{sec:ConditionalErasureVariable}), followed
by our general optimal, universal, variable-work-cost thermodynamic protocol for
time-covariant processes (\cref{sec:MainResult}).  We proceed to discuss the
effect of the work-cost dephasing (\cref{sec:WorkCostCovariance}).  We conclude
in \cref{sec:Outlook}.

\section{Thermodynamic model and notation}
\label{sec:Setup}

We follow closely the definitions and notation used in our earlier
Ref.~\cite{Faist2021CMP_impl}.  We focus on the resource theory of thermal
operations (Refs.~\cite{%
  BookBinder2018_ThermoQuantumRegime,Goold2016JPA_review,%
  Brandao2013_resource,Horodecki2013_ThermoMaj,Brandao2015PNAS_secondlaws,%
  Faist2015NatComm,PhDPhF2016} and references therein).
We consider a quantum system $X$ of finite dimension, with any arbitrary
Hermitian operator $H_X$ as Hamiltonian.

We imagine an agent whose goal is to implement a particular fixed quantum
evolution, specified as a completely positive, trace-preserving map
$\mathcal{E}_{X\to X}$.
We assume that the process is \emph{time-covariant}, meaning that
$\mathcal{E}_{X\to X}`[ \ee^{-iH_Xt} \, `(\cdot)\, \ee^{iH_Xt} ] = \ee^{-iH_Xt}
\, \mathcal{E}_{X\to X}`(\cdot) \, \ee^{iH_Xt}$ for all $t$; this assumption
sidesteps issues of defining a work cost when manipulating
coherence~\cite{Faist2015NJP_Gibbs,%
  Lostaglio2015NC_beyond,Lostaglio2015PRX_coherence}.

We suppose that the agent operates with an environment at fixed inverse
temperature $\beta>0$; concretely, they can: (i)~bring in any finite-dimensional
ancillary system $B$, with any arbitrary Hamiltonian $H_B$, initialized in its
thermal state $\gamma_B = \ee^{-\beta H_B} / \tr`\big(\ee^{-\beta H_B})$;
(ii)~apply any global energy-conserving unitary $U_{XB}$ acting on $X$ and any
relevant heat baths included from point (i), where energy conservation refers to
commuting with the total Hamiltonian, $U_{XB} `(H_X+H_B) = `(H_X+H_B) U_{XB}$;
and (iii)~trace out any subsystem.
The agent can implement a map $\mathcal{E}_{X\to X}$ to arbitrary precision if
and only if $\mathcal{E}_{X\to X}$ is a \emph{thermal operation}, which can be
written as a sequence consisting of an operation of type~(i) followed by one of
type~(ii) and a final tracing out of $B$, which is an operation of type~(iii).

We define for later convenience some useful quantities: For any system $Y$ (in
particular $Y=X,B$), let $\Gamma_Y = \ee^{-\beta H_Y}$ and
$Z_Y = \tr(\Gamma_Y)$.  The thermal state of $Y$ is $\gamma_Y = \Gamma_Y/Z_Y$,
and its equilibrium free energy is $F_Y = -\beta^{-1}\log(Z_Y)$.

To implement a map $\mathcal{E}_{X\to X}$ that is not a thermal operation, we
need to invest \emph{thermodynamic work}.  We imagine a \emph{battery system}
$W$ whose state we ensure remains among a family of \emph{battery states}
$\tau^{(E)}_W$.
A thermal operation that operates on a system $S$ and a battery $W$ (using a
bath $B$) can effect transitions on $W$ between different states of the form
$\tau^{(E)}_W$, which is then interpreted as charging or depleting the battery
(\cref{fig:TOWork}).
\begin{figure}
  \centering
  \includegraphics{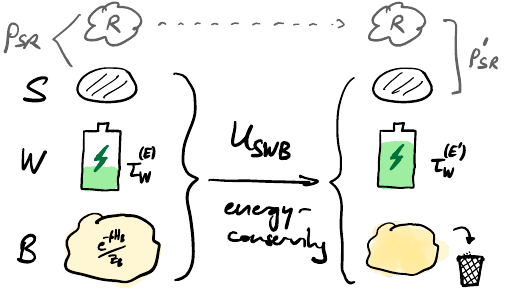}
  \caption{Thermal operations with a battery can be used to implement an
    operation on the system $S$ while consuming work; they involve a global
    energy-conserving unitary $U_{SWB}$ operating on $S$, a battery $W$ with a
    family of charge states $`{\tau_W^{(E)}}_E$, and a heat bath $B$.  The
    quantum channel induced on $S$ by the thermal operation is revealed in the
    correlations between $S$ and $R$ in the output state $\rho'_{SR}$, for
    entangled inputs $\rho_{SR}$.  The battery's output charge state is
    guaranteed to be at least $E'$ if its state passes a hypothetical test
    represented by a POVM effect $\Pi_W^{\geq E'}$; the battery's output state
    can then also be assumed uncorrelated with the system by explicitly
    thermalizing it within $\Pi_W^{\geq E'}$'s support.  The bath is eventually
    traced out.  Multiple heat baths can be combined into a single large heat
    bath.  Similarly, multiple batteries can equivalently be combined into a
    single one, because the battery states are reversibly transformable with
    thermal operations; therefore, a sequence of thermal operations operating on
    different batteries is equivalent to a single operation acting on a single
    battery using the sum of the work costs.  Pure time-covariant ancillary
    systems can be borrowed at the cost of transforming a suitable bath system
    from its thermal state to the desired state; such an ancilla may be used,
    for instance, to perform for free another thermal operation conditionally on
    a measurement outcome.}
  \label{fig:TOWork}
\end{figure}
Many different models of battery states are
equivalent~\cite{Horodecki2013_ThermoMaj,Brandao2015PNAS_secondlaws,%
  Faist2015NatComm,PhDPhF2016}, and we choose the following family for
convenience.  We ensure $W$'s dimension is sufficiently large with the trivial
Hamiltonian $H_W = 0$; we take $\tau_W^{(E)}$ to be \emph{information battery
  states}, which have uniform spectrum over a rank
$r^{(E)} := e^{\beta(E - E_{\mathrm{batt.ref.}})}$:
\begin{align}
  \tau_W^{(E)}
  := \frac1{r^{(E)}} \sum_{j=1}^{r^{(E)}} \proj{j}_W\ ,
  \label{eq:InformationBatteryStates}
\end{align}
where we choose $E_{\mathrm{batt.ref.}}$ sufficiently negative that there is a
battery state $\tau_W^{(E)}$ for a sufficiently dense set of values of $E$ to
approximate the energy values relevant to our problem.
The states $\tau_W^{(E)}$ have the following property: If $A$ is any system with
at least two energy levels $0,E$, then the transformations
$\proj{0}_A\otimes\tau^{(E)} \to \ket{E}_A\otimes\tau^{(0)}$ and
$\proj{E}_A\otimes\tau^{(0)} \to \ket{0}_A\otimes\tau^{(E)}$ are both possible
with thermal operations, to arbitrary accuracy as
$E_{\mathrm{batt.ref.}}\to-\infty$, $\abs{B}\to\infty$.  For this reason, the
battery $W$ occupying the state $\tau^{(E)}_W$ can be thought of as carrying an
energy charge $E$.
We can \emph{test the battery charge} using the following projector.
Let $\Pi_W^{\geq E}$ be the following rank-$e^{\beta(E-E_{\mathrm{batt.ref.}})}$
projector:
\begin{align}
  \Pi_W^{\geq E}
  := \sum_{j=1}^{r^{(E)}} \proj{j}_W\ .
\end{align}
Then $\tr`\big[ \tau^{(E')}_W \Pi_W^{\geq E} ] = 1$ if and only if $E' \geq E$.
If, after a measurement $`\big{ \Pi_W^{\geq E}, \Ident_W - \Pi_W^{\geq E} }$ of
the battery, we get the outcome $\Pi_W^{\geq E}$, then the battery has charge at
least $E$ (relative to the reference level $E_{\mathrm{batt.ref.}}$).  In this
case, we can deterministically ensure $W$ contains exactly $\tau_W^{(E)}$ by
thermalizing $W$ on $\Pi_W^{\geq E}$'s support using a free operation.

The work cost of converting $\sigma_X\to \rho_X$, when these states both commute
with $H_X$, can be computed using the notion of
\emph{thermomajorization}~\cite{Horodecki2013_ThermoMaj}.  A case if interest is
when $\sigma_X=\proj{E}_X$ is a pure energy eigenstate and when
$\rho_X = \gamma_X$, is the thermal state of $X$, with
$\gamma_X = \ee^{-\beta H_X}/Z_X$ and $Z_X = \tr`(\ee^{-\beta H_X})$; then the
amount of work $E + \beta^{-1}\log Z_X$ can be extracted in this transformation
(to arbitrary precision for a large battery).  Conversely, the same amount of
work needs to be expended to carry out the reverse operation,
$\gamma_X \to \ket{E}_X$.  Energy eigenstates are therefore mutually reversibly
interconvertible and reversibly interconvertible to the thermal state.  (This
property is shared more generally by all states of the form
$\rho \propto \Pi \gamma_X \Pi$, where $\Pi$ is a projector that commutes with
$\gamma_X$.)

Let $\sigma_{X}$ be any quantum state and let $\epsilon\geq 0$.  An
\emph{implementation of $\mathcal{E}^{\otimes n}$ for $\sigma_X$} is a thermal
operation acting on $X^n$ and a battery $W$ initialized in some suitable battery
state $\tau_W^{(E_0)}$, such that when given as input
$\ket\sigma_{XR}^{\otimes n}$, the implementation produces a state
$\epsilon$-close in fidelity to
$\mathcal{E}^{\otimes n}(\sigma_{XR}^{\otimes n})$.
(See \cref{fig:ImplProcUniv}.)
\begin{figure}
  \centering
  \includegraphics{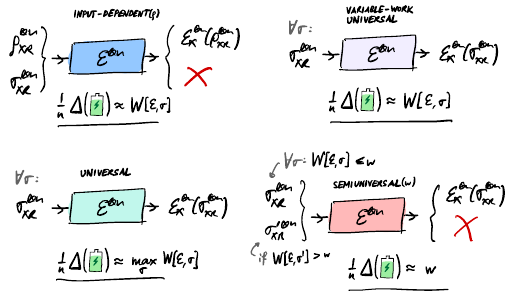}
  \caption{Types of thermodynamic implementations of an i.i.d.\@ quantum process
    $\mathcal{E}$.  An implementation tailored for $\rho_{XR}$ fails if any
    other state is used as an input.  A \emph{universal} implementation produces
    the correct output for all input states, using a deterministic amount of
    work per copy.  A \emph{variable-work universal} implementation produces the
    correct output (up to a dephasing) for all input states, using an amount of
    work that varies depending on the actual input.  A \emph{semiuniversal}
    implementation for work $w$ behaves like a universal implementation, but it
    only produces the correct output if the work required to implement the
    process for the given input state does not exceed $w$.}
  \label{fig:ImplProcUniv}
\end{figure}
The work cost $w$ of the implementation is determined by how much charge is left
in the battery $W$, which can be tested by ensuring $W$'s new state passes the
measurement $\Pi_W^{(\geq E_0-w)}$.
Asymptotically for large $n$, the optimal deterministic work cost of an
implementation of $\mathcal{E}^{\otimes n}$ for $\sigma_X$ is an amount
$W[\mathcal{E}_X;\sigma_X]$ of work per copy given by~\cite{Faist2015NatComm,%
  Faist2018PRX_workcost}
\begin{align}
  W[\mathcal{E}_X;\sigma_X]
  = \beta^{-1} `\big[
  \DD{\mathcal{E}_X`(\sigma_X)}{ \gamma_X }
  - \DD{\sigma_X}{\gamma_X} ]\,.
\end{align}

We may instead demand the implementation perform accurately for multiple
possible input states.
Let $\mathcal{S}_{X^n}$ be a set of states on $X^n$.  An implementation of
$\mathcal{E}^{\otimes n}$ is \emph{universal with respect to
  $\mathcal{S}_{X^n}$} if, for any $\ket\sigma_{X^nR}$ with $\sigma_{X^n}$ in
the convex hull of $\mathcal{S}_{X^n}$, the implementation approximates the
output state $\mathcal{E}^{\otimes n}(\sigma_{X^nR})$.

When $\mathcal{S}_{X^n}$ is the set of all i.i.d.\@ input states (and if the
implementation is permutation-invariant over the $n$ copies of the system), the
condition becomes equivalent to demanding proximity in diamond norm between the
implementation's effective channel and the target channel
$\mathcal{E}^{\otimes n}_{X\to X}$, up to a $\poly(n)$ factor in the error
parameter~\cite{Christandl2009PRL_Postselection,%
  Faist2019PRL_thcapacity,Faist2021CMP_impl}.  Then the implementation is
universal with respect to the set of all quantum states on $X^n$.
In this case, the optimal deterministic work cost counted per copy in the
asymptotic limit $n\to\infty$ is the \emph{thermodynamic
  capacity}~\cite{Faist2019PRL_thcapacity,Faist2021CMP_impl,%
  Gour2021PRR_entropychannel},
\begin{align}
  T(\mathcal{E})
  &= \max_{\sigma}  \ W[\mathcal{E};\sigma]
  \nonumber\\
  &= \max_{\sigma}\,
  \beta^{-1} `\big[ \DD{\mathcal{E}(\sigma_X)}{ \ee^{-\beta H_X} }
    - \DD{\sigma_X}{\ee^{-\beta H_X}}
  ]
    \, .
\end{align}

An intermediate situation, somewhere between an implementation being specific to
a state $\sigma_X$ and it being universal with respect to all input states, will
appear in our proof.  
More precisely, we call a \emph{semiuniversal implementation of a process at
  work cost $w$} one for which the work cost is deterministically $w$ per copy,
asymptotically for $n\to\infty$, and which is universal with respect to the set
of all input states $\sigma^{\otimes n}$ for which $W[\mathcal{E};\sigma]$ does
not exceed $w$.
To shape the reader's intuition, a similar situation appears for data
compression of a quantum state.  A state $\rho^{\otimes n}$ can be compressed
onto $nS`(\rho) + O`(\sqrt{n})$ qubits, as known from quantum Shannon theory.
There cannot be any useful universal compression protocol for arbitrary
$\rho^{\otimes n}$, since such a protocol must accommodate the maximally mixed
state which cannot be compressed.  Yet if we are promised that the state's
entropy does not exceed some threshold, $S`(\rho)\leq m$, then we can use a
single compression protocol to compress any state below this threshold onto
$nm+O`(\sqrt{n})$ qubits, using universal variants of typical
projectors~\cite{Jozsa1998PRL81,Jozsa2003}.  Such a process is a data
compression analog to the semiuniversal processes considered here.

We now formalize the concept of a \emph{variable-work} implementation of
$\mathcal{E}^{\otimes n}$, by extending the concept of deterministic work usage
from the battery $W$.
A thermal operation operating with a battery $W$ initialized in $\tau_W^{(E_0)}$
is said to use an \emph{$\epsilon$-deterministic amount of work $w$} if its
output state on $W$ has overlap at least $1-\epsilon$ with
$\Pi_W^{\geq(E_0 - w)}$.  The protocol can be seen as ``depleting'' the battery
by an amount $w$.  If we hypothetically were to test the battery's final charge
to be at least $E_0-w$, our test would be guaranteed to succeed with high
probability.  The final state on $W$ can be forced to be equal to
$\tau_W^{(E_0-w)}$ by applying a free thermalization operation on the support of
$\Pi_W^{\geq(E_0-w)}$.
In contrast, a process that uses a \emph{variable amount of work} is allowed to
deplete the battery at different rates depending on the input state.  In such
cases, the implementation guarantees that the test $\Pi_W^{\geq(E_0-w_\sigma)}$
passes with high probability, where $w_\sigma$ may now depend on the input state
$\sigma$.
In all cases, this test remains hypothetical and we never actually apply this
projector in the implementation itself.  Rather, the implementation guarantees
that if the user decided to test the battery's final charge, the test would pass
with high probability.  This guarantee is sufficient for any subsequent protocol
using the same battery to assume that its charge state is at least
$E_0-w_\sigma$, up to an error probability $\epsilon$; indeed, a subsequent
protocol's failure because of insufficient battery charge can be viewed as a
test of the battery charge; such a test cannot find that the battery's charge is
less than $E_0-w$ any better than the measurement $\Pi_W^{\geq(E_0-w)}$.

The above framework has the following useful features.  First, a sequence of
thermal operations is again a thermal operation, because the bath systems can be
combined into one large bath and the separate unitaries merged into one large
energy-conserving unitary.
Similarly, multiple batteries can be combined into a single sufficiently large
battery; the total battery charge difference is the sum of the individual
battery charge differences.
Then we can carry out time-covariant projective measurements and subsequent
conditional energy-conserving unitaries by borrowing a thermal ancilla,
resetting it to a pure state using work, performing an energy-conserving
interaction to carry out the desired measurement.

We assume our battery system to be sufficiently large that it can provide or
store any amount of energy within the range of possible work requirements of the
process, to high accuracy.  Other, more rudimentary battery models can only be
charged by one specific amount of energy fixed in advance.  Such is the case,
for instance, of a \emph{work bit} used in Ref.~\cite{Horodecki2013_ThermoMaj}
consisting of two energy levels.  Such batteries suffice for a scenario like a
state transformation, where the amount of work is deterministic and can be
computed in advance.  However, in a variable-work implementation of a process,
the work requirement is by definition not known in advance; we provision a
sufficiently large battery to accommodate any amount of work requirement that
can arise in the process.

While we choose an information battery with
states~\eqref{eq:InformationBatteryStates} for our implementation, any other
standard type of battery that can accommodate the necessary possible charge
differences can be used instead.  An alternative model, for instance, is a
system with evenly-spaced energy levels covering a large range of energy values,
in the regime of very small energy spacing.  The battery states are the energy
eigenstates, and the energy labels the battery charge.  This type of battery is
often referred to as a ``weight,'' akin to a physical weight one can raise or
lower to store or consume energy.  To construct a protocol with this type of
battery, we bring in a sufficiently large thermal heat bath ancilla with a
trivial Hamiltonian (to be used as an information battery), use a partial
initial charge from the weight battery to charge the information battery, run
the process with the information battery, and restore any remaining information
battery charge to the weight energy battery.  While some initial amount of
energy charge in the energy battery might be required, this can typically be
attributed to an arbitrary setting of the zero reference value for energy,
especially in the presence of the sophisticated control systems that would be
necessary to implement these operations in the first place.

\section{Warm-up: Universal work extraction with optimal per-input work cost}
\label{sec:WarmupWorkExtraction}

In this section, we illustrate how a semiuniversal protocol for work extraction
can be turned into an variable work cost protocol with optimal per-input work
cost, and explain how results in Ref.~\cite{arXiv:1911.05563} can be pieced
together to obtain this result.

First of all, observe that work extraction corresponds to implementing the
completely thermalizing map $X\to X$, i.e.,
$\mathcal{E}_{X\to X}(\cdot) = \tr_X(\cdot)\,\gamma_X$ with
$\gamma_X = \ee^{-\beta H_X}/\tr`(\ee^{-\beta H_X})$.  Indeed, any work
extraction protocol can be followed by a free thermalization map to reach
$\gamma_X$ and therefore implement $\mathcal{E}$ accurately.  (We discuss this
point in deeper detail below, in \cref{sec:ConditionalErasureVariable}.)

If the protocol is tailored to a specific i.i.d.\@ input state
$\sigma_X^{\otimes n}$, then the optimal amount of work that can be extracted
per copy is, asymptotically,
$F_X(\sigma_X) := \beta^{-1}\DD{\sigma_X}{\ee^{-\beta H_X}}$%
~\cite{Horodecki2013_ThermoMaj}.  We seek to construct, as in
Ref.~\cite{Watanabe2025arXiv_universal}, a \emph{universal} protocol that always
achieves the variable extracted amount of work $F_X(\sigma_X)$ per copy, for
each individual $\sigma_X^{\otimes n}$.

We first consider a \emph{semiuniversal} protocol for work extraction.  Namely,
given some value $w$, we design a protocol with the property that it always
deterministically extracts $w$ work, but subject to the promise that the input
state obeys $F_X(\sigma_X) \geq w$.  The protocol can behave arbitrarily (it may
even mess up the battery state) if it is applied onto a state
$\sigma_X^{\otimes n}$ that fails this input condition.  Such a protocol is
immediately given by \cref{thm:UniversalTheoremCovariantInputSet} below in
\cref{appx:UniversalWithSetOfInputs} (use
$\mathcal{E}_{X\to X}(\cdot)=\tr_X`(\cdot)\,\gamma_X$), which is itself a
straightforward adaptation of Theorem~7.1 of Ref.~\cite{arXiv:1911.05563}.  More
specifically, such a semiuniversal protocol is given for example by
Proposition~7.1 of Ref.~\cite{arXiv:1911.05563}, using $S = X^n$ and with $M$
being a trivial system which we ignore entirely, and taking
$P_{SM} \equiv P_{X^n}$ to be the product of a Schur-Weyl measurement (which
estimates the input state's entropy) and a global energy measurement (which
estimates the input state's energy), summed over outcomes which imply the input
free energy is at least $w$.  Alternatively, this projector is directly given by
Proposition~6.1 of Ref.~\cite{arXiv:1911.05563} where we take $B$ to be a
trivial system.

We construct a universal, optimal variable-work-cost protocol for work
extraction as follows.  First, estimate the input state's free energy, by
combining a Schur-Weyl block measurement along with a global energy measurement.
Then, apply the corresponding semiuniversal work extraction protocol for the
obtained free energy estimate.  (More specifically, if the free energy estimate
is $w'$, we apply the semiuniversal protocol for $w'+\delta$ for arbitrarily
small but constant $\delta$, to ensure that the failure probability remains
exponentially suppressed in $n$.  At the end we take $\delta\to0$.)  The
measurement can be integrated within a global thermal operation describing the
entire process, leaving a memory register storing the estimated free energy
value.  Because there are at most $\poly(n)$ Schur-Weyl blocks and energy
outcomes, resetting this memory contributes a vanishing asymptotic work cost per
copy.  This protocol achieves the desired behavior at the requested work cost.

The proof of our main results (see below) follow a very similar strategy.

\section{Conditional thermodynamic erasure with variable work;
  equivalence with work extraction}
\label{sec:ConditionalErasureVariable}

We first turn our attention to the task of resetting a quantum system when we
have access to a quantum memory~\cite{delRio2011Nature,Faist2015NatComm,%
  Faist2018PRX_workcost,Faist2021CMP_impl} (\cref{fig:CondReset}).
\begin{figure}
  \centering
  \includegraphics{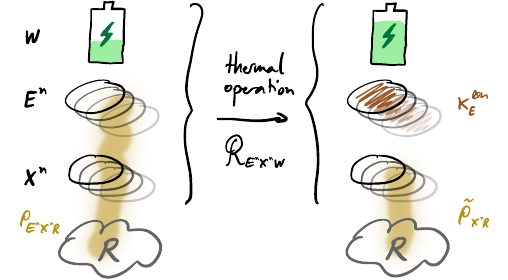}
  \caption{Implementation of a universal, variable-work conditional reset
    operation of systems $E^n$ conditioned on $X^n$ (Main Result; version for
    conditional reset).  The systems $E^n$ and $X^n$ start in some joint pure
    state $\ket\rho_{E^nX^nR}$ with a reference system $R$.  The reset operation
    should reset each copy of $E$ to the fixed state $\kappa_E$ while preserving
    the reduced state $\rho_{X^nR}$.  Here we suppose $\kappa_E$ is the thermal
    state on $E$; a similar argument holds if $\kappa_E$ is any energy
    eigenstate.  (If $X^n$ is trivial, this task is equivalent to work
    extraction.)  Our variable-work implementation $\mathcal{R}_{E^nX^nW}$
    consumes an amount of work $w$, using a battery $W$, which is asymptotically
    optimal for each i.i.d.\@ input state $\rho_{EX}^{\otimes n}$.  The output
    state $\tilde\rho_{X^nR}$ on $X^nR$ differs from $\rho_{X^nR}$ by a
    dephasing operation due to the fact that the process interacts differently
    with the environment for different amounts of consumed work (which is
    revealed in the battery's output state).  If $\rho_{E^nX^n}$ is
    time-covariant and permutation-invariant, we can show that, in fact,
    $\tilde\rho_{X^n} = \rho_{X^n}$.}
  \label{fig:CondReset}
\end{figure}
Consider two quantum systems $E$ and $X$ in a state $\rho_{EX}$.  The state is
purified in a reference system $R$ as $\ket\rho_{EXR}$.  We fix a reference
state $\kappa_E$.  Our task is to reset the $E$ system to the $\kappa_E$ state,
using the least amount of thermodynamic work necessary, while ensuring that the
reduced state on $\rho_{XR}$ remains unchanged.  We may act on systems $E$ and
$X$, but we are not allowed to touch the $R$ system.  Specifically, we must find
a thermal operation acting on $EX$ along with a battery $W$ such that, when
given the input state $\rho_{EXR}$, the output is $\epsilon$-close to
$\kappa_E\otimes\rho_{XR}$.
The reason we choose the system names $E$, $X$ will become apparent when we use
our results in the next section to implement a general time-covariant process.

The presence of the $X$ system can help us achieve a lower work cost that if we
could only act on $E$, since $X$ might carry precious information about $E$'s
state through correlations between $X$ and $E$~\cite{delRio2011Nature}.  For
example, consider the state $\rho_{EX} = [\proj{00}_{EX}+\proj{11}_{EX}]/2$ on
two qubits $E$ and $X$ with trivial Hamiltonians $H_E=H_X=0$.  Here, $E$ can be
conditionally reset to $\ket0_E$ at no work cost, by applying a CNOT operation
between $X$ (the control) and $E$ (the target), and dephasing $X$ in the
computational basis.  Note this protocol also preserves the reduced state
$\rho_{XR} = [\proj{00}_{XR} + \proj{11}_{XR}]/2$, using the purification
$\ket\rho_{EXR}=[\ket{000}+\ket{111}]/\sqrt2$.  Yet a protocol operating on $E$
alone would have to map $\Ident_E/2$ to $\ket0_E$, which requires
$\beta^{-1}\log(2)$ work.
 
Multiple tasks can be formulated in terms of this setting of conditional erasure
to $\kappa_E$:
\begin{itemize}
\item If $\kappa_E = \proj{0}_E$, where $\ket{0}_E$ is some pure energy
  eigenstate, conventionally of zero energy, the task corresponds to a
  conditional erasure to $\ket0_E$.  This task is naturally referred to as
  \emph{conditional erasure} or \emph{conditional
    reset}~\cite{delRio2011Nature}.
  In the case of a trivial memory system $X$, this is the setting of Landauer's
  original \emph{erasure} task~\cite{Landauer1961_5392446Erasure}.
\item If $\kappa_E = \gamma_E$, then the task is to thermalize $E$ while
  extracting as much work as possible.  Here, correlations with $X$ may be
  exploited for work gain.  This task can be thought of as a \emph{conditional
    thermalization}.  For a trivial memory system $X$, this task is simply a
  \emph{thermalization of $E$}: it replaces $E$'s state with its thermal state.
\item The task of \emph{work extraction} from a state $\rho_E$ is to extract as
  much work as possible from $\rho_E$, with no regard as to what the final state
  on $E$ is.  Without loss of generality, the final state is $\gamma_E$, since
  any other state can be thermalized to $\gamma_E$ with a free thermal
  operation.  Therefore, the task of \emph{work extraction} naturally implements
  a \emph{thermalization of $E$} (with trivial $X$).  Conversely, any
  thermalization of $E$ can be viewed as a work extraction process, yielding
  some amount of work from $\rho_E$.  Therefore, the tasks of
  \emph{thermalization of $E$} and \emph{work extraction from $\rho_E$} are
  equivalent.
\end{itemize}

All the tasks above can be reduced to the convenient case where
$\kappa_E = \gamma_E$.  This holds because $\gamma_E$ is reversibly
interconvertible to any energy eigenstate
$\ket{E}_E$~\cite{Brandao2013_resource,Horodecki2013_ThermoMaj,%
  Aberg2013_worklike,Brandao2015PNAS_secondlaws}: The deterministic,
single-instance work cost of $\gamma_E\to\ket{E}_E$ is $nF_E- E$; whereas the
reverse process can yield all this work back, at a work cost $E - nF_E$.
So any optimal protocol for conditional erasure to $\ket{E}_E$ automatically
yields an optimal protocol for conditional erasure to $\gamma_E$, because we can
append a reversible protocol to convert $\ket{E}_E \to \gamma_E$; and the same
holds conversely if we are given an optimal protocol for conditional erasure to
$\gamma_E$.
Henceforth, we set $\kappa_E = \gamma_E$.

The task of conditional erasure to $\gamma_E$ for $\ket\rho_{EXR}$ means
implementing the map
\begin{align}
  \mathcal{E}_{\mathrm{cond.reset}\,EX}(\cdot) = \gamma_E \tr_E(\cdot) \ ,
\end{align}
for the input state $\rho_{EX}$.  A \emph{universal} implementation of
$\mathcal{E}_{\mathrm{cond.reset}\,EX}^{\otimes n}$ for a set of states
$\mathscr{S}_{E^nX^n}$ ensures that for any $\ket\rho_{E^nX^nR}$ with
$\rho_{E^nX^n}$ in the convex hull of $\mathscr{S}_{E^nX^n}$, the state of $E$
is reset to $\gamma_E^{\otimes n}$ while preserving the reduced state
$\rho_{X^nR}$.

Here, we propose a universal, variable-work-cost implementation of conditional
erasure to $\gamma_E$ that uses an optimal amount of work for each i.i.d.\@
input state.

The idea for our conditional erasure implementation is that the process first
carries out a global measurement that estimates how much work $w$ we anticipate
requiring to implement the conditional erasure.  Depending on the outcome $w$,
we apply a corresponding semiuniversal protocol, namely one that can
successfully erase any state $\rho_{EX}$ whose asymptotic erasure work cost
$W[\tr_E;\rho_{EX}]$ is up to at most $w$.

To construct this measurement more precisely, we need some technical
definitions.  We follow closely the notation and conventions used in
Ref.~\cite{arXiv:1911.05563}, cf.\@ also
Refs.~\cite{PhDHarrow2005,Haah2017IEEETIT_sampleoptimal}.
Let $\Pi_{X^n}^\lambda$ be the projector onto the Schur-Weyl block of $X^n$
labeled by the Young diagram $\lambda$.
If we normalize $\lambda$ into a
probability distribution, $\bar\lambda := \lambda/n$, we denote its Shannon
entropy as $S(\bar\lambda) = -\sum (\lambda_i/n)\log(\lambda_i/n)$.
Let $`{ S_{X^n}^{E'_\ell} }$ and $`\big{ R_{E^nX^n}^{E_k} }$ be the projectors
onto the global eigenspaces of $H_{X^n} := \sum_{i=1}^n H_{X_i}$ and
$H_{E^nX^n} := \sum_{i=1}^n (H_{E_i} + H_{X_i})$, where $E_k$ and $E'_\ell$
label the corresponding eigenvalues of $H_{X^n}$ and $H_{E^nX^n}$, respectively.
Each $R_{E^nX^n}^{E_k}$ and $S_{X^n}^{E'_\ell}$ commutes with $\Pi_{X^n}^\mu$
and $\Pi_{(EX)^n}^\lambda$ as it is permutation-invariant.  Each set
$`{ R_{E^nX^n}^{E_k} }$, $`{ S_{X^n}^{E'_\ell} }$, $`{ \Pi_{(EX)^n}^\lambda }$,
$`{ \Pi_{X^n}^\mu }$ forms a POVM with at most $\poly(n)$ outcomes; any pairs of
effects from these sets commute and furthermore commute with $H_{X^n}$ and
$H_{E^nX^n}$.

Let
\begin{align}
  P^{(w)}_{E^nX^n}
  := 
  \hspace*{-1.5em}
  \sum_{\substack{k,\ell,\lambda,\mu\ :\\
  E_k - \beta^{-1}S(\lambda/n) - E_\ell + \beta^{-1}S(\mu/n) \;=\; w}}
  \hspace*{-3em}
  S_{X^n}^{E_\ell} \Pi^{\mu}_{X^n} \Pi^{\lambda}_{(EX)^n} R_{E^nX^n}^{E_k}\ ,
  \label{eq:WorkCostMeasEnvXpP}
\end{align}
where $w$ ranges over a $\poly(n)$-sized set $\mathscr{W}$ that contains all
possible real values that can be attained by different choices of
$k,\ell,\lambda,\mu$.  The operators $`{ P^{(w)}_{E^nX^n} }_{w\in\mathscr{W}}$
form a POVM.

Measuring the POVM $`{ P^{(w)}_{E^nX^n} }_{w\in\mathscr{W}}$ induces a
decoherence in the global $n$-copy systems along subspaces with different
``information about how much work is required for erasure.''  This dephasing
process is:
\begin{align}
  \mathcal{D}^{\textup{W}}_{E^nX^n}(\cdot)
  = \sum_{w\in\mathscr{W}} P_{E^nX^n}^{(w)}\,`(\cdot)\, P_{E^nX^n}^{(w)}\ .
  \label{eq:WorkCostDephasingMapEX}
\end{align}
Therefore, the variable-work-cost implementation presented here solves the task
of conditional erasure, \emph{up to the dephasing $\mathcal{D}^{\textup{W}}$}
that corresponds to knowing how much work was actually expended in the process.
We discuss the meaning of this dephasing operation in the following sections.

\begin{theorem}[\textbf{Main Result; Conditional Erasure}]
  \label{thm:UnivCondErasureForEachIidInput}
  Let $E,X$ be quantum systems with Hamiltonians $H_E$, $H_X$, and let
  $\Gamma_X = e^{-\beta H_X}$, $\Gamma_E = e^{-\beta H_E}$,
  $\Gamma_{XE} = \Gamma_X\otimes\Gamma_E$ and $\gamma_E=e^{-\beta H_E}$.  Let
  $\delta>0$.  Then there exists a thermal operation $\mathcal{R}_{E^nX^nW}$
  acting on a battery $W$ and there exists $\eta>0$ such that:
  \begin{enumerate}[label=(\roman*)]
  \item\label{item:thmUnivCondErasureForEachIidInputOutputFidelity}
    For any arbitrary state $\rho_{E^nX^n}$, with any purification
    $\ket\rho_{E^nX^nR}$,
    \begin{multline}
      F`\Big( 
      \tr_W`\big[ \mathcal{R}_{E^nX^nW}`\big( \rho_{E^nX^nR} \otimes \tau^{(E_0)}_W ) ]
      ,
      \gamma_E^{\otimes n} \otimes \tilde\rho_{X^nR}
      )
      \\[1ex]
      \geq 1 - \poly(n)\,e^{-n\eta}\ ,
      \label{eq:UnivCondErasureForEachIidInputCondCorrect}
    \end{multline}
    where
    $\tilde\rho_{E^nX^nR} = \mathcal{D}^{\textup{W}}_{E^nX^n}[\rho_{E^nX^nR}]$
    is the state $\rho_{E^nX^nR}$ that underwent a dephasing in the subspaces
    that correspond to different work cost value requirement.

    Furthermore, for any time-covariant $\rho_{EX}$, we have
    $\mathcal{D}^{\textup{W}}_{E^nX^n}[\rho_{EX}^{\otimes n}] = \rho_X^{\otimes
      n}$;
  \item \label{item:thmUnivCondErasureForEachIidInputWorkCostIidInput}
    For any state $\rho_{EX}$, we have
    \begin{multline}
      \tr`\Big[ \Pi_W^{\geq (E_0 - w_\rho)}
      \mathcal{R}_{E^nX^nW}`\big( \rho_{EX}^{\otimes n} \otimes \tau^{(E_0)}_W ) ]
      \\[1ex]
      \geq 1 - \poly(n)\,e^{-n\eta/2}\ ,
      \label{eq:UnivCondErasureForEachIidInputCondWCost}
    \end{multline}
    where
    \begin{align}
      w_\rho &=
      n\bigl[ \beta^{-1} D(\rho_X\Vert\Gamma_X)
      - \beta^{-1}D(\rho_{EX}\Vert\Gamma_{EX}) 
               \nonumber\\[1ex]
      &\qquad\ + F_E + 7\beta^{-1}\delta\bigr] + \log\poly(n)
      \ .
      \label{eq:UnivCondErasureForEachIidInputWCostValue}
    \end{align}
  \item \label{item:thmUnivCondErasureForEachIidInputWorkCostAnyInput}
    For any arbitrary state $\rho_{E^nX^n}$, and for any $w'$, we have
    \begin{multline}
      \tr`\Big[ \Pi_W^{\geq (E_0-w'')}
      \mathcal{R}_{E^nX^nW}`\big( \rho_{E^nX^n} \otimes \tau^{(E_0)}_W ) ]
      \\[1ex]
      \geq \sum_{w\leq w'} \tr`\Big[ P^{(w)}_{E^nX^n} \rho_{E^nX^n} ]
      - \poly(n)\,\ee^{-n\eta/2}\ ,
      \label{eq:UnivCondErasureForEachIidInputCondWChargeArbitraryInput}
    \end{multline}
    where
    $
      w'' = n`\big(w' + 6\beta^{-1}\delta + F_E) + \log\poly(n)
    $.
  \end{enumerate}
\end{theorem}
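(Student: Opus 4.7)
The plan is to construct $\mathcal{R}_{E^nX^nW}$ as a measure-and-apply protocol, mirroring the warm-up of \cref{sec:WarmupWorkExtraction} but lifted to conditional erasure. I would first coherently implement the POVM $\{P^{(w)}_{E^nX^n}\}_{w\in\mathscr{W}}$ on $E^nX^n$, recording the outcome on an ancillary classical register; then, controlled on the outcome $w$, invoke the semiuniversal conditional erasure protocol of Ref.~\cite{Faist2021CMP_impl} (adapted as \cref{thm:UniversalTheoremCovariantInputSet} with $\mathcal{E}_{E\to E}(\cdot)=\gamma_E\tr_E(\cdot)$) tuned to threshold $w+c\delta$ for a small fixed safety margin $c\delta$. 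Since $\lvert\mathscr{W}\rvert\leq \poly(n)$, the classical register and any classically-controlled overhead cost at most $\log\poly(n)$ work, which is sub-$O(\sqrt n)$ per copy and hence asymptotically free. The whole procedure can be realized as a single thermal operation because every operator appearing in $P^{(w)}_{E^nX^n}$ commutes pairwise with the others and with $H_{E^nX^n}$, so the POVM admits a time-covariant Naimark dilation using a borrowed pure ancilla via the mechanism described in \cref{sec:Setup}.

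For property~\ref{item:thmUnivCondErasureForEachIidInputOutputFidelity}, conditioning on outcome $w$ leaves the joint state in the support of $P^{(w)}_{E^nX^n}$, where the semiuniversal step correctly erases $E^n$ to $\gamma_E^{\otimes n}$ while preserving the reduced state on $X^nR$ with error $\poly(n)e^{-n\eta}$, inherited from the semiuniversal protocol's own error bound. Summing over outcomes realizes exactly the dephasing channel $\mathcal{D}^{\textup{W}}_{E^nX^n}$ on $E^nX^nR$ prior to the erasure, matching~\eqref{eq:UnivCondErasureForEachIidInputCondCorrect}. For any time-covariant $\rho_{EX}$, the i.i.d.\ state $\rho_{EX}^{\otimes n}$ commutes with every $R^{E_k}_{E^nX^n}$ and $S^{E_\ell}_{X^n}$ by time-covariance, and with every Schur--Weyl projector $\Pi^\lambda$ by permutation invariance; since the commuting effects composing $P^{(w)}$ all preserve the state, $\mathcal{D}^{\textup{W}}$ acts trivially, so its output on $X^n$ is indeed $\rho_X^{\otimes n}$.

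Properties~\ref{item:thmUnivCondErasureForEachIidInputWorkCostIidInput} and~\ref{item:thmUnivCondErasureForEachIidInputWorkCostAnyInput} then follow from typicality. For~\ref{item:thmUnivCondErasureForEachIidInputWorkCostIidInput}, the joint Schur--Weyl and energy concentration estimates already assembled in Ref.~\cite{Faist2021CMP_impl} show that on an i.i.d.\ input $\rho_{EX}^{\otimes n}$ the measured $w$ is within $O(\delta)$ of $\beta^{-1}D(\rho_X\Vert\Gamma_X)-\beta^{-1}D(\rho_{EX}\Vert\Gamma_{EX})$ except with probability $\poly(n)e^{-n\eta/2}$; combined with the per-branch depletion $nw+\log\poly(n)$ of the semiuniversal step and the reversible interconversion between $\gamma_E^{\otimes n}$ and its energy-eigenbasis representatives (costing $nF_E$), this yields the bound~\eqref{eq:UnivCondErasureForEachIidInputWCostValue}. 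Property~\ref{item:thmUnivCondErasureForEachIidInputWorkCostAnyInput} is essentially by construction: the probability the POVM returns some $w\leq w'$ is $\sum_{w\leq w'}\tr[P^{(w)}_{E^nX^n}\rho_{E^nX^n}]$, and on those branches the semiuniversal step depletes the battery by at most the stated $w''$, up to its own exponentially small failure probability.

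The main obstacle, where I expect most of the technical work to concentrate, is threefold: (a) checking that the coherent POVM dilation and the classical-control step can be packaged entirely within thermal operations without smuggling in hidden work costs beyond $\log\poly(n)$, which relies crucially on each $P^{(w)}$ being time-covariant; (b) verifying that the effective channel on $E^nX^nR$, as opposed to on $E^nX^n$ alone, is exactly $\mathcal{D}^{\textup{W}}_{E^nX^n}$ followed by correct erasure, which requires confirming that the reference $R$ remains untouched throughout the Naimark dilation; and (c) propagating the $\delta$-safety margin and $O(\log\poly(n))$ overheads consistently through the work bookkeeping so that the final inequalities match the explicit forms of $w_\rho$ and $w''$ in the theorem statement.
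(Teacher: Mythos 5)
Your high-level strategy — measure the POVM $\{P^{(w)}_{E^nX^n}\}$, then branch on $w$ into a semiuniversal protocol tuned to $w+c\delta$, and bound the register overhead by $\log\poly(n)$ — is precisely the strategy the paper uses.  However, there are two substantive gaps.

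First, the semiuniversal primitive you invoke is not quite strong enough.  \Cref{thm:UniversalTheoremCovariantInputSet} guarantees accuracy only for inputs $\sigma_{X^nR}$ whose marginal lies in the convex hull of $\{\sigma^{\otimes n}_X : \sigma_X\in\hat{\mathscr{S}}_X\}$.  After measuring the POVM and obtaining outcome $w$, the post-measurement state $\rho^{(w)}_{E^nX^n}\propto P^{(w)}\rho\,P^{(w)}$ is supported on $P^{(w)}$ but is generically \emph{not} a convex combination of i.i.d.\ states, so the guarantee you cite does not apply to it.  What you actually need is a semiuniversal conditional-erasure protocol which succeeds on the set
\begin{align}
  `\Big{\rho^{\otimes n}_{EX}: W[\tr_E;\rho]\leq w_0}
  \;\cup\;
  `\Big{\hat\rho_{E^nX^n}: \tr`\big[P^{(\leq w_0)}_{E^nX^n}\hat\rho_{E^nX^n}]=1},
\end{align}
i.e.\@ explicitly including any state supported on $P^{(\leq w_0)}$.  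This is precisely what \cref{thm:LemmaUnivErasureW0} provides, by construction of the universal typical conditional operator and the underlying hypothesis test: Proposition~7.1 of Ref.~\cite{Faist2021CMP_impl} only cares whether the state is caught by the projector, not whether it is i.i.d.  Without that extension, your claim that ``conditioning on outcome $w$ leaves the joint state in the support of $P^{(w)}$, where the semiuniversal step correctly erases $E^n$'' is not justified.

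Second, your argument for the ``furthermore'' part is wrong.  You claim that because $\rho_{EX}^{\otimes n}$ commutes with the constituent projectors, ``$\mathcal{D}^{\textup{W}}$ acts trivially.''  But a time-covariant $\rho_{EX}$ commutes with $H_E+H_X$, not with $H_X$ alone, so $\rho_{EX}^{\otimes n}$ generally does \emph{not} commute with $S^{E_\ell}_{X^n}$ (the projectors onto eigenspaces of $H_{X^n}$ only).  Consequently $\mathcal{D}^{\textup{W}}_{E^nX^n}(\rho_{EX}^{\otimes n})\neq\rho_{EX}^{\otimes n}$ in general.  The correct statement is only at the level of the reduced state, $\tr_{E^n}`[\mathcal{D}^{\textup{W}}(\rho_{EX}^{\otimes n})]=\rho_X^{\otimes n}$, and the proof (\cref{thm:WrkCostDephasBipartPINoRef}) hinges on the fact that $\tr_{E^n}(\cdot)$ restores both permutation invariance over $X^n$ and time covariance with respect to $H_{X^n}$, which renders the $\Pi^\mu_{X^n}$ and $S^{E_\ell}_{X^n}$ projectors redundant \emph{after} the partial trace.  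A pure commutation argument on the joint state does not work.

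The remaining points (i.i.d.\@ typicality for item~\ref{item:thmUnivCondErasureForEachIidInputWorkCostIidInput}, the accounting for item~\ref{item:thmUnivCondErasureForEachIidInputWorkCostAnyInput}, and the concerns you list as (a)--(c)) are consistent with the paper's proof.
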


The proof of \cref{thm:UnivCondErasureForEachIidInput} appears in
\cref{appx:UnivImplPerIidInputCondErasureProofs}.

Condition~\eqref{eq:UnivCondErasureForEachIidInputCondWCost} ensures that the
final battery charge is at least $E_0 - w_\rho$ with high probability, meaning
that the amount of expended work does not exceed $w_\rho$.

The work-cost dephasing operation is necessary in general; it can in general
change the state $\rho_{E^nX^nR}$ significantly, also for permutation-invariant
states.  (We discuss this issue in more detail below, in
\cref{sec:WorkCostCovariance}.)

A core component of our proof is a semiuniversal version of the universal
conditional erasure protocol in Ref.~\cite{arXiv:1911.05563}.  Instead of
universally applying to all states $\rho_{EX}$, it applies to all states whose
work requirement of erasure does not exceed a fixed threshold.  This key step is
presented as \cref{thm:LemmaUnivErasureW0} in
\cref{appx:UnivImplPerIidInputCondErasureProofs}.

\section{Variable-work thermodynamic implementation of a quantum process}
\label{sec:MainResult}

Let us consider any general quantum channel $\mathcal{E}_{X\to X}$ (for now, not
necessarily time-covariant).  We aim to construct a universal, variable-work
thermodynamic implementation of $\mathcal{E}_{X\to X}^{\otimes n}$
(\cref{fig:VarWorkImplProcess}).
\begin{figure}
  \centering
  \includegraphics{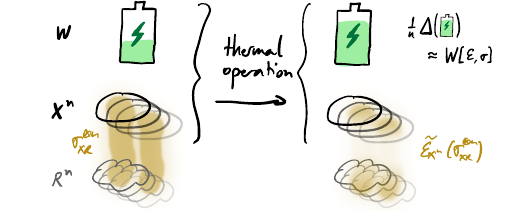}
  \caption{Variable-work implementation of any time-covariant i.i.d.\@ process
    $\mathcal{E}^{\otimes n}$ (Main Result; version for time-covariant
    processes).  Given any i.i.d.\@ input $\sigma_{XR}^{\otimes n}$, the
    implementation outputs
    $\widetilde{\mathcal{E}}_{X^n}(\sigma_{XR}^{\otimes n})$, a
    work-cost-dephased version of
    $\mathcal{E}^{\otimes n}(\sigma_{XR}^{\otimes n})$, and uses the
    asymptotically optimal work per copy $W[\mathcal{E},\sigma]$.  (For any
    time-covariant $\sigma_X^{\otimes n}$, the work-cost-dephasing is irrelevant
    without $R$, i.e.,
    $\widetilde{\mathcal{E}}_{X^n}(\sigma_{X}^{\otimes n}) =
    \mathcal{E}^{\otimes n}(\sigma_{X}^{\otimes n})$.)  Our implementation
    borrows fresh heat baths, resets them to a suitable pure state, carries out
    the unitary Stinespring dilation of $\mathcal{E}^{\otimes n}$ using these
    ancillary systems, and invokes our variable-work conditional reset operation
    to reset the ancillas to their thermal state before they are traced out.
    Accounting the work cost of each step yields the asymptotically optimal
    amount of work per copy $W[\mathcal{E},\sigma]$ when the input state is
    $\ket\sigma_{XR}^{\otimes n}$.  All ancillas and heat baths can be merged
    into a single heat bath, and all work expenditure steps can be merged into a
    single battery use; the overall process is one large thermal operation
    acting on a suitably large and accurate battery $W$, the systems $X^n$ and a
    possibly very large bath $B$.}
  \label{fig:VarWorkImplProcess}
\end{figure}

There is a tension in general between an implementation being universal for all
input states with its use of a variable amount of work.
Suppose that when given input state $\ket{\psi_1}$, the process uses an amount
$w_1$ of work, resulting in a state
$\ket{\psi_1'}\otimes\ket{-w_1}\otimes\ket{\textup{env},w_1}$, where
$\ket{\psi_1'}$ is the output state, $\ket{-w_1}$ is the state of the battery
depleted by $w_1$ energy, and $\ket{\textup{env},w_1}$ is the state of the
environment after having interacted with it spending the $w_1$ work.  Suppose
that for a different input state $\ket{\psi_2}$, an amount of work $w_2$ is
consumed resulting in the state
$\ket{\psi_2'}\otimes\ket{-w_2}\otimes\ket{\textup{env},w_2}$.  By linearity,
when given $\ket{\psi'} = \alpha_1\ket{\psi_1} + \alpha_2\ket{\psi_2}$, the
state should evolve to
$\alpha_1\ket{\psi_1'}\otimes\ket{-w_1}\otimes\ket{\textup{env},w_1} +
\alpha_2\ket{\psi_2'}\otimes\ket{-w_2}\otimes\ket{\textup{env},w_2}$.  The
implementation typically interacts differently with the environment when
different work amounts are consumed, modeled via states
$\ket{\textup{env},w_1} \perp \ket{\textup{env},w_2}$.  Then the output state on
the system and the battery is a statistical average of having consumed $w_1$
work with probability $\abs{\alpha_1}^2$ and having consumed $w_2$ work with
probability $\abs{\alpha_2}^2$.  The information about how much work was
expended is registered in the battery $W$; this causes a decoherence of the
quantum state.
More generally, a variable-work implementation typically induces a decoherence
of the process corresponding to leaking to an environment the amount of work
that was expended.  We call the resulting decohered process the
\emph{work-cost-decohered} process.

A simple example of a problematic process is
$\mathcal{E}(\cdot) = U\,`(\cdot)\, U^\dagger$ in which
$U\ket{E_1} = \ket{E_1'}$ and $U\ket{E_2} = \ket{E_2'}$, supposing that
$\ket{E_1}$, $\ket{E_2}$,$\ket{E_1'}$, $\ket{E_2'}$ are energy eigenstates with
$E_2' - E_2 \neq E_1' - E_1$.  (This process is not time-covariant; the issue
persists nevertheless for time-covariant channels, as we discuss later.)  The
implementation might wish to expend $E_1' - E_1$ work when given $\ket{E_1}$ and
$E_2' - E_2$ work when given $\ket{E_2}$, to ensure optimal usage of energy
resources, but this strategy leads to the dephasing mentioned above in case we
operate on the input $`(\ket{E_1} + \ket{E_2})/\sqrt{2}$.
For this reason, to construct a universal implementation of an arbitrary process
$\mathcal{E}_{X\to X}$ we typically expect the implementation must use a
deterministic amount of work to avoid such a dephasing.
Here, we suppose instead that we are not worried about the work-cost dephasing.
We ask for an asymptotically optimal variable-work implementation of the
process, even if it induces such a dephasing.
We further discuss the effect of this dephasing in \cref{sec:WorkCostCovariance}
below.

We now consider $\mathcal{E}_{X\to X}$ to be any arbitrary time-covariant
quantum channel.  Here, $X$ is a system with Hamiltonian $H_X$, and let
$\Gamma_X = \ee^{-\beta H_X}$.
We consider a reference system $R\simeq X$, and let
$\ket\Phi_{X:R} = \sum \ket{j}_X\otimes\ket{j}_R$ with respect to the canonical
bases of $X$ and $R$.  The \emph{Choi matrix} of $\mathcal{E}$ is
$\mathcal{E}_X(\Phi_{X:R})$.  Recall that
$(A_X\otimes\Ident_R)\ket\Phi_{X:R} = (\Ident_X\otimes A_R)\ket\Phi_{X:R}$ where
$A_R := (A_X)_R^{t}$ denotes the operator on $R$ which is the transpose of $A_X$
taken in the canonical basis.
Now consider the projector
\begin{align}
  \hat P^{(w)}_{X^nR^n} =
  \hspace*{-2.4em}
  \sum_{\substack{k,\ell,\lambda,\mu:\\
  E_\ell - \beta^{-1}S(\bar\mu) -
  E_k + \beta^{-1} S(\bar\lambda)  = w}}
  \hspace*{-3.5em}
  `\big( S_{X^n}^{E_\ell} \Pi_{X^n}^{\mu} ) \otimes
  `\big( S_{X^n}^{E_k} \Pi_{X^n}^{\lambda} )_{R^n}^t
  \,,
  \label{eq:WorkCostMeasInputOutputPhat}
\end{align}
where $`\big{ S_{X^n}^{E_k} }$ and $\Pi_{X^n}^\lambda$ are the projectors onto
the eigenspaces of $H_{X^n} = \sum_{i=1}^n H_{X_i}$ and the Schur-Weyl blocks of
$X^n$, respectively.  Here, $w$ takes values in some finite set
$\mathscr{W}\subset\mathbb{R}$ with at most $\abs{\mathscr{W}}\leq \poly(n)$
elements.  The projector $P^{(w)}_{X^nR^n}$ is designed to be applied onto a
state of the form $\mathcal{E}_X^{\otimes n}(\proj{\sigma}_{XR}^{\otimes n})$:
the part acting on $X^n$ tests the output state of the process while the
projectors on $R^n$ test the input state via its purification.  The expression
$E_\ell - \beta^{-1}S(\bar\mu)$ is an estimate for the output free energy, while
$E_k - \beta^{-1}S(\bar\lambda)$ estimates the input free energy.

The \emph{work-cost-dephased quantum channel}
$\widetilde{\mathcal{E}}_{X^n\to X^n}$ of 
$\mathcal{E}^{\otimes n}$ is defined via its Choi matrix:
\begin{align}
  \widetilde{\mathcal{E}}_{X^n\to X^n}(\Phi_{X:R})
  = \sum_{w}
  \hat P^{(w)}_{X^nR^n} \mathcal{E}_X^{\otimes n}(\Phi_{X^n:R^n}) \hat P^{(w)}_{X^nR^n}
  .
  \label{eq:WorkCostDephasedMap}
\end{align}
We call a quantum channel $\mathcal{E}_{X^n\to X^n}$ \emph{work-cost-covariant}
if $\widetilde{\mathcal{E}}_{X^n\to X^n} = \mathcal{E}_{X^n\to X^n}$.
The completely thermalizing channel (corresponding to work extraction), the
identity channel, as well as the preparation channel for any i.i.d.\@
time-covariant quantum state are all work-cost-covariant channels.
The work-cost-dephased quantum channel of any $\mathcal{E}^{\otimes n}$ is
itself work-cost-covariant.

The projectors $\hat{P}_{X^nR^n}^{(w)}$ are closely related to the
$P_{E^nX^n}^{(w)}$ defined in~\eqref{eq:WorkCostMeasEnvXpP} via a Stinespring
representation of $\mathcal{E}^{\otimes n}$ into environment systems $E^n$.  In
fact, the action of the work-cost-dephased process
$\widetilde{\mathcal{E}}_{X^n}$ can be expressed as first applying a Stinespring
dilation of $\mathcal{E}^{\otimes n}$, followed by the
dephasing~\eqref{eq:WorkCostDephasingMapEX}.  We refer to
\cref{appx:UnivImplPerIidInputProofs} for a detailed discussion and proofs.
This argument also proves that the object $\widetilde{\mathcal{E}}_{X^n}$
defined above is indeed a valid quantum channel.

We can now state our main result for a general time-covariant quantum channel:

\begin{theorem}[\textbf{Main Result; Quantum Process}]
  \label{thm:UnivImplPerIidInputCost}
  Let $\mathcal{E}_{X\to X}$ be any time-covariant quantum channel.  Let
  $\delta>0$.  Then there exists a thermal operation $\mathcal{T}_{X^nW}$ acting
  on $X^n$ and a battery $W$ along with an $\eta>0$ such that:
  \begin{enumerate}[label=(\roman*)]
  \item \label{item:thmUnivImplPerIidInputCostFidelityBound}
    For any quantum state $\sigma_{X^n}$, with a purification
    $\ket\sigma_{X^nR}$, we have
    \begin{multline}
      F`\Big(
      \tr_W`\big[ \mathcal{T}_{X^nW}`\big(\sigma_{X^nR} \otimes \tau_W^{(E_0)}) ]
      \,,\;
      \widetilde{\mathcal{E}}_{X^n\to X^n}`\big(\sigma_{X^nR})
      )
      \\[.5ex]
      \geq 1 - \poly(n)\,\ee^{-n\eta}\ ,
    \end{multline}
    where $\widetilde{\mathcal{E}}_{X^n\to X^n}$ is the work-cost-dephased
    process associated with $\mathcal{E}_{X}^{\otimes n}$.

    Furthermore, any time-covariant $\sigma_X$, then
    $\widetilde{\mathcal{E}}_{X^n}(\sigma_X^{\otimes n}) = \mathcal{E}^{\otimes
      n}(\sigma_X^{\otimes n})$;
  \item \label{item:thmUnivImplPerIidInputCostVariableWorkCostBound}
    For any quantum state $\sigma_{X}$, we have
    \begin{multline}
      \tr`\Big[
      \Pi_W^{\geq (E_0 - w_\sigma)}
      \mathcal{T}_{X^nW}`\big(\sigma_{X}^{\otimes n}\otimes\tau_W^{(E_0)})
      ]
      \\
      \geq 1 - \poly(n)\,\ee^{-n\eta}\ ;
    \end{multline}
    where
    \begin{align}
      w_\sigma
      &= n \bigl[ \beta^{-1}\DD{\mathcal{E}(\sigma_X)}{\Gamma_X}
       - \beta^{-1}\DD{\sigma_X}{\Gamma_X}
      + 7\beta^{-1}\delta \bigr]
      \nonumber\\
      &\quad+ \log\poly(n)\ .
    \end{align}
  \end{enumerate}
\end{theorem}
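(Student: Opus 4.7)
The plan is to build $\mathcal{T}_{X^nW}$ by the standard Stinespring-plus-conditional-erasure reduction sketched in the introduction, and then appeal to \cref{thm:UnivCondErasureForEachIidInput}. Since $\mathcal{E}_{X\to X}$ is time-covariant, it admits a covariant isometric Stinespring dilation of the form $V_{X\to XE} = U_{XE}(\Ident_X\otimes\ket{0}_E)$, where $U_{XE}$ is an energy-conserving unitary on $XE$ and $\ket{0}_E$ is a fixed energy eigenstate of $H_E$ with energy $E_0^E$. I would then define $\mathcal{T}_{X^nW}$ as the concatenation of: (a)~preparing the ancilla $\proj{0}_E^{\otimes n}$ from a fresh thermal bath $\gamma_E^{\otimes n}$, at deterministic per-copy cost $E_0^E-F_E$; (b)~applying $U_{XE}^{\otimes n}$ for free; (c)~running the conditional erasure operation $\mathcal{R}_{E^nX^nW}$ of \cref{thm:UnivCondErasureForEachIidInput} on $E^nX^nW$; (d)~tracing out the now-thermalized $E^n$. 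By the composition and battery-combination rules of \cref{sec:Setup}, these steps assemble into a single thermal operation acting on $X^n$ and one battery $W$.

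For clause~\ref{item:thmUnivImplPerIidInputCostFidelityBound}, the output on $X^nR$ is, by construction, the partial trace over $E^n$ of the output of $\mathcal{R}_{E^nX^nW}$ applied to the pure intermediate state $V^{\otimes n}\sigma_{X^nR}(V^{\otimes n})^\dagger$. Invoking \cref{thm:UnivCondErasureForEachIidInput}\ref{item:thmUnivCondErasureForEachIidInputOutputFidelity}, this state has fidelity at least $1-\poly(n)\,\ee^{-n\eta}$ with $\tr_{E^n}\mathcal{D}^{\textup{W}}_{E^nX^n}\bigl[V^{\otimes n}\sigma_{X^nR}(V^{\otimes n})^\dagger\bigr]$. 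By the Stinespring-plus-dephasing characterization of $\widetilde{\mathcal{E}}_{X^n\to X^n}$ asserted just before the theorem (and established in \cref{appx:UnivImplPerIidInputProofs}), the latter is precisely $\widetilde{\mathcal{E}}_{X^n\to X^n}(\sigma_{X^nR})$. For the time-covariance refinement, observe that if $\sigma_X$ commutes with $H_X$ and $V$ is covariant, then $V\sigma_X V^\dagger$ commutes with $H_X+H_E$, so by the last sentence of \cref{thm:UnivCondErasureForEachIidInput}\ref{item:thmUnivCondErasureForEachIidInputOutputFidelity} the dephasing acts trivially on $(V\sigma_X V^\dagger)^{\otimes n}$, and tracing out $E^n$ recovers $\mathcal{E}^{\otimes n}(\sigma_X^{\otimes n})$ exactly.

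For clause~\ref{item:thmUnivImplPerIidInputCostVariableWorkCostBound}, the total battery depletion is $n(E_0^E-F_E)$ from step~(a) plus the per-input erasure cost $w_\rho$ from \cref{thm:UnivCondErasureForEachIidInput}\ref{item:thmUnivCondErasureForEachIidInputWorkCostIidInput} applied to the i.i.d.\ marginal $\rho_{EX}^{\otimes n}=(V\sigma_X V^\dagger)^{\otimes n}$. Expanding $\beta^{-1}\DD{\rho_{EX}}{\Gamma_{EX}}=-\beta^{-1}S(\rho_{EX})+\tr[\rho_X H_X]+\tr[\rho_E H_E]$, I would use (i)~that $V$ applied to a purification of $\sigma_X$ produces a pure state on $X^nER$, so $S(\rho_{EX})=S(\sigma_X)$; and (ii)~expected-energy conservation $[U_{XE},H_X+H_E]=0$ on $\sigma_X\otimes\proj{0}_E$, which yields $\tr[\rho_E H_E]=E_0^E+\tr[\sigma_X H_X]-\tr[\mathcal{E}(\sigma_X)H_X]$. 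Substituting these makes the $E_0^E$ and $F_E$ contributions cancel, and the remainder telescopes to $n\bigl[\beta^{-1}\DD{\mathcal{E}(\sigma_X)}{\Gamma_X}-\beta^{-1}\DD{\sigma_X}{\Gamma_X}\bigr]$, matching $w_\sigma$ up to the $\delta$ and $\log\poly(n)$ slacks already present in \cref{thm:UnivCondErasureForEachIidInput}. A union bound combining the deterministic preparation step with the tail guarantee of \cref{thm:UnivCondErasureForEachIidInput}\ref{item:thmUnivCondErasureForEachIidInputWorkCostIidInput} then gives the desired high-probability battery-charge test bound.

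The main obstacle I expect is the Stinespring-plus-dephasing characterization of $\widetilde{\mathcal{E}}_{X^n\to X^n}$ itself, on which the fidelity bound in clause~\ref{item:thmUnivImplPerIidInputCostFidelityBound} rests: one must check that the projectors $\hat P^{(w)}_{X^nR^n}$ in \eqref{eq:WorkCostMeasInputOutputPhat}, acting on the output and purifying reference of $\mathcal{E}^{\otimes n}$, correspond, via the transpose identity $(A_X\otimes\Ident_R)\ket\Phi_{X:R}=(\Ident_X\otimes A_R)\ket\Phi_{X:R}$ together with the isometry $V^{\otimes n}$, to the projectors $P^{(w)}_{E^nX^n}$ in \eqref{eq:WorkCostMeasEnvXpP}, with the fixed environment ancilla $\ket{0}_E^{\otimes n}$ accounting for the offset between the ``output-free-energy'' and ``input-free-energy'' estimators. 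Once this identification is in place, the rest of the argument is essentially the bookkeeping above, drawing on \cref{thm:UnivCondErasureForEachIidInput} and the framework of \cref{sec:Setup}.
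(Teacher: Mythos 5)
Your proposal is correct and follows essentially the same route as the paper: a covariant Stinespring dilation followed by the conditional-erasure operation of \cref{thm:UnivCondErasureForEachIidInput}, with the Stinespring-plus-dephasing identification of $\widetilde{\mathcal{E}}_{X^n}$ (the paper's \cref{thm:WorkCostDephasedChannelIsStinespringDephased}) serving exactly as the key lemma you flag. The only cosmetic deviations are that the paper fixes the convention $\matrixel{0}{H_E}{0}_E=0$ so the $E_0^E$ terms never appear (your explicit tracking of their cancellation is equally valid), and that the paper derives the ``furthermore'' clause via \cref{thm:WrkCostDephasProcessPINoRefStateTransf} rather than directly from the corresponding clause of \cref{thm:UnivCondErasureForEachIidInput}, which amounts to the same observation.
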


The proof of \cref{thm:UnivImplPerIidInputCost} is presented in
\cref{appx:UnivImplPerIidInputProofs}.  

The condition~\ref{item:thmUnivImplPerIidInputCostFidelityBound} ensures that
the implementation always accurately reproduces the target channel (which is the
work-cost-dephased channel of $\mathcal{E}^{\otimes n}$), for all i.i.d.\@ input
states.  The
condition~\ref{item:thmUnivImplPerIidInputCostVariableWorkCostBound} ensures
that whenever the input state is $\sigma_X^{\otimes n}$, the battery is never
depleted by more than a amount certain $w_\sigma$; This amount is the work
expended in the process, and it asymptotically matches the optimal per-input
state work cost per copy of
$\beta^{-1}`\big[\DD{\mathcal{E}(\sigma)}{\Gamma} - \DD{\sigma}{\Gamma}]$ after
taking $\delta\to0$.

The work cost of the thermal operation provided by
\cref{thm:UnivImplPerIidInputCost} can also be quantified for general input
states, even non-i.i.d.\@ states.  In such cases, the variable work cost is best
quantified by invoking \cref{thm:UnivCondErasureForEachIidInput} directly, as
explained below.

The central ingredient in our proof is the universal, variable-work-cost
protocol for conditional erasure from \cref{sec:ConditionalErasureVariable}
(\cref{thm:UnivCondErasureForEachIidInput}).  The strategy for the proof of
\cref{thm:UnivImplPerIidInputCost} is to construct the protocol as follows:
(1)~Borrow systems $E^n$, with carefully chosen Hamiltonians, initialized in
their thermal state (a free operation); (2)~Reset the $E^n$ systems to their
$\ket0_E$ state, costing $-nF_E$ work; (3)~Apply the Stinespring dilation
unitary $V_{XE}^{\otimes n}$ of $\mathcal{E}^{\otimes n}$, where the dilation
along with $H_E$ can be chosen such that $V_{XE}$ is energy-conserving;
(3)~Apply our conditional reset protocol to thermalize the $E^n$ systems
conditioned on $X^n$; (4)~Discard the $E^n$ systems.
The variable work cost is governed by the conditional reset step, and a careful
accounting gives the amounts stated in \cref{thm:UnivImplPerIidInputCost}.  The
dephasing operation $\mathcal{D}^{\textup{W}}_{E^nX^n}$ induced during the
conditional erasure maps exactly onto the work-cost process dephasing that
results in implementing the process $\widetilde{\mathcal{E}}_{X^n}$.

\section{The consequences of revealing the amount of work used: Work cost
  covariance}
\label{sec:WorkCostCovariance}

An implementation that uses a variable amount of work interacts differently with
the environment depending on how much work was actually used, and furthermore
leaves information about how much work was actually used in the battery.  These
correlations cause decoherence between certain states in the process (see
\cref{fig:WorkCostDephasing}).
\begin{figure}
  \centering
  \includegraphics{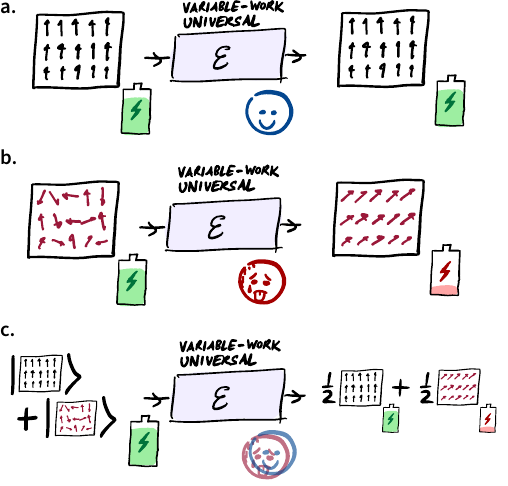}
  \caption{A variable-work cost implementation of a process necessarily induces
    decoherence, because the implementation interacts differently with the
    environment when different amounts of work are required.  In general, this
    dephasing can be significant in the presence of a reference system (not
    depicted).  For time-covariant and permutation-invariant inputs, the
    dephasing has no effect if we ignore the reference system.
    \textbf{a.}~Example variable-work implementation of a process that aligns
    spins along some input-dependent axis (for illustration purposes).  If the
    spins are already aligned, the process costs no work. \textbf{b.}~The same
    process might consume a lot of work on a different input
    state. \textbf{c.}~A superposition of different types of input states might
    decohere into a mixture because the process interacts differently with the
    heat bath for those input state, which is reflected in the different battery
    output state.  Note the work-cost-dephasing is significantly subtler than a
    dephasing of the input or output state in the energy basis: For instance,
    the identity process suffers no work-cost-dephasing, and is implemented
    accurately by our construction for arbitrary input states, because all input
    states require the same amount of work (zero work).}
  \label{fig:WorkCostDephasing}
\end{figure}
For this reason, if we require an accurate implementation of the process for all
input states including non-i.i.d.\@ states, it is in general necessary to use a
deterministic, rather than variable, amount of work.
We now briefly study the effect of this work-cost dephasing.

One could imagine the work cost dephasing corresponds to dephasing the input or
output states in the energy basis.  This is not the case.  The identity process,
for instance, is work-cost covariant because all input states necessitate the
same amount of work (zero work).  In this case, there is only a single term
\eqref{eq:WorkCostDephasedMap} and
$\widetilde{\mathcal{E}}_{X^n} = \mathcal{E}_X^{\otimes n}$.

A main question we can ask is, do there even exist time-covariant processes that
are not work-cost-covariant?
%
%
Here, we focus on the task of conditional erasure of a time-covariant state
$\rho_{E^nX^n}$ where $H_{EX} = H_E + H_X$.  As it turns out, there are states
$\ket\rho_{E^nX^nR}$ for which the work-cost-dephased version
$\tr_{E^n}`\big[\mathcal{D}^{(\textup{W})}_{E^nX^n}(\rho_{E^nX^nR})]$, after
conditional erasure of $E^n$, is significantly different from the reduced state
$\rho_{X^nR}$. [Recall the definition of $\mathcal{D}^{(\textup{W})}_{E^nX^n}$
in \cref{eq:WorkCostDephasingMapEX}].  This shows that the process
$\mathcal{E}_{\textup{cond.reset}\,EX}$ that formalizes the conditional erasure
task is not work-cost-covariant,
$\widetilde{\mathcal{E}}_{X^n} \neq \mathcal{E}^{\otimes n}$.  This argument,
including a detailed construction of the example, is presented in
\cref{appx:WorkCostDephasingDetails}.  The example is constructed in a setting
with trivial Hamiltonians, i.e. $H_X=0$, $H_E=0$.

Interestingly, the work-cost dephasing only affects the reduced state
$\rho_{X^nR}$ in its correlations between $X^n$ and $R$, if $\rho_{E^nX^nR}$ is
time-covariant and permutation-invariant:
\begin{proposition}
  \label{thm:WrkCostDephasBipartPINoRef}
  Let $\rho_{E^nX^n}$ be a quantum state that is time covariant and permutation
  invariant over the copies $(EX)^n$.  Then
  \begin{align}
    \tr_{E^n}`\Big{ \mathcal{D}^{(\textup{W})}_{E^nX^n}(\rho_{E^nX^n}) }
    = \rho_{X^n}\ .
  \end{align}
\end{proposition}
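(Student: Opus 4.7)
The plan is to show that $\mathcal{D}^{(\textup{W})}_{E^nX^n}(\rho)$ differs from $\rho$ only by coherences that disappear under $\tr_{E^n}$. The key structural observation is that $P^{(w)}_{E^nX^n}$ from~\eqref{eq:WorkCostMeasEnvXpP} decomposes as $P^{(w)} = \sum_{(k,\ell,\lambda,\mu)\in I_w} A^{\ell,\mu}_{X^n}\, B^{k,\lambda}_{E^nX^n}$, where $A^{\ell,\mu}_{X^n} := S^{E_\ell}_{X^n}\Pi^\mu_{X^n}$ is an orthogonal projector on $X^n$ and $B^{k,\lambda}_{E^nX^n} := R^{E_k}_{E^nX^n}\Pi^\lambda_{(EX)^n}$ is an orthogonal projector on $E^nX^n$, all four factors pairwise commuting. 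Time-covariance of $\rho$ gives $[R^{E_k}_{E^nX^n},\rho] = 0$, and $(EX)^n$-permutation invariance gives $[\Pi^\lambda_{(EX)^n},\rho] = 0$, so each $B^{k,\lambda}_{E^nX^n}$ commutes with $\rho$.

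Using $[A,B] = [B,\rho] = 0$ together with orthogonality of the $B$s to collapse the cross terms in $P^{(w)} \rho P^{(w)}$, one observes that the defining constraint in~\eqref{eq:WorkCostMeasEnvXpP} couples, for fixed $(k,\lambda)$, only the combination $f(\ell,\mu) := -E_\ell + \beta^{-1} S(\bar\mu)$. Setting $\tilde A^f_{X^n} := \sum_{(\ell,\mu):\,f(\ell,\mu)=f} A^{\ell,\mu}_{X^n}$ (an orthogonal projector on $X^n$ with $\sum_f \tilde A^f_{X^n} = \Ident_{X^n}$), this yields
\begin{align}
  \mathcal{D}^{(\textup{W})}_{E^nX^n}(\rho)
  = \sum_{k,\lambda}\sum_f
  \tilde A^f_{X^n}\, \rho\, \tilde A^f_{X^n}\, B^{k,\lambda}_{E^nX^n}\, .
\end{align}
Taking $\tr_{E^n}$ and using $[A,B]=0$ to pull one $\tilde A^f_{X^n}$ past $B^{k,\lambda}_{E^nX^n}$ out of the partial trace, the proposition reduces to $\sum_f \tilde A^f_{X^n} M^{k,\lambda}_{X^n} \tilde A^f_{X^n} = M^{k,\lambda}_{X^n}$ for every $(k,\lambda)$, where $M^{k,\lambda}_{X^n} := \tr_{E^n}[B^{k,\lambda}_{E^nX^n}\, \rho]$. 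Since $\tilde A^f_{X^n}$ is a joint spectral projector of $H_{X^n}$ and the $\Pi^\mu_{X^n}$'s, it suffices to show that $M^{k,\lambda}_{X^n}$ commutes with $H_{X^n}$ and with every $\Pi^\mu_{X^n}$.

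The commutation with $H_{X^n}$ is routine: writing $H_{X^n} = H_{E^nX^n} - H_{E^n}$ and using time-covariance, the commutator reduces to $\tr_{E^n}`\big[ \Pi^\lambda R^{E_k} [H_{E^n},\rho] ]$, which vanishes by cyclicity of the partial trace with respect to the $E^n$-only operator $H_{E^n}$ together with $[H_{E^n},\Pi^\lambda R^{E_k}] = 0$. The commutation with $\Pi^\mu_{X^n}$ is the main obstacle: $(EX)^n$-permutation invariance of $\rho$ does not a priori imply invariance on $X^n$ alone, and $\Pi^\mu_{X^n}$ involves $X^n$ permutations only. My plan is to insert the symmetrization $\rho = \tfrac{1}{n!}\sum_\pi S_\pi^{(EX)^n} \rho\, (S_\pi^{(EX)^n})^\dagger$ inside the definition of $M^{k,\lambda}_{X^n}$; then, using the tensor-product structure $S_\pi^{(EX)^n} = S_\pi^{E^n} \otimes S_\pi^{X^n}$ together with partial-trace cyclicity on the $S_\pi^{E^n}$ factor, I would establish $M^{k,\lambda}_{X^n} = \tfrac{1}{n!}\sum_\pi S_\pi^{X^n} M^{k,\lambda}_{X^n} (S_\pi^{X^n})^\dagger$. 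This makes $M^{k,\lambda}_{X^n}$ permutation-invariant on $X^n$, hence commuting with every $\Pi^\mu_{X^n}$. With both commutations in hand, $\sum_f \tilde A^f_{X^n} M^{k,\lambda}_{X^n} \tilde A^f_{X^n} = M^{k,\lambda}_{X^n}$; summing over $(k,\lambda)$ and using $\sum_{k,\lambda} B^{k,\lambda}_{E^nX^n} = \Ident_{E^nX^n}$ concludes $\tr_{E^n}`\big[\mathcal{D}^{(\textup{W})}_{E^nX^n}(\rho)] = \tr_{E^n}[\rho] = \rho_{X^n}$.
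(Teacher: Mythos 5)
Your proof is correct and follows essentially the same route as the paper's: both rest on observing that $\Pi^\lambda_{(EX)^n}$ and $R^{E_k}_{E^nX^n}$ commute with $\rho$ (by permutation invariance and time covariance), and that the partial trace $M^{k,\lambda}_{X^n}=\tr_{E^n}\bigl[\Pi^\lambda_{(EX)^n}R^{E_k}_{E^nX^n}\rho\bigr]$ inherits both time covariance and $X^n$-permutation invariance, so the $X^n$-side projectors $S^{E_\ell}_{X^n}\Pi^\mu_{X^n}$ collapse and the sum reduces to $\tr_{E^n}[\rho]$. Your write-up usefully spells out the commutator computation for $[M^{k,\lambda}_{X^n},H_{X^n}]=0$ and the explicit $S_\pi^{(EX)^n}=S_\pi^{E^n}\otimes S_\pi^{X^n}$ twirl plus $E^n$-unitary-invariance argument that the paper leaves as an unstated ``the trace term is permutation-invariant/time-covariant'' remark.
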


The corresponding statement for quantum processes implies that the work-cost
dephasing has no effect if the goal is to implement a state transformation (with
no regards to preserving correlations with a reference system).

\begin{proposition}
  \label{thm:WrkCostDephasProcessPINoRefStateTransf}
  Let $\widetilde{\mathcal{E}}_{X^n\to X^n}$ be the work-cost-dephased map
  associated with a i.i.d.\@ quantum channel $\mathcal{E}_{X}^{\otimes n}$ where
  $\mathcal{E}$ is time covariant.  Let $\sigma_{X^n}$ be any time-covariant,
  permutation-invariant state.  Then
  \begin{align}
    \widetilde{\mathcal{E}}_{X^n\to X^n}(\sigma_{X^n})
    = \mathcal{E}_{X\to X}^{\otimes n}(\sigma_{X^n})\ .
  \end{align}
\end{proposition}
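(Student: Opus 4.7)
The plan is to reduce \cref{thm:WrkCostDephasProcessPINoRefStateTransf} to the bipartite version \cref{thm:WrkCostDephasBipartPINoRef} by passing through a Stinespring dilation of $\mathcal{E}$. Because $\mathcal{E}_{X\to X}$ is time-covariant, it admits an energy-conserving isometric dilation $V_{X\to XE}$ with respect to some Hamiltonian $H_E$, as used in the construction of $\mathcal{T}_{X^nW}$ in \cref{sec:MainResult}. The first step I would take is to formalize the claim, already stated in the discussion preceding the theorem, that the work-cost-dephased channel defined via the Choi matrix in \eqref{eq:WorkCostDephasedMap} can be equivalently written as
\begin{align}
  \widetilde{\mathcal{E}}_{X^n\to X^n}(\,\cdot\,)
  = \tr_{E^n}\!\bigl[ \mathcal{D}^{\textup{W}}_{E^nX^n}\bigl( V^{\otimes n}(\,\cdot\,)(V^{\otimes n})^\dagger \bigr) \bigr]\,,
\end{align}
with the dephasing $\mathcal{D}^{\textup{W}}_{E^nX^n}$ of \eqref{eq:WorkCostDephasingMapEX}. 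The correspondence between the projectors follows from matching $\hat P^{(w)}_{X^nR^n}$ with $P^{(w)}_{E^nX^n}$ under $V^{\otimes n}$, using energy-conservation so that a joint energy measurement of $E^nX^n$ on the output coincides with the free-energy label $E_k - \beta^{-1}S(\bar\lambda)$ estimated through $R^n$ on the input. This bookkeeping, which is the main technical obstacle, is presumably carried out already in \cref{appx:UnivImplPerIidInputProofs}.

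With that identification in hand, I would verify that the state $\rho_{(EX)^n} := V^{\otimes n}\sigma_{X^n}(V^{\otimes n})^\dagger$ satisfies the hypotheses of \cref{thm:WrkCostDephasBipartPINoRef}. Permutation invariance of $\rho_{(EX)^n}$ follows from the permutation invariance of $\sigma_{X^n}$ together with the fact that $V^{\otimes n}$ commutes with the permutation action on $(EX)^n$. Time covariance of $\rho_{(EX)^n}$ with respect to $H_{E^n}+H_{X^n}$ follows from time covariance of $\sigma_{X^n}$ with respect to $H_{X^n}$ and the energy-conservation property of $V$, which intertwines $\ee^{-iH_X t}$ on the input with $\ee^{-i(H_X+H_E)t}$ on the output.

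Finally, \cref{thm:WrkCostDephasBipartPINoRef} applied to $\rho_{(EX)^n}$ gives $\tr_{E^n}[\mathcal{D}^{\textup{W}}_{E^nX^n}(\rho_{(EX)^n})] = \rho_{X^n}$, and by definition $\rho_{X^n} = \tr_{E^n}[V^{\otimes n}\sigma_{X^n}(V^{\otimes n})^\dagger] = \mathcal{E}_{X\to X}^{\otimes n}(\sigma_{X^n})$. Combining this with the Stinespring reformulation of $\widetilde{\mathcal{E}}_{X^n\to X^n}$ yields the claim. The only nontrivial part of the argument is the first step, verifying the equivalence of the Choi-matrix definition of the work-cost dephasing and the Stinespring-plus-conditional-erasure definition; once this is granted, the remainder consists of propagating time covariance and permutation invariance through $V^{\otimes n}$ and quoting \cref{thm:WrkCostDephasBipartPINoRef}.
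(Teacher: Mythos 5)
Your proposal is correct and follows essentially the same route as the paper: pass to the energy-conserving Stinespring picture, invoke the identification $\widetilde{\mathcal{E}}_{X^n}(\cdot) = \tr_{E^n}\{\mathcal{D}^{\textup{W}}_{E^nX^n}(V^{\otimes n}(\cdot)(V^{\otimes n})^\dagger)\}$ (which the paper proves as \cref{thm:WorkCostDephasedChannelIsStinespringDephased}), and then apply the bipartite lemma \cref{thm:WrkCostDephasBipartPINoRef} to the dilated state, which you correctly check is permutation-invariant and time-covariant. The only cosmetic difference is that the paper phrases the dilation via a unitary $V_{XE}$ acting on an ancilla initialized in $\ket{0}_E$ rather than an isometry $V_{X\to XE}$, and it asserts the time-covariance and permutation-invariance of the dilated state without spelling out the (short) intertwining argument you give explicitly.
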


The proofs of \cref{thm:WrkCostDephasBipartPINoRef,%
  thm:WrkCostDephasProcessPINoRefStateTransf} are presented in
\cref{appx:WorkCostDephasingDetails}.

\section{Outlook}
\label{sec:Outlook}

We constructed a \emph{variable-work}, universal implementation of any
time-covariant $\mathcal{E}_{X\to X}^{\otimes n}$, up to a dephasing operation
amounting to telling an environment how much work was expended.
The amount of work consumed by the implementation, when provided the input state
$\sigma_X^{\otimes n}$, matches asymptotically the minimal amount of work per
copy that even an implementation tailor-made for $\sigma_X$ would have to
expend.
Furthermore, for any time-covariant $\sigma_X$, the implementation's output is
close to $\mathcal{E}^{\otimes n}(\sigma_X^{\otimes n})$.

This implementation fits between two extremes.  An implementation of a
time-covariant $\mathcal{E}^{\otimes n}$ tailored to a specific input state
$\rho$ can implement the process accurately using an asymptotically optimal
amount of work $W[\mathcal{E},\rho]$.  Its output is generally unreliable if any
input state other than $\rho$ is provided.  On the other hand, a universal
implementation of a time-covariant $\mathcal{E}^{\otimes n}$ can reproduce the
process' output accurately for all input states $\sigma_{X^nR}$, including
non-i.i.d.\@ states, if it uses a deterministic, worst-case asymptotic work cost
per copy $\max_\sigma W[\mathcal{E},\sigma]$ regardless of the provided input
state.
Our variable-work-cost implementation lies between these two situations: The
implementation uses an asymptotically optimal amount of work per copy
$W[\mathcal{E},\sigma]$ for any i.i.d.\@ input $\sigma^{\otimes n}$; our
implementation also has a well-characterized output for all input states,
including non-i.i.d.\@ ones, although the output deviates from that of the
desired $\mathcal{E}^{\otimes n}$ by some dephasing corresponding to revealing
the amount of work used.

The variable-work implementation is derived by piecing together tools and
techniques introduced in our earlier
Refs.~\cite{Faist2019PRL_thcapacity,Faist2021CMP_impl}.  We have also taken the
opportunity to clarify how some results presented there immediately yield
implementations that are universal to specific subsets of input states
(\cref{appx:UniversalWithSetOfInputs}).
We summarize the thermodynamic implementation statements made throughout this
paper in \cref{appx:SummaryStatements}, to help the reader navigate this
document.

Our main result applies to any arbitrary time-covariant quantum process
$\mathcal{E}_{X\to X}$, which can be any system evolution that is compatible by
quantum mechanics.  This includes work extraction, the preparation of a quantum
state, running a quantum computation, or even acquisition of information by a
probe system (for instance, in the context of quantum
metrology~\cite{LipkaBartosik2018JPA_thermodynamic}).
Furthermore, we show in \cref{appx:UnivVariableImplWithGPM} (cf.\@
\cref{thm:UnivVarWorkIidGPM} there) that our main result extends to
non-time-covariant processes in case we allow arbitrary Gibbs-preserving maps
instead of thermal operations.

As a key ingredient in our proof, we construct a variable-work-cost protocol for
\emph{conditional erasure} of a system using a memory.  We construct this
protocol by following closely ideas and concepts introduced in our earlier
Refs.~\cite{Faist2019PRL_thcapacity,Faist2021CMP_impl}.
As discussed in \cref{sec:ConditionalErasureVariable}, the task of conditional
erasure to some general state $\kappa_E$ naturally encompasses several
thermodynamic tasks whose connection might not immediately be obvious to all,
including \emph{complete thermalization}, \emph{conditional thermalization}, as
well as \emph{work extraction} or even \emph{conditional work extraction}.

We can also consider the task of \emph{state preparation}.  Let $\rho_{X}$ be
any quantum state, which we assume time-covariant so it can be prepared using
thermal operations and work.  The channel
$\mathcal{E}_{\mathrm{prep.}\,\rho}(\cdot) = \tr(\cdot)\rho_X$ prepares the
state $\rho_X$; this channel is work-cost-covariant because $\rho^{\otimes n}$
commutes on one hand with the total energy (it is time covariant) as well as
with the Schur-Weyl blocks (it is permutation-invariant).
\Cref{thm:UnivImplPerIidInputCost} then provides a variable-work-cost of this
state preparation of this channel, with an asymptotic work cost per copy of
$w_\sigma = \beta^{-1}`\big[ \DD{\rho_X}{\Gamma_X} - \DD{\sigma_X}{\Gamma_X} ] =
\beta^{-1}`\big[ F_X(\rho_X) - F_X(\sigma_X) ]$, where $\sigma_X$ is the initial
state on $X$.  This cost is optimal, as it coincides with the optimal asymptotic
work cost of the state transformation
$\sigma_X^{\otimes n} \to \rho_X^{\otimes n}$~\cite{%
  Brandao2013_resource,Horodecki2013_ThermoMaj}.

While our results assume $\mathcal{E}$ to be time-covariant, no such assumption
applies to the implementation's input states.  That is, our implementation
performs accurately also for input quantum states that fail to commute with the
system Hamiltonian.  Furthermore, the work-cost-dephasing need not decohere
input states into the energy eigenbasis.  As a trivial example, the identity
channel can be implemented by doing nothing and expending no work, which is
optimal for each input state; states with coherent superpositions of energy
levels are not decohered.  The work-cost-dephasing does not affect the identity
channel since there is only a single work cost event where no work is expended,
so there is only a single term in the sum in~\eqref{eq:WorkCostDephasedMap}.
If the input state is time-covariant and permutation-invariant, the work-cost
dephasing has no effect if we disregard the reference system $R$.  However, even
for systems with trivial Hamiltonians, we can construct explicit examples of
input states $\sigma_{X^nR}$ that are strongly affected by this dephasing.
A deeper study of the connection between time-covariance and
work-cost-covariance is likely to yield close connections to known
symmetry-based decompositions of quantum processes into symmetry
modes~\cite{Cirstoiu2017arXiv_gauge,Cirstoiu2020PRX_robustness} and how
representations of the symmetric group action that act on bipartite systems
split into irreducible representations on each
party~\cite{Christandl2006CMP_spectra,PhDChristandl2006_bipartite}.

Should one need to avoid excessive work cost dephasing, one could settle for an
intermediate regime where the possible amount of work expended is suitably
coarse-grained so as not to reveal too drastically the input state, thereby
reducing decoherence.  This can likely be achieved by choosing a coarser
measurement of the anticipated work cost $w$, and would correspond to
``smudging'' or ``smoothing'' the work-cost measurement to avoid strong
work-cost dephasing.

The framework of thermal operations can be formulated in a consistent way also
for processes whose input and output system dimensions differ; for instance,
energy-conserving partial isometries can be extended to energy-conserving
unitaries on those systems augmented by additional auxiliary systems in standard
states, leading to a natural definition of a ``generalized thermal operation''
in this regime (cf.\@ e.g.\@ Ref.~\cite{Sagawa2021JPA_asymptotic}).
We anticipate that our results naturally extend to this regime.

We focus on finite-dimensional settings for technical simplicity.  Remarkably,
Watanabe \emph{et al.}\@ show that their techniques for implementing a work
extraction protocol with variable work yield extend to infinite-dimensional
systems~\cite{Watanabe2025arXiv_universal}.  It is unclear if the techniques
presented here can also be extended to this regime.

Finally, these results are likely to extend to regimes beyond the fully i.i.d.\@
setting, including for ergodic spin chains and translation-invariant
states~\cite{Faist2019PRL_macroscopic,Sagawa2021JPA_asymptotic}.

\begin{acknowledgments}
The author thanks 
Kaito Watanabe,
Ryuji Takagi,
Bartosz Regula,
Johannes J.\@ Meyer,
and 
David Jennings
for discussions and feedback.
\end{acknowledgments}

\vfill\hbox{}
\clearpage
\onecolumngrid
\vspace{3cm}
\begin{center}%
\par\noindent\hbox{\hfill\fbox{\vbox{\textbf{APPENDIX}}}\hfill}\par\end{center}
\vspace{.5cm}
\appendix
\twocolumngrid

\section{Summary of the constructions presented in this paper}
\label{appx:SummaryStatements}

Because this paper grew rapidly from a short focused note to a collection of
multiple technical statements spanning several settings, we're summarizing the
statements made in this paper here:

\begin{itemize}
\item Warm-Up (Work Extraction): Universal work extraction with variable work
  cost~\cite{Watanabe2025arXiv_universal}, with an alternative construction
  pieced together from the techniques of Ref.~\cite{Faist2021CMP_impl}.
  $\to$~\cref{sec:WarmupWorkExtraction}

\item Main Result (Conditional Erasure): Universal implementation of conditional
  erasure to the thermal state, conditioned on a quantum memory, up to a
  work-cost dephasing, using thermal operations, using a \emph{variable} amount
  of work which is asymptotically optimal for each i.i.d.\@ input state.
  $\to$~\cref{thm:UnivCondErasureForEachIidInput} in
  \cref{sec:ConditionalErasureVariable}

\item Main Result (Quantum Process): Universal implementation of any arbitrary
  time-covariant i.i.d.\@ process $\mathcal{E}^{\otimes n}$, up to a work-cost
  dephasing, using thermal operations, using a \emph{variable} amount of work
  which is asymptotically optimal for each i.i.d.\@ input state.
  $\to$~\cref{thm:UnivImplPerIidInputCost} in \cref{sec:MainResult}

\item (Clarification regarding Ref.~\cite{Faist2021CMP_impl}:) Implementation of
  any arbitrary time-covariant i.i.d.\@ process $\mathcal{E}^{\otimes n}$, using
  thermal operations, universal with respect to an arbitrary set of states
  $\hat{\mathscr{S}}_X$, with a deterministic amount of work given by the worst
  case over that set.
  $\to$~\cref{thm:UniversalTheoremCovariantInputSet} in \cref{appx:UniversalWithSetOfInputs}

\item (Clarification regarding Ref.~\cite{Faist2021CMP_impl}:) A similar
  statement, but for arbitrary processes $\mathcal{E}_{X\to X'}$ and using
  Gibbs-preserving maps.
  $\to$~\cref{thm:UniversalTheoremGpmInputSet} in \cref{appx:UniversalWithSetOfInputs}

\item Main Proof Technique: \emph{Semiuniversal} implementation of conditional
  erasure to the thermal state, conditioned on a quantum memory, up to a
  work-cost dephasing, using thermal operations, using a deterministic amount of
  work $w$, universal with respect to all i.i.d.\@ states whose conditional
  erasure work cost does not exceed $w$.
  $\to$~\cref{thm:LemmaUnivErasureW0} in
  \cref{appx:UnivImplPerIidInputCondErasureProofs}

\item Gibbs-preserving maps: Universal implementation of any arbitrary i.i.d.\@
  process $\mathcal{E}^{\otimes n}$ (need not be time-covariant), up to a
  work-cost dephasing, using Gibbs-preserving maps, using a \emph{variable}
  amount of work which is asymptotically optimal for each i.i.d.\@ input state.
  $\to$~\cref{thm:UnivVarWorkIidGPM} in \cref{appx:UnivVariableImplWithGPM}

\item Work-cost-dephasing typically irrelevant without the reference system $R$:
  Both in the case of conditional erasure as well as in the implementation of a
  time-covariant process, if we ignore the reference system $R$ and if the input
  state is time covariant and permutation invariant, then the
  work-cost-dephasing has no effect.
  $\to$~\cref{thm:WrkCostDephasBipartPINoRef,%
    thm:WrkCostDephasProcessPINoRefStateTransf} in
  \cref{sec:WorkCostCovariance}

\item Input states that are strongly affected by the work-cost-dephasing: We
  find an explicit example of an input state whose correlations with $R$ are
  drastically altered by the work-cost-dephasing.  This dephasing, therefore, is
  necessary to accurately describe the action of our variable-work-cost
  implementation.
  $\to$~\cref{appx:UnivVariableImplWithGPM}
\end{itemize}

All statements except the last one are proven using thermal operations and
assume the process to be time-covariant.

\section{Universal implementation of any time-covariant i.i.d.\@ process
  with deterministic work for arbitrary sets of input states}
\label{appx:UniversalWithSetOfInputs}

Here, we adapt Theorem~7.1 of Ref.~\cite{Faist2021CMP_impl}, relaxing the
diamond norm requirement to a particular subset of all input states.
At a high level, the only adaptation required in the proof is to use the
universal typical conditional projector not associated with the worst-case work
cost over all inputs (which gives the thermodynamic capacity), but over the
particular set of states that we wish to consider.  The rest of the proof is
straightforward.

\begin{theorem}
  \label{thm:UniversalTheoremCovariantInputSet}
  Let $X$ be a quantum system, $H_X$ be a Hamiltonian, $\beta>0$, and
  $\mathcal{E}_{X\to X}$ a time-covariant completely positive, trace-preserving
  map.  Let $\hat{\mathscr{S}}_X$ be an arbitrary set of quantum states on $X$.
  Write $\Gamma_X = \ee^{-\beta H_X}$.
  Let $\delta>0$ small enough and $n\in\mathbb{N}$ large enough.  Then there
  exists a thermal operation $\Phi_{X^nW}$ along with battery states
  $\tau_W^{(E)}$, $\tau_W^{(E')}$, and a $\eta>0$, such that:
  \begin{enumerate}[(i)]
  \item The total work cost satisfies
    \begin{align}
      E - E'
      &= n \max_{\sigma_X\in\hat{\mathscr{S}}_X} 
      \beta^{-1} \bigl[ \DD{\mathcal{E}_X(\sigma_X)}{\Gamma_X}
        -
        \DD{\sigma_X}{\Gamma_X}
      \bigr]
      \nonumber\\[.5ex]&\quad\ 
        + 5n\beta^{-1}\delta + O(1)\ ;
      \label{eq:UniversalTheoremCovariantInputSetWorkCost}
    \end{align}
  \item Let $\sigma_{X^nR}$ be any quantum system on $X^n$ and a reference system
    $R$ such that $\sigma_{X^n} = \tr_R(\sigma_{X^nR})$ is in the convex hull of
    $`{ \sigma_X^{\otimes n}\ :\ \sigma_X\in\hat{\mathscr{S}}_X}$.  Then
    \begin{multline}
      \frac12\onenorm[\big]{
      \tr_W`[\Phi_{X^nW}(\sigma_{X^nR}\otimes\tau_W^{(E)})]
      - 
      \mathcal{E}_X^{\otimes n}(\sigma_{X^nR})
      }
      \\
      \leq \poly(n)\,\ee^{-n\eta}\ .
    \end{multline}
  \end{enumerate}
\end{theorem}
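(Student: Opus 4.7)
The plan is to follow the three-step strategy underlying Theorem~7.1 of Ref.~\cite{Faist2021CMP_impl}: (a)~borrow ancillary systems $E^n$ initialized in a pure energy eigenstate $\ket{0}_E^{\otimes n}$ at cost $nF_E$ of work; (b)~apply the Stinespring dilation $V_{X\to XE}^{\otimes n}$ of $\mathcal{E}^{\otimes n}$, which may be chosen energy-conserving with respect to some Hamiltonian $H_E$ since $\mathcal{E}$ is time-covariant, hence as a free thermal operation; (c)~thermodynamically reset $E^n$ conditioned on $X^n$ using a semiuniversal conditional erasure protocol with a deterministic work budget tuned to $\hat{\mathscr{S}}_X$. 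Discarding $E^n$ then yields the desired $\mathcal{E}^{\otimes n}$ on $X^n$.

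The key adaptation of Ref.~\cite{Faist2021CMP_impl} lies in step~(c). Rather than the universal conditional erasure calibrated to the worst case over \emph{all} states (which would yield the thermodynamic capacity), I would invoke the semiuniversal protocol of Proposition~7.1 of Ref.~\cite{Faist2021CMP_impl} at the deterministic per-copy threshold
\begin{align*}
  w^\star = \max_{\sigma_X\in\hat{\mathscr{S}}_X} \beta^{-1}\bigl[\DD{\mathcal{E}(\sigma_X)}{\Gamma_X} - \DD{\sigma_X}{\Gamma_X}\bigr] + O(\delta)\ .
\end{align*}
The associated typicality projector $P^{(w^\star)}_{E^nX^n}$ is built from Schur-Weyl block projectors combined with global energy measurements on $X^n$ and $E^nX^n$, summed over outcomes whose implied conditional free energy does not exceed $w^\star$. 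This is precisely the projector produced by the construction of Section~7 of Ref.~\cite{Faist2021CMP_impl}, with its typicality threshold restricted to values achievable by states in $\hat{\mathscr{S}}_X$ rather than by all states.

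Correctness on i.i.d.\@ inputs $\sigma_X^{\otimes n}$ with $\sigma_X\in\hat{\mathscr{S}}_X$ then follows from the Schur-Weyl and energy-type-class concentration bounds of Refs.~\cite{PhDHarrow2005,Haah2017IEEETIT_sampleoptimal}: the post-dilation state $(V\sigma_X V^\dagger)^{\otimes n}$ is supported, up to error $\poly(n)\,\ee^{-n\eta}$, inside $P^{(w^\star)}_{E^nX^n}$. Because the full thermal operation can be made permutation-invariant, its action is linear in the input, so correctness transfers to arbitrary convex combinations of such i.i.d.\@ states, and to their purifications $\sigma_{X^nR}$ via the standard fidelity-to-trace-distance conversion used in Ref.~\cite{Faist2021CMP_impl}. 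Adding $nF_E$ from step~(a) to the deterministic $nw^\star$ of step~(c) reproduces~\eqref{eq:UniversalTheoremCovariantInputSetWorkCost}, with the $5n\beta^{-1}\delta$ slack absorbing the Schur-Weyl spectrum-smoothing overheads and the $O(1)$ term collecting the finite constants of Ref.~\cite{Faist2021CMP_impl}.

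The main obstacle is confirming that restricting the typicality projector to $\hat{\mathscr{S}}_X$ does not spoil the uniform exponential-error bound: one must check that a single $\delta$-smoothed Schur-Weyl plus energy projector can be made simultaneously typical for every $\sigma_X\in\hat{\mathscr{S}}_X$. This reduces to pushing the supremum over $\hat{\mathscr{S}}_X$ inside the concentration inequalities of Refs.~\cite{PhDHarrow2005,Haah2017IEEETIT_sampleoptimal}, which is routine because those bounds depend on the input only through quantities that we have already made worst-case by our choice of $w^\star$.
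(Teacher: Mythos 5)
Your proposal follows essentially the same route as the paper's proof: both reduce \cref{thm:UniversalTheoremCovariantInputSet} to Theorem~7.1 of Ref.~\cite{Faist2021CMP_impl}, replacing the all-states threshold by the worst case over $\hat{\mathscr{S}}_X$ inside the universal typicality projector of Proposition~6.1 and feeding the result to the semiuniversal conditional erasure of Proposition~7.1, then tallying the $-nF_E$ ancilla-reset cost against the deterministic erasure budget. Two small caveats: the ancilla reset costs $-nF_E$ (not $+nF_E$) work, and the extension from i.i.d.\@ states to purifications of convex combinations is not a consequence of linearity alone but of the fidelity guarantee already built into Proposition~7.1 (Uhlmann-type argument), which the paper simply delegates to that proposition.
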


\begin{proof}[**thm:UniversalTheoremCovariantInputSet]
  The proof proceeds exactly as the proof of Theorem~7.1 in
  Ref.~\cite{Faist2021CMP_impl}, with the following adaptations.  Let
  $V_{XE}$, $\ket{0}_E$, $H_E$, $F_E$, $Z_E$, be as in the original proof, and
  let $\Gamma_E=\ee^{-\beta H_E}$,
  $\Gamma_{XE} = \Gamma_X\otimes\Gamma_E = \ee^{-\beta(H_X+H_E)}$.  Let
  \begin{align}
    x = \min_{\sigma_X\in\hat{\mathscr{S}}_X} `\big[
     \DD{\sigma_X}{\Gamma_X}
    - \DD{\mathcal{E}(\sigma_X)}{\Gamma_X}
    ] \geq \beta F_E\ .
  \end{align}
  For any $\sigma_X$, let
  $\rho_{XE}(\sigma_X) = V_{XE}(\proj0_E\otimes\sigma_X)V_{XE}^\dagger$ be as in
  the original proof, and let
  \begin{align}
    \mathscr{S}_{E^nX^n} = `\Big{
    [\rho_{EX}(\sigma_X)]^{\otimes n}\ :\ \sigma_X\in\mathscr{S}_X
    }\ .
  \end{align}
  For all $\rho_{EX}^{\otimes n}\in\mathscr{S}_{E^nX^n}$, we have by
  definition of $x$ that
  \begin{align}
    \DD{\rho_{EX}(\sigma_X)}{\Gamma_{XE}}
    - 
    \DD{\rho_{X}(\sigma_X)}{\Gamma_{X}}
    \geq x\ .
  \end{align}
  As in the original proof, let $P_{E^nX^n}^{x,\delta}$ be the universal typical
  and relative conditional operator provided by Proposition~6.1 of
  Ref.~\cite{Faist2021CMP_impl}, noting as in the original proof that
  $P_{E^nX^n}^{x,\delta}$ is a projector commuting with $\Gamma_X^{\otimes n}$
  and $\Gamma_{EX}^{\otimes n}$.  As in the original proof, we have
  \begin{align}
  \tr`\big[ P_{E^nX^n}^{x,\delta} \rho_{EX}^{\otimes n} ]
    &\geq 1 - \poly(n)\,\ee^{-n\eta}
      \nonumber\\
    \tr`\big[P_{E^nX^n}^{x,\delta} \rho_X^{\otimes n}\otimes\gamma_E^{\otimes n}]
    &\leq \poly(n)\,\ee^{-n(x - \beta F_E - 4\delta)}\ ,
  \end{align}
  for all $\rho_{EX}\in\mathscr{S}_{E^nX^n}$, for some $\eta>0$ independent of
  $\rho$ and $n$, and where $\gamma_E=\Gamma_E/\tr(\Gamma_E)$.  We assume
  $\eta\leq\delta$, replacing $\eta$ by $\delta$ if necessary without
  consequence for the proof.  As in the original proof, let $J$ be a quantum
  register, $m = \log \lfloor \ee^{n(x - \beta F_E - 5\delta)} \rfloor$, and let
  $\mathscr{R}_{E^nX^nJ}$ be the thermal operation furnished by Proposition~7.1
  of Ref.~\cite{Faist2021CMP_impl} for $\kappa,\kappa'=\poly(n)\,\ee^{-n\eta}$.
  (We assume $x-\beta F_E - 5\delta > 0$, or else a trivial thermal operation
  that extracts no work can be used as in the original proof.)  The thermal
  operation extracts an amount of work, via the $J$ register, given by
  \begin{align}
    \beta^{-1} m
    &= \beta^{-1} n(x - \beta F_E - 5\delta) + \nu\ ,
  \end{align}
  where $0\leq \nu\leq\log(2)$ accounts for a possible rounding error due to the
  floor function.  As in the original proof, accounting for the work cost $-nF_E$
  to prepare the ancillary systems $E$ in their $\ket0_E$ state, the total work
  cost of the protocol is then exactly the claimed
  amount~\eqref{eq:UniversalTheoremCovariantInputSetWorkCost}.
  It is left to prove that the protocol performs accurately for all desired
  input states.  But this property follows immediately from the guarantees on
  the conditional erasure thermal operation $\mathcal{R}_{E^nX^nJ}$ provided by
  Proposition~7.1 of Ref.~\cite{Faist2021CMP_impl}.
\end{proof}

A similar adaptation is possible for Theorem~6.1 of
Ref.~\cite{Faist2021CMP_impl}, relaxing the diamond norm requirement to a
particular subset of all input states.  This provides an implementation of any
arbitrary i.i.d.\@ process (not necessarily time-covariant), based on
Gibbs-preserving maps, and which is universal for a particular set of input
states.

\begin{theorem}
  \label{thm:UniversalTheoremGpmInputSet}
  Let $\Gamma_X,\Gamma_{X'}>0$, $\mathcal{E}_{X\to X'}$ be an arbitrary cp.\@
  tp.\@ map, and let $\epsilon>0$.  Let $\hat{\mathcal{S}}_X$ be an arbitrary
  set of states on $X$.  Then for $\delta>0$ and $n\in\mathbb{N}$ sufficiently
  large, there exists a completely positive, trace-nonincreasing map
  $\mathcal{T}_{X^n\to X'^n}$ satisfying:
  \begin{enumerate}[label=(\roman*)]
  \item
    $\mathcal{T}_{X^n\to X'^n}`(\Gamma_X^{\otimes n}) \leq
    \ee^{n[y+4\delta+n^{-1}\log\poly(n)]}\Gamma_{X'}^{\otimes n}$ with
    \begin{align}
      y = \max_{\sigma\in\hat{\mathscr{S}}_X}`\big[
      \DD{\mathcal{E}_{X\to X'}(\sigma_X)}{\Gamma_{X'}} - \DD{\sigma_X}{\Gamma_X} ]\ .
    \end{align}
  \item For any $\sigma_{X^nR}$ such that $\sigma_{X^n}$ lies in the convex hull
    of $`\big{ \sigma^{\otimes n}\;:\; \sigma\in\hat{\mathscr{S}}_X }$, we have
    \begin{align}
      \frac12\onenorm[\big]{
      \mathcal{T}_{X^n\to X'^n}(\sigma_{X^nR})
      - \mathcal{E}_{X\to X'}^{\otimes n}(\sigma_{X^nR})
      } \leq \epsilon\ .
    \end{align}
  \end{enumerate}
\end{theorem}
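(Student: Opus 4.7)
The plan is to mirror precisely the adaptation strategy used for \cref{thm:UniversalTheoremCovariantInputSet}: run the proof of Theorem~6.1 of Ref.~\cite{Faist2021CMP_impl} essentially unchanged, except that the universal typical operator (whose parameters were originally tied to a worst case over \emph{all} input states) is replaced by one tailored to the restricted family $\hat{\mathscr{S}}_X$.

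Concretely, I would first identify the single place in the original proof where the input-state-dependent quantity enters: it is the exponent governing both the Gibbs-trace bound and the high-overlap guarantee of the universal typical projector supplied by Proposition~6.1 of Ref.~\cite{Faist2021CMP_impl}. I would then invoke that same proposition, possibly after passing through a Stinespring dilation $V_{X\to X'E}$ (as was done in the adaptation of the thermal-operation case), but with parameter
\begin{align*}
  y = \max_{\sigma_X\in\hat{\mathscr{S}}_X}\bigl[\DD{\mathcal{E}_{X\to X'}(\sigma_X)}{\Gamma_{X'}} - \DD{\sigma_X}{\Gamma_X}\bigr],
\end{align*}
so that the resulting projector $P$ has overlap $\geq 1 - \poly(n)\,\ee^{-n\eta}$ with $(V\sigma V^\dagger)^{\otimes n}$ for every $\sigma \in \hat{\mathscr{S}}_X$, while its trace against $(\Gamma_{X'}\otimes\Ident_E)^{\otimes n}$ is bounded by $\poly(n)\,\ee^{n(y+4\delta)}$.

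The map $\mathcal{T}_{X^n\to X'^n}$ is then constructed by sandwiching $\mathcal{E}^{\otimes n}$ with $P$ exactly as in the original proof of Theorem~6.1. Condition~(i) reduces to the Gibbs-trace bound on $P$, with the polynomial prefactor absorbed into the $n^{-1}\log\poly(n)$ correction; condition~(ii) reduces to the high-overlap bound combined with the gentle-measurement lemma applied to each pure i.i.d.\@ purification $\ket\sigma_{XR}^{\otimes n}$ with $\sigma\in\hat{\mathscr{S}}_X$, then extended by convexity in $\sigma_{X^n}$ to arbitrary states whose $X^n$-marginal lies in the convex hull of the i.i.d.\@ input family. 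There is no real obstacle: the only care required is to verify that every estimate in the original proof depends on the input state only through the relative-entropy difference being maximized—so that shrinking the maximization domain from all states to $\hat{\mathscr{S}}_X$ preserves the argument line-by-line—and to track the polynomial prefactors, the smoothing parameter, and the $\delta$-slack in the typicality exponent so they collapse into exactly the $4\delta + n^{-1}\log\poly(n)$ correction stated in condition~(i).
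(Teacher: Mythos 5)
Your high-level strategy matches the paper's: restrict the maximum to $\hat{\mathscr{S}}_X$, feed that parameter into Proposition~6.1 of Ref.~\cite{Faist2021CMP_impl}, go through a Stinespring dilation $V_{X\to X'E}$, and define $\mathcal{T}$ by conjugating $V^{\otimes n}$ with the resulting universal typical operator before tracing out $E^n$. Condition~(i) does follow from the Gibbs-trace bound on that operator, and the constants collapse as you say. So far so good, and this is the route the paper takes.

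However, there are two problems with your write-up. The minor one: you call the universal operator a ``projector.'' In the Gibbs-preserving-map setting (Theorem~6.1 of the reference, as opposed to the thermal-operations Theorem~7.1), the operator $M^{x,\delta}_{E^nX'^n}$ is \emph{not} generically a projector, and may not even be Hermitian; the present paper flags this explicitly in Appendix~\ref{appx:UnivVariableImplWithGPM}. It only satisfies $M^\dagger M \leq \Ident$.

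The genuine gap is in your last step for condition~(ii): ``gentle-measurement applied to each pure i.i.d.\@ purification $\ket\sigma_{XR}^{\otimes n}$, then extended by convexity in $\sigma_{X^n}$.'' Taken literally, this does not work. If $\sigma_{X^n}=\sum_j p_j\,\sigma_j^{\otimes n}$ and $\ket\sigma_{X^nR}$ purifies it, the one-norm difference $\onenorm{(\mathcal{T}-\mathcal{E}^{\otimes n})(\sigma_{X^nR})}$ cannot be bounded by $\sum_j p_j\,\onenorm{(\mathcal{T}-\mathcal{E}^{\otimes n})(\sigma_{j,X^nR})}$, because the purification of the mixture contains coherent cross terms $\ketbra[\big]{\sigma^{(j)}}{\sigma^{(k)}}^{\otimes n}\otimes\ketbra{j}{k}_{R'}$ that are not controlled by the per-$j$ bounds; equivalently, the function $\sigma_{X^n}\mapsto\onenorm{(\mathcal{T}-\mathcal{E}^{\otimes n})(\sigma_{X^nR})}$ on purifications is \emph{not} convex in $\sigma_{X^n}$. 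What the paper's proof actually uses is that the quantity which \emph{is} extended to the convex hull is not the trace-norm error but the Uhlmann overlap: since $W = M^{x,\delta}_{E^nX'^n}V^{\otimes n}$ acts trivially on the reference, $\Re\big\{\bra{\sigma}_{X^nR^nR'}\,(V^\dagger)^{\otimes n}W\,\ket{\sigma}_{X^nR^nR'}\big\} = \Re\tr\big[(V^\dagger)^{\otimes n}M^{x,\delta}V^{\otimes n}\,\sigma_{X^n}\big]$, a linear functional of $\sigma_{X^n}$, so the high-overlap guarantee transfers from the i.i.d.\@ family to its convex hull by linearity. Only \emph{after} that does one convert fidelity to trace distance (Fuchs--van de Graaf), applied once to the purification of the mixture. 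The fix to your argument is therefore to swap the order of the two steps: extend the overlap by linearity first, then invoke gentle measurement / Fuchs--van de Graaf once.
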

\begin{proof}[**thm:UniversalTheoremGpmInputSet]
  The proof proceeds exactly as the proof of Theorem~6.1 in
  Ref.~\cite{Faist2021CMP_impl}, with the following adaptations.  Let
  \begin{align}
    x = -\max_{\sigma_X\in\hat{\mathscr{S}}_X} `\Big{
    \DD{\mathcal{E}(\sigma_X)}{\Gamma_X} - \DD{\sigma_X}{\Gamma_X}
    }\ .
  \end{align}
  The operator $M_{E^nX^n}^{x,\delta}$ constructed in the original proof thanks
  to Proposition~6.1 of Ref.~\cite{Faist2021CMP_impl} has exactly the desired
  properties for all $\ket\rho_{X'ER}$ obtained from any $\ket\sigma_{XR}$ with
  $\sigma_X\in\hat{\mathscr{S}}_X$, by construction.
  Let $V_{X\to X'E}$ and
  $W_{X^n\to X'^nE^n} = M_{E^nX'^n}^{x,\delta} V_{X\to X'E}^{\otimes n}$ be as
  in the original proof.
  Let $\sigma_{X^n} = \sum p_j \sigma_X^{(j)}$ with
  $\sigma_X^{(j)}\in\hat{\mathscr{S}}_X$ be any convex combination of i.i.d.\@
  states from $\hat{\mathscr{S}}_X$.  Let $R'$ be a quantum register such that
  we can construct a purification of $\sigma_{X^n}$ as
  \begin{align}
    \ket{\sigma}_{X^nR^nR'}
    = \sum_j \, \sqrt{p_j} \, \ket{\sigma^{(j)}}_{XR}^{\otimes n}
    \otimes\ket{j}_{R'}\ ,
  \end{align}
  where $\ket{\sigma^{(j)}}_{XR}$ is a purification of $\sigma_X^{(j)}$.
  Using Uhlmann's theorem,
  \begin{align}
    \hspace*{1em}&\hspace*{-1em}
    F`\Big( \mathcal{E}_{X\to X'}^{\otimes n}(\sigma_{X^nR^nR'}) ,
                   \mathcal{T}_{X^n\to X'^n}(\sigma_{X^nR^nR'}) )
    \nonumber\\
    &\geq\Re`\Big{
    \bra{\sigma}_{X^nR^nR'} (V^\dagger)^{\otimes n}
    W_{X^n\to X'^nE^n} \ket{\sigma}_{X^nR^nR'}
    }
    \nonumber\\
    &=
    \Re`\Big{
    \bra{\sigma}_{X^nR^nR'} (V^\dagger)^{\otimes n}
      M_{E^nX'^n}^{x,\delta}
      (V)^{\otimes n} \ket{\sigma}_{X^nR^nR'}
    }
    \nonumber\\
    &= \sum_j p_j
    \Re`\Big{
    \bra{\sigma^{(j)}}_{XR}^{\otimes n} (V^\dagger)^{\otimes n}
      M_{E^nX'^n}^{x,\delta}
      (V)^{\otimes n}
      \ket{\sigma^{(j)}}_{XR}^{\otimes n}
    }
    \nonumber\\
    &\geq 1 - \poly(n)\,\ee^{-n\xi}\ ,
  \end{align}
  where $\xi>0$ independent of $\sigma$ and $n$ is given to us from
  Proposition~6.1 of Ref.~\cite{Faist2021CMP_impl}.
\end{proof}

\section{Universal conditional work-cost-dephased erasure
  with optimal per-i.i.d.\@ input work cost}
\label{appx:UnivImplPerIidInputCondErasureProofs}

In this appendix, we focus on the task of conditional
erasure~\cite{delRio2011Nature,Faist2015NatComm}, and prove
\cref{thm:UnivCondErasureForEachIidInput} of
\cref{sec:ConditionalErasureVariable}.  As a reminder, this conditional erasure
protocol is the key step to proving our main result for general time-covariant
processes in \cref{sec:MainResult}.

A key step to proving \cref{thm:UnivCondErasureForEachIidInput} is a
semiuniversal version of the conditional erasure protocol of our earlier
Ref.~\cite{arXiv:1911.05563}.  The statement is formalized in the following
lemma.  It is essentially an immediate consequence of Proposition~7.1 and
Proposition~6.1 with a fixed work threshold.

To state our lemma, we define for any $w_0$ the projector onto all the outcomes
of $P^{(w)}_{E^nX^n}$ associated with $w\leq w_0$:
\begin{align}
  P^{(\leq w_0)}_{E^nX^n} = \sum_{w\leq w_0} P_{E^nX^n}^{(w)}\ .
\end{align}

\begin{lemma}
  \label{thm:LemmaUnivErasureW0}
  Let $E,X$ be quantum systems with Hamiltonians $H_E$, $H_X$, and let
  $\Gamma_X = e^{-\beta H_X}$, $\Gamma_E = e^{-\beta H_E}$,
  $\Gamma_{XE} = \Gamma_X\otimes\Gamma_E$ and $\gamma_E=e^{-\beta H_E}$.  Let
  $\delta>0$ and $w_0 \in \mathbb{R}$.  Then there exists a thermal operation
  $\mathcal{R}^{(w_0)}_{E^nX^nW}$ acting on a battery $W$ and a $\eta>0$ such
  that for any state $\hat\rho_{E^nX^n}$ in the convex hull of
  \begin{align}
    \hspace*{1em}&\hspace*{-1em}
    \mathscr{S}_{E^nX^n}^{(w_0)} =
    \nonumber\\[1ex]
    &`\Big{ \rho^{\otimes n}_{EX}\,:\; \beta^{-1} `\big[
    D(\rho_X\Vert\Gamma_X) -  D(\rho_{EX}\Vert\Gamma_{EX}) ] \leq w_0 }
      \nonumber\\
    &\cup\ 
    `\Big{
    \hat\rho_{E^nX^n}\,:\:
    \tr`\big[P^{(\leq w_0)}_{E^nX^n} \hat\rho_{E^nX^n}] = 1
    }
    \ ,
    \label{eq:LemmaUnivErasureW0SetOfInputStates}
  \end{align}
  we have
  \begin{multline}
    F`\Big( 
      \mathcal{R}^{(w_0)}_{E^nX^nW}`\big[ \hat\rho_{E^nX^nR} \otimes \tau^{(E_0)}_W ]
      ,
      \gamma_E^{\otimes n} \otimes \hat\rho_{X^nR} \otimes \tau^{(E')}_W
    )
    \\[1ex]
    \geq 1 - \poly(n)\,e^{-n\eta}\ ,
    \label{eq:LemmaUnivErasureW0Correct}
  \end{multline}
  where $\ket{\hat\rho}_{E^nX^nR}$ is a purification of $\hat\rho_{E^nX^n}$, and where
  $E_0, E'$ are such that
  \begin{align}
    E_0 - E' \leq n(w_0 + 5\beta^{-1}\delta + F_E) + \log(2)\ .
    \label{eq:LemmaUnivErasureW0WorkCost}
  \end{align}
\end{lemma}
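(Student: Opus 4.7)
The plan is to instantiate Proposition~7.1 of Ref.~\cite{arXiv:1911.05563} with a universal typical projector supplied by Proposition~6.1 of the same reference, evaluated at threshold parameter $x = \beta w_0$. I would denote the resulting projector $P^*_{E^nX^n}$; by the construction of Proposition~6.1 it commutes with $\Gamma_X^{\otimes n}$ and $\Gamma_{EX}^{\otimes n}$, captures weight at least $1-\poly(n)\,\ee^{-n\eta}$ on any i.i.d.\ state $\rho_{EX}^{\otimes n}$ with conditional-erasure work cost at most $w_0$, and has overlap at most $\poly(n)\,\ee^{-n(\beta w_0 - \beta F_E - 4\delta)}$ with $\rho_X^{\otimes n}\otimes\gamma_E^{\otimes n}$. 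Feeding $P^*$ into Proposition~7.1 with roughly $m \approx n(\beta w_0 - \beta F_E - 5\delta)/\log 2$ bits to be extracted, and separately paying $-nF_E$ work to prepare the ancillary $E^n$ systems in $\ket{0}_E^{\otimes n}$, yields a thermal operation $\mathcal{R}^{(w_0)}_{E^nX^nW}$ whose total work cost matches~\eqref{eq:LemmaUnivErasureW0WorkCost}.

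The subtlety that needs care is reconciling the two alternative sufficient conditions appearing in~\eqref{eq:LemmaUnivErasureW0SetOfInputStates}. The key observation is that the universal typical projector from Proposition~6.1 is, by construction, the sum of the mutually commuting joint operators $\Pi_{(EX)^n}^\lambda \, \Pi_{X^n}^\mu \, R_{E^nX^n}^{E_k} \, S_{X^n}^{E_\ell}$ over tuples $(\lambda,\mu,E_k,E_\ell)$ whose associated estimator of the conditional free-energy difference lies below the threshold. Comparing with~\eqref{eq:WorkCostMeasEnvXpP}, this gives exactly $P^{(\leq w_0 + c\beta^{-1}\delta)}_{E^nX^n}$ for some absolute constant $c$. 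Consequently, any state supported in the range of $P^{(\leq w_0)}_{E^nX^n}$ is automatically supported in the range of $P^*$, which is the subspace on which Proposition~7.1 acts correctly.

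With $P^*$ identified, the fidelity claim~\eqref{eq:LemmaUnivErasureW0Correct} follows as in the original proof in Ref.~\cite{arXiv:1911.05563}: on states supported in the range of $P^*$, Proposition~7.1 directly implements conditional erasure to $\gamma_E^{\otimes n}$ while preserving the reduced state on $X^nR$ and leaving the battery in $\tau_W^{(E')}$, up to a fidelity penalty $\poly(n)\,\ee^{-n\eta}$; for i.i.d.\ inputs below threshold, a gentle-measurement argument using $\tr[P^* \rho_{EX}^{\otimes n}] \geq 1 - \poly(n)\,\ee^{-n\eta}$ reduces the analysis to that case. Joint convexity of the trace distance extends the guarantee to convex combinations in $\mathscr{S}_{E^nX^n}^{(w_0)}$, after which Fuchs--van de Graaf converts back to the stated fidelity bound. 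The principal technical obstacle is the step in the previous paragraph: verifying that the universal typical projector of Proposition~6.1 equals $P^{(\leq w_0 + O(\delta))}_{E^nX^n}$ requires inspecting its explicit construction in Ref.~\cite{arXiv:1911.05563} and carefully matching the $\delta$-slack in the threshold with the rounding of the Schur--Weyl entropy estimates $S(\bar\lambda)$, $S(\bar\mu)$ and the energy bins.
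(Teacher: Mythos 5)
Your high-level strategy is exactly the paper's: instantiate Proposition~6.1 of Ref.~\cite{arXiv:1911.05563} to get a universal typical projector, observe that this projector equals $P^{(\leq w_0 + c\beta^{-1}\delta)}_{E^nX^n}$ (the paper pins down $c=4$), feed it into Proposition~7.1, and note that membership in the range of $P^{(\leq w_0)}_{E^nX^n}$ automatically implies membership in the range of the typical projector, which reconciles the two alternative sufficient conditions in~\eqref{eq:LemmaUnivErasureW0SetOfInputStates}. This identification is the correct key step. However, there are two substantive errors in the details.

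First, you have a sign error: the threshold parameter for Proposition~6.1 should be $x = -\beta w_0$, not $x = \beta w_0$. The condition $\beta^{-1}[D(\rho_X\Vert\Gamma_X) - D(\rho_{EX}\Vert\Gamma_{EX})] \leq w_0$ rearranges to $D(\rho_{EX}\Vert\Gamma_{EX}) - D(\rho_X\Vert\Gamma_X) \geq -\beta w_0$, which is the form required by Proposition~6.1 with $x=-\beta w_0$. Accordingly the overlap bound on $\rho_{X^n}\otimes\gamma_E^{\otimes n}$ should have decay rate $-\beta w_0 - \beta F_E - 4\delta$, and $m$ should be $\approx n(-\beta w_0 - \beta F_E - 5\delta)$ (in nats), not $n(\beta w_0 - \beta F_E - 5\delta)$. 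If you carry your sign through, the work bookkeeping does not land on~\eqref{eq:LemmaUnivErasureW0WorkCost}. Relatedly, the paper needs a preliminary case split: if $w_0 \geq w_{\max} - 5\beta^{-1}\delta$ it falls back on the fully universal protocol, because otherwise $m$ could fail to be nonnegative; you omit this.

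Second, the ``separately paying $-nF_E$ work to prepare the ancillary $E^n$ systems in $\ket 0_E^{\otimes n}$'' step does not belong here. In \cref{thm:LemmaUnivErasureW0}, the $E^n$ systems are given as part of the input and are to be reset to $\gamma_E^{\otimes n}$; there is no fresh preparation of $E^n$. That preparation step is part of the proof of \cref{thm:UnivImplPerIidInputCost}, where $E^n$ are borrowed thermal ancillas for the Stinespring dilation. The $nF_E$ term in~\eqref{eq:LemmaUnivErasureW0WorkCost} enters through the normalization $\gamma_E = \Gamma_E/\tr(\Gamma_E)$ when converting the Proposition~6.1 bound $\tr_{E^n}[P^*\Gamma_E^{\otimes n}] \leq \poly(n)\,\ee^{-n(-\beta w_0 - 4\delta)}\Ident_{X^n}$ into a bound against $\gamma_E^{\otimes n}\otimes\rho_{X^n}$, picking up the factor $(Z_E)^{-n} = \ee^{n\beta F_E}$. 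Conflating these two roles of $F_E$ would lead to double-counting when you plug the lemma into the main theorem.
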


This lemma is essentially the same as
\cref{thm:UniversalTheoremCovariantInputSet}, stated for the task of conditional
erasure, with some subtle but minor differences.  Here, the thermal operation
$\mathcal{R}^{(w_0)}_{E^nX^nW}$ resets $E$ to the thermal state, while
preserving the state $\rho_{XR}$ including correlations with purifications; it
does so by expending approximately $w_0$ work per copy, asymptotically.  It
succeeds for all states whose state-dependent erasure work cost does not exceed
$w_0$.  Not only does it successfully conditionally erase $E^n$ for any i.i.d.\@
state $\rho_{XE}^{\otimes n}$, but it also does so for convex combinations of
such states, while still preserving correlations between $X^n$ with any
purification of such states.

With this lemma, we can prove \cref{thm:UnivCondErasureForEachIidInput}:
\begin{proof}[*thm:UnivCondErasureForEachIidInput]
  The protocol will proceed as follows.  First, we perform a measurement on
  $\rho_{XE}^{\otimes n}$ that tests how much work we expect to have to spend to
  carry out a state-dependent protocol, described by the POVM
  $`\big{ P^{(w)}_{E^nX^n} }$.  Depending on the measurement result, we apply a
  corresponding semiuniversal protocol given by the lemma.

  We can now build our full protocol.  For any $w$, let
  $\mathcal{R}^{(w+\beta^{-1}\delta)}_{E^nX^nW}$ be the thermal operation
  furnished by \cref{thm:LemmaUnivErasureW0} for $w_0 = w + \beta^{-1}\delta$.
  Let
  \begin{align}
    \mathcal{R}'_{E^nX^nW} = \sum_{w\in\mathscr{W}}
    \mathcal{R}^{(w+\beta^{-1}\delta)}_{E^nX^nW}
        `\big( P^{(w)}_{X^nE^n} `(\cdot) P^{(w)}_{X^nE^n} )\ .
  \end{align}
  The process $\mathcal{R}'_{E^nX^nW}$ can be implemented by a thermal operation
  which consumes an additional amount of work $\log\poly(n)$, simply because the
  protocol applies the thermal operation
  $\mathcal{R}^{(w+\beta^{-1}\delta)}_{E^nX^nW}$ conditional on a measurement
  $P^{(w)}_{X^nE^n}$ which commutes with energy, so the overall protocol
  corresponds to a global energy-conserving controlled-unitary with a large
  bath.  The amount of work $\log\poly(n)$ arises from resetting any ancillary
  memory register that stored the measurement outcome $w$, and which serves as
  the control system of the controlled-thermal operation.  Let
  $\mathcal{R}_{E^nX^nW}$ be the global thermal operation that implements
  $\mathcal{R}'_{E^nX^nW}$, which includes an additional decrease of at most
  $\log\poly(n)$ of the battery charge in $W$.

  For any state $\ket{\rho}_{E^nX^nR}$, let
  \begin{align}
    \rho^{(w)}_{E^nX^nR}
    &= \frac1{\alpha_w} \, P^{(w)}_{E^nX^n} \rho_{E^nX^nR} P^{(w)}_{E^nX^n}\ ;
      \nonumber\\
    \alpha_w &= \tr`\Big(P^{(w)}_{E^nX^n} \, \rho_{E^nX^n})  \ .
  \end{align}
  Then
  $\tilde\rho_{E^nX^nR} =
  \mathcal{D}^{\textup{W}}_{E^nX^n}[\rho_{E^nX^nR}] = \sum_w \alpha_w
  \rho^{(w)}_{E^nX^nR}$.  We then have
  $\rho_{E^nX^n}^{(w)} \in \mathscr{S}_{E^nX^n}^{(w+\beta^{-1}\delta)}$ by
  construction (cf.\@ \cref{thm:LemmaUnivErasureW0}).  So
  \cref{thm:LemmaUnivErasureW0} tells us that
  \begin{multline}
    F`\Big(
    \tr_W`\big[\mathcal{R}^{(w+\beta^{-1}\delta)}_{E^nX^nW}`\big(
        \rho^{(w)}_{E^nX^nR} \otimes \tau_W^{(E_0)} ) ]
    \,,\;
    \rho^{(w)}_{X^nR}
    )
    \\
    \geq 1 - \poly(n)\,\ee^{-n\eta}\ .
  \end{multline}
  Therefore, by joint concavity of the fidelity,
  \begin{align}
    \hspace*{1em}&\hspace*{-1em}
    F`\Big(
    \tr_W`\big[\mathcal{R}'_{E^nX^nW}`\big(
        \rho_{E^nX^nR} \otimes \tau_W^{(E_0)} ) ]
    \,,\;
    \tilde\rho_{X^nR}
    )
    \nonumber\\
    &\geq
      \sum_{w\in\mathscr{W}} \alpha_w
      F`\bigg(
    \tr_W`\big[\mathcal{R}^{(w+\beta^{-1}\delta)}_{E^nX^nW}`\big(
        \rho^{(w)}_{E^nX^nR} \otimes \tau_W^{(E_0)} ) ]
    \,,\;
    \rho^{(w)}_{X^nR}
    )
    \nonumber\\
    &\geq 1 - \poly(n)\,\ee^{-n\eta}\ .
  \end{align}
  The ``furthermore'' part of the claim follows from
  \cref{thm:WrkCostDephasBipartPINoRef}.  This
  proves~\ref{item:thmUnivCondErasureForEachIidInputOutputFidelity}.

  We'll prove~\ref{item:thmUnivCondErasureForEachIidInputWorkCostIidInput} in a
  moment; let's first tackle
  point~\ref{item:thmUnivCondErasureForEachIidInputWorkCostAnyInput}.  Let
  $\rho_{E^nX^n}$ be any quantum state.  We write for short  as above
  $\rho^{(w)}_{E^nX^nR} = P^{(w)}_{E^nX^n} \rho_{E^nX^nR}
  P^{(w)}_{E^nX^n}/\alpha_w$ with
  $\alpha_w = \tr`\big[P^{(w)}_{E^nX^n} \, \rho_{E^nX^n}]$.  Let
  \begin{align}
    \xi = \sum_{w>w'} \tr`\big[P^{(w)}_{E^nX^n}\rho_{E^nX^n}]
    = \sum_{w>w'} \alpha_w\ .
  \end{align}
  Then
  \begin{align}
    \hspace*{1em}&\hspace*{-1em}
    \mathcal{R}'_{E^nX^nW}`\Big(
      \rho_{E^nX^n}\otimes\tau_W^{(E_0)}
    )
                   \nonumber\\
    &=
      \sum_{w\in\mathscr{W}} \alpha_w \mathcal{R}^{w+\beta^{-1}\delta}_{E^nX^nW}`\Big(
        \rho^{(w)}_{E^nX^n} \otimes \tau_W^{(E_0)}
      )
                   \nonumber\\
    &=
      \sum_{w\leq w'} \alpha_w \mathcal{R}^{w+\beta^{-1}\delta}_{E^nX^nW}`\Big(
        \rho^{(w)}_{E^nX^n} \otimes \tau_W^{(E_0)}
      )
      \nonumber\\ &\qquad
      + \sum_{w>w'} \alpha_w \mathcal{R}^{w+\beta^{-1}\delta}_{E^nX^nW}`\Big(
        \rho^{(w)}_{E^nX^n} \otimes \tau_W^{(E_0)}
      )
                   \nonumber\\
    &=
      \sum_{w\leq w'} \alpha_w \mathcal{R}^{w+\beta^{-1}\delta}_{E^nX^nW}`\Big(
        \rho^{(w)}_{E^nX^n} \otimes \tau_W^{(E_0)}
      )
      \nonumber\\ &\qquad
      + \tilde\Delta_{E^nX^n}\ ,
     \label{eq:fekohry3809iojndkls}
  \end{align}
  defining $\tilde\Delta_{E^nX^n}$ as the entire second sum and noting that
  $\tr`(\tilde\Delta_{E^nX^n})\leq\xi$.
  The guarantees of \cref{thm:LemmaUnivErasureW0} ensure that the output of
  $\mathcal{R}^{w+\beta^{-1}\delta}_{E^nX^nW}$ in each term of the first sum
  produce a state close to $\tau_W^{(E'_{w+\beta^{-1}\delta})}$ on $W$ with
  $E_0 - E'_{w+\beta^{-1}\delta}$
  satisfying~\eqref{eq:LemmaUnivErasureW0WorkCost}.
  For $w''$ in the theorem claim and for any $w\leq w'$,
  \begin{align}
    w'' &= n(w' + 6\beta^{-1}\delta + F_E) + \log\poly(n)
          \nonumber\\
    &\geq n`\big[(w + \beta^{-1}\delta) + 5\beta^{-1}\delta + F_E] + \log\poly(n)
          \nonumber\\
    &\geq E_0 - E'_{w+\beta^{-1}\delta} + \log\poly(n)\ .
  \end{align}
  For all terms in the first sum in~\eqref{eq:fekohry3809iojndkls}, we find:
  \begin{align}
    \hspace*{1em}&\hspace*{-1em}
    \tr`\Big{ \Pi_W^{\geq(E_0 - w''+\log\poly(n))}
    \mathcal{R}_{E^nX^nW}^{(w+\beta^{-1}\delta)}`\Big(
    \rho^{(w)}_{E^nX^n} \otimes \tau_W^{(E_0)}
    ) }
    \nonumber\\
    &\geq
    \tr`\Big[ \Pi_W^{\geq(E_0 - w''+\log\poly(n))}
      \tau_W^{(E'_{w+\beta^{-1}\delta})} ]
    - \poly(n)\,\ee^{-n\eta/2}
    \nonumber\\
    &=
      1 - \poly(n)\,\ee^{-n\eta/2}\ ,
  \end{align}
  where the last equality holds because
  $E_0 - w'' + \log\poly(n) \leq E'_{W+\beta^{-1}\delta}$.
  Here, the $\log\poly(n)$ term is the work cost associated with resetting the
  measurement outcome of the $`{ P^{(w)}_{E^nX^n} }$'s, which is included in
  $\mathcal{R}_{E^nX^nW}$ but not in $\mathcal{R}'_{E^nX^nW}$.
  Combining the above, we find
  \begin{align}
    \hspace*{1em}&\hspace*{-1em}
    \tr`\Big[ \Pi_W^{\geq(E_0-w'')}
    \mathcal{R}_{E^nX^nW}`\big(\rho_{E^nX^n}\otimes\tau_W^{(E_0)} ) ]
                   \nonumber\\
    &\geq
      \sum_{w\leq w'} \alpha_w `\big(1 - \poly(n)\,\ee^{-n\eta/2})
      \nonumber\\
    &\geq (1 - \xi) - \poly(n)\,\ee^{-n\eta/2}\ ,
  \end{align}
  exactly as claimed.
  
  Finally, we
  prove~\ref{item:thmUnivCondErasureForEachIidInputWorkCostIidInput}.  Let
  $\rho_{EX}$ be any quantum state, and let
  \begin{align}
    W[\tr_E;\rho]
    &= \beta^{-1}`\big[
      \DD{\rho_X}{\Gamma_X} - \DD{\rho_{EX}}{\Gamma_{EX}} ]\,.
  \end{align}
  Recalling the definition of $P^{(w)}_{X^nE^n}$, we see that for any
  $w > W[\tr_E;\rho]+\beta^{-1}\delta$, we have that at least one of
  $E_k,E'_\ell,\beta^{-1}S(\lambda/n),\beta^{-1}S(\mu/n)$ must be
  $\beta^{-1}\delta/4$-far from $\tr(H_{XE}\rho_{XE})$, $\tr(H_X \rho_X)$,
  $\beta^{-1}S(\rho_{XE})$, or $\beta^{-1}S(\rho_X)$, respectively.  By the
  properties of the projectors $R_{E^nX^n}^{E_k}$, $\Pi_{E^nX^n}^\lambda$,
  $\Pi_{X^n}^\mu$, $S_{X^n}^{E'_\ell}$, we must have that there is some $\eta>0$
  such that for all $w > W[\tr_E;\rho]+\beta^{-1}\delta$,
  \begin{align}
    \tr`\big(  P^{(w)}_{E^nX^n} \rho_{EX}^{\otimes n} ) \leq \poly(n)\,e^{-n\eta}\ ;
    \label{eq:d09iodsajfknldsaj}
  \end{align}
  see for example Propositions~2.1 and~2.2 of Ref.~\cite{arXiv:1911.05563}.  For
  simplicity, we choose this $\eta$ to be equal to the one that was given to us
  by \cref{thm:LemmaUnivErasureW0}, simply because we can replace both $\eta$'s
  by the minimum of both values.   This now means that
  \begin{align}
    \sum_{w> W[\tr_E;\rho]+\beta^{-1}\delta}
    \tr`\Big[ P_{E^nX^n}^{(w)}\, \rho_{EX}^{\otimes n} ]
    \leq \poly(n)\,\ee^{-n\eta}\ .
  \end{align}
  We can now invoke
  point~\ref{item:thmUnivCondErasureForEachIidInputWorkCostAnyInput} with
  $\rho_{E^nX^n} \equiv \rho_{EX}^{\otimes n}$ and
  $w'=W[\tr_E;\rho] + \beta^{-1}\delta$ to see that
  \begin{multline}
    \tr`\Big[ \Pi_W^{\geq (E_0 - w'')}
      \mathcal{R}_{E^nX^nW}`\big( \rho_{EX}^{\otimes n} \otimes \tau^{(E_0)}_W ) ]
    \\[.5ex]
    \geq 1 - \poly(n)\,\ee^{-n\eta/2}\ ,
  \end{multline}
  with
  \begin{align}
    w'' = n`\big( W[\tr_E;\rho]+\beta^{-1}\delta + 6\beta^{-1}\delta + F_E ) + \log\poly(n)\ ,
  \end{align}
  completing the proof.
\end{proof}

It is left for us to prove the lemma.  This proof follows closely the proof
strategy of Theorem~7.1 of Ref.~\cite{Faist2021CMP_impl}.

\begin{proof}[*thm:LemmaUnivErasureW0]
  Let
  \begin{align}
    w_{\mathrm{max}}
    &= T(\tr_X)
      \nonumber\\
    &= \max_{\rho'} `\big[
    \beta^{-1} D(\rho'_X\Vert\Gamma_X)
    - \beta^{-1} D(\rho'_{EX}\Vert\Gamma_{EX})
    ].
  \end{align}
  If $w_0 \geq w_{\mathrm{max}} - 5\beta^{-1}\delta$, then we use the protocol based on
  thermal operations of Ref.~\cite{arXiv:1911.05563} in order to implement the
  map $\tr_X$ at a work cost asymptotically equal to $T(\tr_X)$, and satisfying
  all conditions of the lemma's claim.  For the rest of the proof, we suppose
  $w_0 < w_{\mathrm{max}} - 5\beta^{-1}\delta$.
  Furthermore, we note that
  (cf.\@ proof of Theorem 7.1 in \cite{arXiv:1911.05563}):
  \begin{align}
    \hspace*{1em}&\hspace*{-1em}
    w_0 + 5\beta^{-1}\delta
    < w_{\mathrm{max}}
      \nonumber\\
    &\leq
    \beta^{-1} D(\gamma_X\Vert\Gamma_X)
    - \beta^{-1} D(\gamma_{E}\otimes\gamma_X\Vert\Gamma_{E}\otimes\Gamma_X)
    \nonumber\\
    &= \beta^{-1} D(\gamma_E\Vert [\tr(\Gamma_E)]\gamma_E) = -F_E\ ,
    \label{eq:Lemmaw0UpBound}
  \end{align}
  where $F_E = -\beta^{-1}\log[\tr(\Gamma_E)]$.

  Let
  \begin{align}
    \hspace*{1em}&\hspace*{-1em}
    \mathscr{S}_{E^nX^n} =
    \nonumber\\
    & `\Big{
      \rho_{EX}^{\otimes n}\, :\:
    D(\rho_{EX}\Vert\Gamma_{EX})
    - D(\rho_X\Vert\Gamma_X)
    \geq - \beta w_0
    }
      \nonumber\\
    &\cup\ 
    `\Big{
    \hat\rho_{E^nX^n}\,:\:
    \tr`\big[P^{(\leq w_0)}_{E^nX^n} \hat\rho_{E^nX^n}] = 1
    }
      \ .
  \end{align}
  Let $P_{E^nX^n}^{-\beta w_0;\delta}$ be the universal typical and relative
  conditional operator furnished by Proposition~6.1 of
  Ref.~\cite{arXiv:1911.05563}, with respect to $\Gamma_{XE}$ and $\Gamma_X$.
  Specifically,
  \begin{align}
    P_{E^nX^n}^{-\beta w_0,\delta}
    &=
      \hspace*{-3em}
      \sum_{\substack{k,\ell,\lambda,\mu\,:\\
    E_k - \beta^{-1}S(\bar\lambda) - E_\ell + \beta^{-1}S(\bar\mu)
    \geq -\beta w_0-4\delta}}
      \hspace*{-3em}
    R^{E_\ell}_{X^n} \Pi^\mu_{X^n} \Pi^\lambda_{E^nX^n} R^{E_k}_{E^nX^n}
    \nonumber\\
    &= \sum_{\substack{w\in\mathscr{W}\,:\\
    w \leq w_0 + 4\beta^{-1}\delta}}
    P^{(w)}_{E^nX^n}
    \ .
  \end{align}
  Because $\Gamma_{XE}$ commutes with $\Ident_E\otimes\Gamma_X$, Proposition~6.1
  ensures that $P_{E^nX^n}^{-\beta w_0,\delta}$ is a Hermitian projector that
  commutes with $\Gamma_X^{\otimes n}$ and $\Gamma_{XE}^{\otimes n}$.
  As in the proof of Theorem~7.1 of Ref.~\cite{arXiv:1911.05563}, we show that
  $P_{E^nX^n}^{-\beta w_0,\delta}$ can perform a hypothesis test between any
  $\hat\rho_{E^nX^n}$ and $\gamma_E^{\otimes n}\otimes\hat\rho_{X^n}$ for any
  $\hat\rho_{E^nX^n}\in\mathscr{S}_{E^nX^n}$.  Definition~6.1 and
  Proposition~6.1 of Ref.~\cite{arXiv:1911.05563} ensure there exists $\eta>0$
  such that
  \begin{align}
    \tr`\big[ P_{E^nX^n}^{-\beta w_0;\delta} \rho_{EX}^{\otimes n} ]
    &\geq 1 - \kappa\ ;
    &
    \kappa &= \poly(n)\, e^{-n\eta}\ .
  \end{align}
  Then, consider any state $\hat\rho_{E^nX^n}$ such that
  $\tr`\big[P^{(\leq w_0)} \hat\rho_{E^nX^n}] = 1$.
  Then by construction of $P_{E^nX^n}^{-\beta w_0,\delta}$,
  \begin{align}
    \tr`\big[ P_{E^nX^n}^{-\beta w_0;\delta} \hat\rho_{E^nX^n} ]
    &= 1 \geq 1 - \poly(n)\,\ee^{-n\eta}\ .
  \end{align}

  We also evidently have
  $\Ident_X\otimes\Gamma_E = \Gamma_{X}^{-1/2}\Gamma_{XE}\Gamma_X^{-1/2}$, which
  combined with Point~(iii) of Definition~6.1 of Ref.~\cite{arXiv:1911.05563}
  gives us
  \begin{align}
    \hspace*{3em}&\hspace*{-3em}
    \tr_{E^n}`\big[ P_{E^nX^n}^{-\beta w_0;\delta} \Gamma_E^{\otimes n}]
    \nonumber\\
    &= `\big( \Gamma_X^{-1/2} )^{\otimes n}
    \tr_{E^n}`\big[ P_{E^nX^n}^{-\beta w_0;\delta} \Gamma_{XE}^{\otimes n} ]
      `\big( \Gamma_X^{-1/2} )^{\otimes n}
      \nonumber\\
    &\leq \poly(n)\,e^{-n(-\beta w_0-4\delta)}\,\Ident_{X^n}\ ,
  \end{align}
  further noting that $P_{X^nE^n}^{-\beta w_0;\delta}$ commutes with
  $\Gamma_{XE}^{\otimes n}$ and $\Gamma_X^{\otimes n}$.  We find, for any
  $\hat\rho_{X^n}$,
  \begin{align}
    \hspace*{3em}&\hspace*{-3em}
    \tr`\big[ P_{E^nX^n}^{-\beta w_0;\delta} \rho_{X^n}\otimes\gamma_E^{\otimes n}]
     \nonumber\\
    &\leq `\big[\tr`\big(\Gamma_E^{\otimes n})]^{-1}\,\poly(n)\,e^{-n(-\beta w_0-4\delta)}\,
      \tr(\rho_{X^n})
      \nonumber\\
    &= \poly(n)\,\exp`\big{ -n`\big(-\beta w_0 - 4\delta - \beta F_E) }\ .
  \end{align}
  Assume without loss of generality that $\eta \leq \delta$ (if $\eta > \delta$,
  we can replace $\eta$ by $\delta$ without affecting the proof).
  Using~\eqref{eq:Lemmaw0UpBound}, we see that
  $-\beta w_0-4\delta- \beta F_E- \eta > 0$.
  Now let
  \begin{align}
    m
    = \log\;\lfloor
    e^{n(-\beta w_0 - 4\delta -\beta F_E - \eta)}
    \rfloor \geq 0\ .
  \end{align}
  Then we have, for any $\hat\rho_{E^nX^n}\in\mathscr{S}_{E^nX^n}$, that
  \begin{align}
    \tr`\big[P_{E^nX^n}^{-\beta w_0;\delta} \hat\rho_{E^nX^n}]
    &\geq 1 - \kappa\ ;
    &
    \kappa &= \poly(n)\,e^{-n\eta}\ ;
    \nonumber\\
    \tr`\big[P_{E^nX^n}^{-\beta w_0;\delta} \,\gamma_E^{\otimes n}\otimes\hat\rho_{X^n}]
    &\leq e^{-m}\kappa'\ ;
    &
    \kappa' &= \poly(n)\,e^{-n\eta}\ .
  \end{align}
  Consider a quantum memory register $J$ of dimension $e^m$ (which is an
  integer), with trivial Hamiltonian $H_J = 0$ and initialized in a maximally
  mixed state; this corresponds to borrowing a heat bath with a trivial
  Hamiltonian in our thermal operation.  Let $\mathcal{R}^{(w_0)}_{E^nX^nJ}$ be
  the thermal operation provided by Proposition~7.1 of
  Ref.~\cite{arXiv:1911.05563} setting $S = E^n$, $M= X^n$, along with
  $\mathscr{S}_{E^nX^n}$, $P^{-\beta w_0;\delta}_{E^nX^n}$, $\gamma_E$, $m$,
  $\kappa$, $\kappa'$ as defined above.  We now view $J$ as an information
  battery that stores an amount $m' := -\beta^{-1} m$ amount of work, because
  $J$ is left in the pure state $\ket0_J$.  This Proposition~7.1 then guarantees
  that for any $\hat\rho_{E^nX^n} \in \mathscr{S}_{E^nX^n}$, the
  condition~\eqref{eq:LemmaUnivErasureW0Correct} is satisfied.  (It is
  straightforward to transfer the energy extracted in the information battery
  $J$ into the battery $W$, leaving the $J$ battery mixed and using it as an
  ancillary heat bath for our global thermal operation.)  Recall that
  \begin{align}
    m
    &= \log \lfloor e^{n(-\beta w_0 - 4\delta - \beta F_E - \eta) } \rfloor
      \nonumber\\
    &= n(-\beta w_0 - 4\delta - \beta F_E - \eta) - \nu\ ,
  \end{align}
  where $0\leq \nu \leq \log(2)$ accounts for the possible error caused by the
  floor function, noting that for any $y\geq 0$, we have
  $y - \log(2) \leq \log\lfloor e^y\rfloor \leq y$. Then (recall $\eta\leq\delta$):
  \begin{align}
    w' = -\beta^{-1} m \leq n (w_0 + F_E + 5\beta^{-1}\delta) + \beta^{-1}\log(2)\ ,
  \end{align}
  which completes the proof.
\end{proof}

\section{Proof of universal implementation of a work-cost-dephased channel with
  optimal per-i.i.d.\@ input work cost}
\label{appx:UnivImplPerIidInputProofs}

This section is devoted to the proof of \cref{thm:UnivImplPerIidInputCost},
using the result for conditional erasure derived in
\cref{appx:UnivImplPerIidInputCondErasureProofs}.

Any time-covariant channel $\mathcal{E}$ possesses a Stinespring dilation with
an energy-conserving unitary (e.g.\@ Lemma~7.2 in Ref.~\cite{arXiv:1911.05563},
cf.\@ also earlier works~\cite{Scutaru1979RMP_covariant,%
  Keyl1999JMP_cloning,PhDMarvian2012_symmetry}): There exists an environment
system $E$ with Hamiltonian $H_E$, as well as an energy-conserving unitary
$V_{XE}$ and a state $\ket0_E$ with $\dmatrixel{0}{H_E}_E = 0$ such that
$\mathcal{E}_{X\to X}(\cdot) = \tr_E`\big[ V `\big( \proj0_E\otimes(\cdot) )
V^\dagger ]$.

It turns out that in the time-covariant Stinespring picture, the
work-cost-dephasing operation coincides precisely with the dephasing operation
$\mathcal{D}^{\textup{W}}_{E^nX^n}$ in terms of the $P_{E^nX^n}^{(w)}$ operators
introduced in \cref{appx:UnivImplPerIidInputCondErasureProofs} for conditional
erasure.  In this picture, it becomes evident that
$\widetilde{\mathcal{E}}_{X^n}$ is indeed a valid quantum channel, namely, it is
completely positive and trace-preserving.

\begin{lemma}
  \label{thm:WorkCostDephasedChannelIsStinespringDephased}
  Let $\mathcal{E}_{X}$ be any time-covariant quantum channel and set $V_{XE}$
  be an energy-conserving Stinespring dilation unitary as above.
  Then its work-cost-dephased map satisfies
  \begin{multline}
    \widetilde{\mathcal{E}}_{X^n\to X^n}(\cdot)
    \\
    = \tr_{E^n}`\Big{
    \mathcal{D}^{\textup{W}}_{E^nX^n}`\Big(
      V_{XE}^{\otimes n} \, `\big[\proj 0_E^{\otimes n}\otimes`(\cdot)] \,
      (V_{XE}^{\otimes n})^\dagger
    )
    }\,.
  \label{eq:WorkCostDephasedChannelStinespringDephased}
  \end{multline}
\end{lemma}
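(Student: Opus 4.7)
The plan is to verify the identity at the level of Choi matrices: both expressions define linear maps $X^n\to X^n$, so by Choi--Jamiolkowski it suffices to check that they agree on the action on $\ket\Phi_{X^n:R^n}$ (viewed as acting on the $X^n$ half of the maximally entangled state). As a byproduct, this would also establish that $\widetilde{\mathcal{E}}_{X^n\to X^n}$ is a bona fide quantum channel, since the right-hand side of the claimed identity is manifestly completely positive and trace-preserving.

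First I would push each factor of $P^{(w)}_{E^nX^n}$ through $V^{\otimes n}$ onto the input side of the Stinespring dilation, exploiting two facts. First, $V^{\otimes n}$ is energy-conserving, so it commutes with the total-energy projector $R^{E_k}_{E^nX^n}$; and because $\ket 0_E$ has zero energy, this projector acts on $\ket 0_E^{\otimes n}\otimes\ket\psi$ as $\ket 0_E^{\otimes n}\otimes S^{E_k}_{X^n}\ket\psi$. Second, $V^{\otimes n}$ is permutation-covariant under the natural $S_n$-action on $(EX)^n$ that simultaneously permutes the pairs $E_iX_i$, so it commutes with $\Pi^\lambda_{(EX)^n}$; and because $\ket 0_E^{\otimes n}$ is $S_n$-invariant, this projector acts on $\ket 0_E^{\otimes n}\otimes\ket\psi_{X^n}$ as $\ket 0_E^{\otimes n}\otimes \Pi^\lambda_{X^n}\ket\psi_{X^n}$. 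The remaining factors $S^{E_\ell}_{X^n}\Pi^\mu_{X^n}$ act directly on the output $X^n$ of $V^{\otimes n}$ and are simply left in place.

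I would then apply the transpose trick $(A_{X^n}\otimes I_{R^n})\ket\Phi = (I_{X^n}\otimes A^t_{R^n})\ket\Phi$ to move the input-side projectors $S^{E_k}_{X^n}\Pi^\lambda_{X^n}$ across to the reference system as $(S^{E_k}_{X^n}\Pi^\lambda_{X^n})^t_{R^n}$, where they commute with $V^{\otimes n}$. After this rearrangement, the combined operator acting on $X^n\otimes R^n$ is precisely $\hat{P}^{(-w)}_{X^nR^n}$: the sign flip appears because $P^{(w)}_{E^nX^n}$ is defined with the constraint $E_k - \beta^{-1}S(\bar\lambda) - E_\ell + \beta^{-1}S(\bar\mu) = w$ (input free energy minus output free energy), whereas $\hat{P}^{(w)}_{X^nR^n}$ uses the opposite convention. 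Relabeling $w\mapsto -w$ in the outer sum, pulling the $E^n$-trivial projectors $\hat{P}^{(w)}_{X^nR^n}$ past the partial trace, and recognizing that $\tr_{E^n}[V^{\otimes n}(\proj 0_E^{\otimes n}\otimes\Phi_{X^n:R^n})(V^\dagger)^{\otimes n}] = \mathcal{E}^{\otimes n}(\Phi_{X^n:R^n})$, I obtain $\sum_w \hat{P}^{(w)}_{X^nR^n}\mathcal{E}^{\otimes n}(\Phi_{X^n:R^n})\hat{P}^{(w)}_{X^nR^n}$, which by definition equals $\widetilde{\mathcal{E}}_{X^n\to X^n}(\Phi_{X^n:R^n})$.

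The main technical ingredient is the Schur--Weyl reduction $\Pi^\lambda_{(EX)^n}(\ket 0_E^{\otimes n}\otimes\ket\psi_{X^n}) = \ket 0_E^{\otimes n}\otimes\Pi^\lambda_{X^n}\ket\psi_{X^n}$. I would prove this by writing $\Pi^\lambda_{(EX)^n}$ as the $S_n$ central idempotent onto the $\lambda$-isotypic component, noting that under the reshuffling $(EX)^n\simeq E^n\otimes X^n$ the permutation representation factorizes as $U_{E^n}(\pi)\otimes U_{X^n}(\pi)$, and using $U_{E^n}(\pi)\ket 0_E^{\otimes n}=\ket 0_E^{\otimes n}$ for every $\pi\in S_n$ to collapse the $E^n$ factor to the identity, leaving precisely the central idempotent of $\lambda$ acting on $X^n$. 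Young diagrams $\lambda$ with more than $d_X$ rows contribute trivially since $\Pi^\lambda_{X^n}=0$ in that case.
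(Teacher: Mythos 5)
Your proof is correct and follows the same overall strategy as the paper's: verify the identity at the Choi-matrix level by pushing the work-cost projectors across the Stinespring dilation and using the transpose trick. The main difference is in how you handle the Schur--Weyl projector. The paper moves $\Pi^\lambda_{R^n}$ directly across to $\Pi^\lambda_{(EX)^n}$ by invoking an external result about the symmetric subspace of $(EXR)^n$, using the fact that $(V_{XE}\ket 0_E\ket\Phi_{X:R})^{\otimes n}$ lies in that subspace. You instead show directly that $\Pi^\lambda_{(EX)^n}\bigl(\ket 0_E^{\otimes n}\otimes\ket\psi_{X^n}\bigr) = \ket 0_E^{\otimes n}\otimes\Pi^\lambda_{X^n}\ket\psi_{X^n}$ via the central-idempotent expression $\Pi^\lambda \propto \sum_\pi\chi^\lambda(\pi^{-1})U(\pi)$ together with the tensor-factored $S_n$-action $U_{(EX)^n}(\pi) = U_{E^n}(\pi)\otimes U_{X^n}(\pi)$ and the invariance $U_{E^n}(\pi)\ket 0_E^{\otimes n}=\ket 0_E^{\otimes n}$, and then apply the transpose trick. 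This is more elementary and self-contained than the symmetric-subspace argument, and does not require an appeal to an outside reference.

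You also explicitly flag the sign mismatch between the constraint in $P^{(w)}_{E^nX^n}$ (which uses $E_k-\beta^{-1}S(\bar\lambda)-E_\ell+\beta^{-1}S(\bar\mu)=w$) and the one in $\hat P^{(w)}_{X^nR^n}$ (which uses the negation), and note that it is absorbed by relabeling $w\mapsto -w$ in the outer sum. The paper silently identifies the two $w$ labels; your version is more careful on this point, and since both sums range over the full $\poly(n)$-sized set of attained values, the relabeling is indeed harmless.

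One small point worth making explicit in a final write-up: both your argument and the paper's rely on $\ket 0_E$ being an actual zero-energy eigenstate of $H_E$ (so that $R^{E_k}_{E^nX^n}$ restricted to $\ket 0_E^{\otimes n}\otimes X^n$ reduces to $S^{E_k}_{X^n}$), not merely a state with $\matrixel{0}{H_E}{0}=0$. This is guaranteed by the construction of the energy-covariant Stinespring dilation, but it is the precise property you are using.
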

\begin{proof}[*thm:WorkCostDephasedChannelIsStinespringDephased]
  The dephasing operations are indeed the same, which can be seen by mapping the
  operators applied on $R^n$ in~\eqref{eq:WorkCostMeasInputOutputPhat} to the
  output of the Stinespring dilation.  Specifically, we can see that for any
  $k,\ell,\lambda,\mu$,
  \begin{align}
    \hspace*{1em}&\hspace*{-1em}
   `\big[
      `(S_{X^n}^{E_\ell}\Pi^\mu_{X^n})\otimes`(S_{X^n}^{E_k}\Pi^\lambda_{X^n})_{R^n}^t
    ] \;
    V_{XE}^{\otimes n}\,`(\ket0_E\otimes \ket\Phi_{X:R})^{\otimes n}
    \nonumber\\[.5ex]
    &= 
      `(S_{X^n}^{E_\ell}\Pi^\mu_{X^n})
      \; V_{XE}^{\otimes n} \;
      `(S_{X^n}^{E_k} \otimes \Pi^\lambda_{R^n}) \,
      `(\ket0_E\otimes \ket\Phi_{X:R})^{\otimes n}
    \nonumber\\[.5ex]
    &=
   `\big[
      S_{X^n}^{E_\ell}\Pi^\mu_{X^n} \, \Pi^\lambda_{(EX)^n} R_{E^nX^n}^{E_k}
    ] \;
    V_{XE}^{\otimes n}\,`(\ket0_E\otimes \ket\Phi_{X:R})^{\otimes n}\ ,
  \end{align}
  where $R_{E^nX^n}^{E_k}$ is defined as in the definition of $P^{(w)}_ {E^nX^n}$
  in~\eqref{eq:WorkCostMeasEnvXpP}.  
  In the second inequality above, we first move $S^{E_k}_{X^n}$ past $V_{XE}$ as
  $R^{E_k}_{E^nX^n}$, since eigenspaces of $H_{X^n} = \sum_{i=1}^n H_{X_i}$ are
  mapped onto eigenspaces of $H_{E^nX^n} = \sum_{i=1}^n (H_{X_i} + H_{E_i})$; we
  then move the Schur-Weyl block projector $\Pi_{R^n}^\lambda$ over past the
  unitary $V_{XE}^{\otimes n}$ as $\Pi_{E^nX^n}^\lambda$, which is justified
  from e.g.\@ Proposition~C.1 of Ref.~\cite{Faist2025arXiv_maximum} (cf.\@ also
  Ref.~\cite{PhDWalter2014}) given that
  $(V_{XE}\ket0_E\ket{\Phi_{X:R}})^{\otimes n}$ lives in the symmetric subspace
  of $E^nX^nR^n$.
  We write
  \begin{align}
    A_{E^nX^n} = V_{XE}^{\otimes n} `[\proj0_E^{\otimes n}\otimes
    `(\Phi_{X^n:R^n})] (V_{XE}^{\otimes n})^\dagger\ ,
  \end{align}
  so that \cref{eq:WorkCostDephasedMap} becomes
  \begin{align}
      \widetilde{\mathcal{E}}_{X^n}(\Phi_{X^n:R^n})
      &= \sum_w \hat{P}^{(w)}_{X^nR^n}
        \mathcal{E}_{X^n}(\Phi_{X^n:R^n})
        \hat{P}^{(w)}_{X^nR^n}
        \nonumber\\
      &= \sum_w \hat{P}^{(w)}_{X^nR^n}
        \tr_{E^n}`(  A_{E^nX^n} )
        \hat{P}^{(w)}_{X^nR^n}
        \nonumber\\
      &= \sum_w
        \tr_{E^n}`\Big{  P^{(w)}_{E^nX^n}
          A_{E^nX^n}
          P^{(w)}_{E^nX^n}
        }\,,
  \end{align}
  which is the claimed expression for $\widetilde{\mathcal{E}}_{X^n}$.
\end{proof}

\begin{corollary}
  \label{thm:WorkCostDephasedMapIsCpTp}
  Let $\widetilde{\mathcal{E}}_{X^n\to X^n}$ be the work-cost-dephased map
  associated with some arbitrary quantum channel $\mathcal{E}_{X^n}$.  Then the
  map $\widetilde{\mathcal{E}}_{X^n\to X^n}$ is completely positive and trace
  preserving.
\end{corollary}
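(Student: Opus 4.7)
The plan is to read the claim directly off \cref{thm:WorkCostDephasedChannelIsStinespringDephased}, which furnishes the decomposition
\begin{align*}
  \widetilde{\mathcal{E}}_{X^n\to X^n}(\cdot)
  \;=\; \tr_{E^n}\!\left[\mathcal{D}^{\textup{W}}_{E^nX^n}\!\left( V_{XE}^{\otimes n}\bigl(\proj{0}_E^{\otimes n}\otimes(\cdot)\bigr)(V_{XE}^{\otimes n})^\dagger \right)\right]
\end{align*}
as a composition of three maps, using the energy-conserving Stinespring dilation of the time-covariant $\mathcal{E}$. I would then identify each of the three maps as CPTP; the composition of CPTP maps being CPTP finishes the corollary. (The decomposition needs $\mathcal{E}$ to be time-covariant so that $V_{XE}$ can be chosen energy-conserving; this is the setting in which the corollary is used in \cref{appx:UnivImplPerIidInputProofs}.)

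Two of the three factors are immediate: the Stinespring wrapping $\rho\mapsto V_{XE}^{\otimes n}(\proj 0_E^{\otimes n}\otimes \rho)(V_{XE}^{\otimes n})^\dagger$ is CP by its Stinespring form and is trace preserving since $\tr[V(\proj0_E\otimes\rho)V^\dagger] = \tr(\rho)$ by unitarity; the partial trace $\tr_{E^n}$ is textbook CPTP.

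The one ingredient I would verify with care is the dephasing $\mathcal{D}^{\textup{W}}_{E^nX^n}(\cdot) = \sum_w P^{(w)}_{E^nX^n}\,(\cdot)\,P^{(w)}_{E^nX^n}$. The body of the paper already notes that the four projector families $\{R^{E_k}_{E^nX^n}\}$, $\{S^{E_\ell}_{X^n}\}$, $\{\Pi^\lambda_{(EX)^n}\}$, $\{\Pi^\mu_{X^n}\}$ each resolve the identity on their respective spaces and that any pair of projectors drawn from these families commute. Each summand $S^{E_\ell}_{X^n}\Pi^\mu_{X^n}\Pi^\lambda_{(EX)^n}R^{E_k}_{E^nX^n}$ in the definition~\eqref{eq:WorkCostMeasEnvXpP} of $P^{(w)}_{E^nX^n}$ is therefore itself a projector, and summands indexed by distinct tuples $(k,\ell,\lambda,\mu)$ project onto orthogonal joint eigenspaces of these commuting observables. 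Consequently each $P^{(w)}_{E^nX^n}$ is a projector, different $P^{(w)}_{E^nX^n}$ are mutually orthogonal, and $\sum_w P^{(w)}_{E^nX^n} = \Ident_{E^nX^n}$ because summing over all tuples $(k,\ell,\lambda,\mu)$ produces the full identity. This makes $\mathcal{D}^{\textup{W}}_{E^nX^n}$ a projective-measurement pinching, which is CPTP.

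There is essentially no obstacle beyond this bookkeeping; the substantive technical content has already been packaged into \cref{thm:WorkCostDephasedChannelIsStinespringDephased}, whose proof is what actually uses time-covariance, by commuting $X^n$-energy projectors through the Stinespring isometry as $E^nX^n$-energy projectors. What remains for this corollary is only to recognize each piece of that decomposition as a standard CPTP building block.
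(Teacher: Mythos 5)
Your proof follows essentially the same route as the paper's, which simply cites \cref{thm:WorkCostDephasedChannelIsStinespringDephased} and notes that the resulting expression for $\widetilde{\mathcal{E}}_{X^n}$ is a composition of completely positive, trace-preserving maps; you merely spell out why each factor (Stinespring wrapping, the pinching $\mathcal{D}^{\textup{W}}_{E^nX^n}$, and the partial trace) is CPTP, detail the paper leaves implicit. You are also right to flag the time-covariance hypothesis: the corollary's wording says \emph{arbitrary} channel, but both the decomposition lemma and the paper's own one-line proof rely on an energy-conserving Stinespring dilation, which is the time-covariant setting.
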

\begin{proof}[**thm:WorkCostDephasedMapIsCpTp]
  Immediate from the
  expression~\eqref{eq:WorkCostDephasedChannelStinespringDephased} for
  $\widetilde{\mathcal{E}}_{X^n}$, which is a composition of cp.\@ tp.\@ maps.
\end{proof}

\begin{proof}[*thm:UnivImplPerIidInputCost]
  Let $E$, $H_E$, $V_{XE}$ as above.
  We define for later convenience
  \begin{align}
  \Gamma_X
  &= e^{-\beta H_X}\ ;
  \qquad\quad
  \Gamma_E
    = e^{-\beta H_E}\ ;
  \nonumber\\
    \Gamma_{XE}
    & = e^{-\beta(H_X + H_E)} = \Gamma_X\otimes\Gamma_E\ .
  \end{align}
  We build the desired thermal operation as follows.  Overall, the protocol
  attempts to implement $\mathcal{E}^{\otimes n}$ by explicitly implementing its
  Stinespring dilation.  Specifically, let $\mathcal{T}_{X^nW}$ be the thermal
  operation which performs the following steps:
  \begin{enumerate}[label={\arabic*.}]
  \item Bring in ancillary systems $E^n$, each with a Hamiltonian
    $H_{E_i} = H_E$ as required to host an energy-covariant Stinespring dilation
    of $\mathcal{E}_{X}$.  The $E^n$ systems are initialized in their thermal
    state, so that this operation can be performed for free in a thermal
    operation;
  \item Reset all the $E^n$ systems to their state $\ket0_{E}$, by drawing the
    necessary work from the battery $W$.  This erasure costs $-nF_E$ work, also
    in a single instance of the
    process~\cite{Horodecki2013_ThermoMaj,Aberg2013_worklike}, so the battery
    must transition from the initial charge state $\tau_W^{(E_0)}$ to the state
    $\tau_W^{(E_0 + nF_E)}$.  (This value is also consistent with the usual
    Landauer formula if $E$ has a trivial Hamiltonian; in this case,
    $\Gamma_E=\Ident_E$ and $F_E = -\beta^{-1}\log(\dim(E))$.)
  \item Apply the energy-conserving unitary $V_{XE}^{\otimes n}$.
    Energy-conserving unitaries are free, so this operation costs no work.
  \item Invoke the conditional erasure protocol given by
    \cref{thm:UnivCondErasureForEachIidInput} to extract as much work as
    possible from the $E^n$ systems, conditioned on the $X^n$ systems, and
    leaving the $E^n$ in their thermal state.  Let $\mathcal{R}_{E^nX^nW}$ be
    the thermal operation provided by \cref{thm:UnivCondErasureForEachIidInput}.
    This operation consumes a variable amount of work from the battery $W$.  The
    resulting state on $X^n$ contains the dephasing operation
    $\mathcal{D}^{\textup{W}}_{E^nX^n}$, which according to
    \cref{thm:WorkCostDephasedChannelIsStinespringDephased} precisely consists
    in implementing $\widetilde{\mathcal{E}}_{X^n}$.
  \item Discard the $E^n$ systems.
  \end{enumerate}
  
  The desired fidelity bound appears naturally.  If the input state is
  $\ket\sigma_{X^nR}$, then the state immediately after applying
  $V_{XE}^{\otimes n}$ is exactly
  $V_{XE}^{\otimes n} `(\proj0_E^{\otimes n}\otimes\sigma_{X^nR})
  (V_{XE}^\dagger)^{\otimes n}$.  With
  \cref{thm:WorkCostDephasedChannelIsStinespringDephased}:
  \begin{align}
    \hspace*{1em}&\hspace{-1em}
    F`\Big(
      \tr_W`\big[ \mathcal{T}_{X^nW}( \sigma_{X^nR} \otimes \tau_W^{(E_0)} ) ]
      \,,\;
      \widetilde{\mathcal{E}}_{X^n}`\big[ \sigma_{X^nR} ]
    )
    \nonumber\\[.5ex]
    &= F \Bigl(
      \tr_W`\big[ \mathcal{T}_{X^nW}( \sigma_{X^nR} \otimes \tau_W^{(E_0)} ) ]
      \,,\;
      \nonumber\\[.5ex]&\qquad
      \tr_{E^n}`\big{
      \mathcal{D}_{E^nX^n}^{(w)}`\big[
        V_{XE}^{\otimes n} (\proj0_E^{\otimes n}
        \otimes \sigma_{X^nR}) (V_{XE}^{\otimes n})^\dagger
      ]
    }
    \Bigr)
    \nonumber\\[.5ex]
    &\geq 1 - \poly(n)\,\ee^{-n\eta}\ ,
      \label{eq:dapiewqfbdjilabsfkddieuwio}
  \end{align}
  where the last inequality is provided by
  point~\ref{item:thmUnivCondErasureForEachIidInputOutputFidelity} of
  \cref{thm:UnivCondErasureForEachIidInput}.
  The ``furthemore'' part of the
  claim~\ref{item:thmUnivImplPerIidInputCostFidelityBound} follows from
  \cref{thm:WrkCostDephasProcessPINoRefStateTransf}.
  This proves~\ref{item:thmUnivImplPerIidInputCostFidelityBound}.

  From the way we set up our protocol it is clear that the final battery charge
  is variable and is heavily dependent on the conditional erasure step.  If the
  input state is $\sigma_X^{\otimes n}$, then immediately after applying the
  unitary, we have the state $\rho_{XE}^{\otimes n}$, where
  $\rho_{XE} = V_{XE} `(\proj0_E\otimes\sigma_X) V_{XE}^\dagger$.
  Point~\ref{item:thmUnivCondErasureForEachIidInputWorkCostIidInput} of
  \cref{thm:UnivCondErasureForEachIidInput} then asserts that, with high
  probability, the conditional erasure step does not deplete the battery by any
  amount greater than
  \begin{align}
    w_\rho
    &= n \bigl[ \beta^{-1}`\big( \DD{\rho_X}{\Gamma_X} - \DD{\rho_{EX}}{\Gamma_{EX}} )
      \nonumber\\[.5ex]&\qquad
      + F_E + 7\beta^{-1}\delta \bigr] + \log \poly(n)
    \nonumber\\
    &= n \bigl[ \beta^{-1}`\big( \DD{\mathcal{E}(\sigma_X)}{\Gamma_X}
      - \DD{\sigma_{X}}{\Gamma_{X}} )
      \nonumber\\[.5ex]&\qquad
      + F_E + 7\beta^{-1}\delta \bigr] + \log \poly(n)
    \ .
  \end{align}
  Accounting for the initial work cost $- n F_E$ of resetting the $E^n$
  registers to $\ket{0}_E^{\otimes n}$ which
  proves~\ref{item:thmUnivImplPerIidInputCostVariableWorkCostBound}.
\end{proof}

\section{Universal, variable-work-cost implementation of an arbitrary channel
  using Gibbs-preserving maps, with work-cost process dephasing}
\label{appx:UnivVariableImplWithGPM}

We prove the theorem, which has as an immediate consequence the existence of a
thermodynamic implementation of the work-cost-dephased map associated with any
arbitrary i.i.d.\@ channel $\mathcal{E}_{X\to X'}^{\otimes n}$, with optimal
variable work cost for each i.i.d.\@ input state.  Here, the output system $X'$
and Hamiltonian $H_{X'}$ may differ from $X$ and $H_{X}$.  As earlier, we fix
$\beta>0$ and set $\Gamma_X = \ee^{-\beta H_X}$,
$\Gamma_{X'} = \ee^{-\beta H_{X'}}$.

Let $\mathcal{E}_{X\to X'}$ be any arbitrary quantum channel.  Let
$V_{X\to EX'}$ be a Stinespring dilation isometry, with
$\mathcal{E}_{X\to X'}(\cdot) = \tr_E`\big{ V_{X\to EX'}\,`(\cdot)\,V^\dagger }$.
We define
\begin{align}
  \Gamma_{EX'} = V_{X\to EX'} \, \Gamma_X\,  V^\dagger\ .
\end{align}

Let $`{ S_{X'^n}^{E'_\ell} }$ and $`{ R_{E^nX'^n}^{E_k} }$ be the projectors onto
the eigenspaces of $\Gamma_{X'}^{\otimes n}$ and $\Gamma_{EX'}^{\otimes n}$,
respectively; let $\Pi^\mu_{X'^n}$ and $\Pi^\lambda_{(EX')^n}$ be the projectors
onto the Schur-Weyl block of $X'^n$ and $(EX')^n$ labeled by Young diagram $\mu$
and $\lambda$, respectively.  Analogously to~\eqref{eq:WorkCostMeasEnvXpP}, and
following Proposition~6.1 of Ref.~\cite{Faist2021CMP_impl}, we define
\begin{align}
  M^{(w)}_{E^nX'^n}
  :=
  \hspace*{-2em}
  \sum_{\substack{k,\ell,\lambda,\mu:\\
  E_k - \beta^{-1}S(\lambda/n) - E'_\ell + \beta^{-1}S(\mu/n) = \beta w
  }}
  \hspace*{-3em}
  S_{X'^n}^{E'_\ell} \Pi_{X'}^\mu \Pi_{(EX')^n}^\lambda R_{E^nX'^n}^{E_k}\ .
\end{align}
As noted in Proposition~6.1 of Ref.~\cite{Faist2021CMP_impl} and its proof, the
operator $M^{(w)}_{E^nX'^n}$ is generically not a projector; it might even not
be Hermitian.  It always satisfies
$M^{(w)\,\dagger}_{E^nX'^n} M^{(w)}_{E^nX'^n} \leq\Ident$.  Moreover,
$`\big{ M^{(w)\,\dagger}_{E^nX'^n} M^{(w)}_{E^nX'^n} }_w$ forms a POVM.
Here again, the possible values of $w$ attained by combinations of the
$\poly(n)$ many $k,\ell,\lambda,\mu$ forms a set $\mathscr{W}$ of size
$\abs{\mathscr{W}}\leq\poly(n)$.

For this construction based on Gibbs-preserving maps, the work-cost-dephased
version of $\mathcal{E}^{\otimes n}$ takes the form
\begin{multline}
  \widetilde{\mathcal{E}}_{X^n\to X'^n}(\cdot) = 
  \\[1ex]
  \sum_w \tr_{E^n}`\Big{
  M^{(w)}_{E^nX'^n}\,V_{X\to EX'}^{\otimes n}\,`(\cdot) \,
  V^{\otimes n\,\dagger}\,M^{(w)\,\dagger}_{E^nX'^n}
  }\,.
\end{multline}

The construction follows:

\begin{theorem}[Universal, variable-work-cost implementation of an arbitrary i.i.d.\@ channel
  with work-cost dephasing, using Gibbs-preserving maps]
  \label{thm:UnivVarWorkIidGPM}
  Let $\mathcal{E}_{X\to X'}$ be any arbitrary quantum process and let $W$ be a
  battery system as above, with $\Gamma_W=\Ident_W$.  There exists a completely
  positive, trace-nonincreasing map $\mathcal{T}_{X^nW\to X^nW}$ such that:
  \begin{enumerate}[label=(\roman*)]
  \item We have
    \begin{align}
      \mathcal{T}_{X^nW\to X^nW}`\big(\Gamma_X^{\otimes n}\otimes\Gamma_W)
      \leq \Gamma_X^{\otimes n}\otimes\Gamma_W\ ;
    \end{align}
  \item For any arbitrary state $\sigma_{X^nR}$, 
    \begin{multline}
      \tr_W`\big[\mathcal{T}_{X^n\to X'^n}(\sigma_{X^nR}\otimes\tau_W^{(E_0)})]
      = \widetilde{\mathcal{E}}_{X^n\to X'^n}(\sigma_{X^nR}) \ ;
    \end{multline}
  \item For any $\sigma_X$, we have
    \begin{multline}
      \tr`\Big[ \Pi_W^{\geq(E_0 - w_\sigma)} 
      \mathcal{T}_{X^n\to X'^n}(\sigma_{X}^{\otimes n}\otimes\tau_W^{(E_0)})
      ]
      \\[.5ex]
      \geq 1 - \poly(n)\,\ee^{-n\eta}\ ,
    \end{multline}
    where
    \begin{align}
      w_\sigma
      &=  n `\Big( W[\mathcal{E};\sigma] + \beta^{-1}\delta) + \log\poly(n)\ .
    \end{align}
  \end{enumerate}
\end{theorem}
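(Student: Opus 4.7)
The plan is to construct $\mathcal{T}_{X^nW\to X'^nW}$ explicitly as the Gibbs-preserving-map analog of the thermal-operation protocol behind \cref{thm:UnivImplPerIidInputCost}, replacing the single deterministic conditional-typical projector of Proposition~6.1 of Ref.~\cite{Faist2021CMP_impl} by a probabilistic measurement of the whole family $\{M^{(w)}_{E^nX'^n}\}_{w\in\mathscr{W}}$, and pairing each outcome $w$ with its own battery transition. Concretely, I set
\begin{align*}
\mathcal{T}(\rho\otimes\tau_W^{(E_0)})
:= \sum_{w\in\mathscr{W}}
\tr_{E^n}\bigl\{M^{(w)}\,V^{\otimes n}\rho\,V^{\otimes n\,\dagger}\,M^{(w)\,\dagger}\bigr\}
\otimes\tau_W^{(E_0-nw-n\beta^{-1}\delta-c_n)},
\end{align*}
where $V_{X\to EX'}$ is any Stinespring dilation of $\mathcal{E}$ (no energy-covariance is required in the Gibbs-preserving framework), $\Gamma_{EX'}:=V\Gamma_X V^\dagger$ on the image of $V$, and $c_n=\beta^{-1}\log\poly(n)$ is an energy slack fixed below. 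The action of $\mathcal{T}$ is extended from inputs of the form $\rho_{X^n}\otimes\tau_W^{(E_0)}$ to arbitrary inputs on $X^nW$ by projecting the battery onto $\Pi_W^{\geq E_0}$ and acting trivially on its complement.

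Property~(ii) is then immediate by tracing out $W$: the right-hand side reduces term-by-term to the defining sum of $\widetilde{\mathcal{E}}_{X^n\to X'^n}$. For property~(iii), the key point is that $\{M^{(w)\,\dagger}M^{(w)}\}_w$ acts as a work-cost estimator on i.i.d.\@ post-Stinespring states: each effect is built from the mutually commuting energy and Schur-Weyl projectors $R^{E_k}_{E^nX'^n}$, $S^{E'_\ell}_{X'^n}$, $\Pi^\lambda_{(EX')^n}$, $\Pi^\mu_{X'^n}$ that jointly estimate input and output energies and entropies, and the standard concentration arguments of Propositions~2.1--2.2 and~6.1 of Ref.~\cite{Faist2021CMP_impl} give
\begin{align*}
\sum_{w>W[\mathcal{E};\sigma]+\beta^{-1}\delta}\tr\bigl\{M^{(w)\,\dagger}M^{(w)}\,V^{\otimes n}\sigma_X^{\otimes n}V^{\otimes n\,\dagger}\bigr\}\leq\poly(n)\,\ee^{-n\eta}
\end{align*}
for some $\eta>0$ independent of $\sigma$. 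Reading off the corresponding battery occupancy yields the desired overlap with $\Pi_W^{\geq(E_0-w_\sigma)}$ for $w_\sigma=n(W[\mathcal{E};\sigma]+\beta^{-1}\delta)+c_n$.

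The hard part will be property~(i), which must balance the Gibbs weight of each $w$-outcome subspace of $M^{(w)}$ against the battery state $\tau_W^{(E_0-nw-\cdots)}$ that outcome triggers. Substituting $V\Gamma_X V^\dagger=\Gamma_{EX'}$ and $\Gamma_W=\Ident_W$ yields
\begin{align*}
\mathcal{T}(\Gamma_X^{\otimes n}\otimes\Gamma_W)
=\sum_w\tr_{E^n}\bigl\{M^{(w)}\Gamma_{EX'}^{\otimes n}M^{(w)\,\dagger}\bigr\}\otimes\tau_W^{(E_0-nw-n\beta^{-1}\delta-c_n)},
\end{align*}
and the key input is Proposition~6.1(iii) of Ref.~\cite{Faist2021CMP_impl}, which bounds the per-$w$ Gibbs weight $\tr_{E^n}\bigl\{M^{(w)}\Gamma_{EX'}^{\otimes n}M^{(w)\,\dagger}\bigr\}$ by an expression of the form $\poly(n)\,\ee^{n\beta w}\Gamma_{X'}^{\otimes n}$, with the $w$-exponential coming from the construction of the work-cost labelling. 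Combined with the operator-norm bound $\tau_W^{(E)}\leq\ee^{-\beta(E-E_{\mathrm{batt.ref.}})}\Ident_W$, the $w$-dependent exponentials cancel, and the residual prefactor---a product of $\poly(n)$ blowups from $\abs{\mathscr{W}}$ and from Proposition~6.1(iii), the exponential slack $\ee^{-n\delta}$ from the $\beta^{-1}\delta$ margin, and the factor $\ee^{-\beta c_n}$---is forced below $1$ by choosing $c_n=\beta^{-1}\log\poly(n)$ sufficiently large. The main subtlety lies in the careful sign matching between the outcome-subspace Gibbs weight and the battery-state normalization, and in the extension of $\mathcal{T}$ from $\rho\otimes\tau_W^{(E_0)}$-inputs to the full input $\Gamma_X^{\otimes n}\otimes\Gamma_W$; both are routine adaptations of standard arguments in Refs.~\cite{Faist2018PRX_workcost,Faist2021CMP_impl}.
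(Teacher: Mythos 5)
Your approach matches the paper's proof almost line for line: you define $\mathcal{T}$ by applying the Stinespring isometry, then the family $\{M^{(w)}\}_{w\in\mathscr{W}}$, and pairing each outcome with a fixed battery transition; property~(ii) then drops out by tracing $W$, property~(iii) follows from the concentration properties of the energy/Schur-Weyl measurements (Propositions 2.1--2.2 and 6.1 of Ref.~\cite{Faist2021CMP_impl}), and property~(i) follows by balancing the Gibbs weight of each $M^{(w)}$-branch against the battery factor. The only differences are cosmetic: the paper packages the battery transition into the map $\Delta^{(w')}_W(\cdot)=\tr_W[\Pi_W^{\geq E_0}(\cdot)]\,\tau_W^{(E_0-w')}$, which makes the extension to general inputs on $W$ concrete and automatically trace-nonincreasing (your phrase ``acting trivially on its complement'' is a bit vague for a map $X^nW\to X'^nW$, where ``trivially'' cannot mean the identity — what is meant is that the complement of $\Pi_W^{\geq E_0}$ is annihilated), and the precise slack constants in the battery offset are bookkept slightly differently ($w+\poly(n)$ in the paper versus $nw+n\beta^{-1}\delta+c_n$ in your version). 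One caveat to double-check: with the $\tau_W^{(E)}\leq\ee^{-\beta(E-E_{\text{batt.ref.}})}\Ident_W$ bound, lowering the battery charge by the work cost gives an operator bound that \emph{grows} with $w$, so the ``$w$-dependent exponentials cancel'' step is only correct once the sign conventions in $M^{(w)}$ and the Proposition~6.1(iii) bound are lined up carefully — the paper's definition of $M^{(w)}$ encodes a negated free-energy difference so that the typical $w$ comes out with the right sign; worth verifying this in detail to avoid a hidden factor-of-two in the $w$-exponent.
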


\begin{proof}[**thm:UnivVarWorkIidGPM]
  Define
  \begin{align}
    \mathcal{T}^{(w)}_{X^n\to X'^n}(\cdot)
    = M^{(w)}_{E^nX'^n}\,V_{X\to EX'}^{\otimes n}\,(\cdot)\,
    V^{\otimes n\,\dagger}\,M^{(w)\,\dagger}_{E^nX'^n}\ ,
  \end{align}
  and
  \begin{align}
    \Delta^{(w)}_{W\to W}(\cdot)
    = \tr_W`\Big[\Pi_W^{\geq E_0}\,(\cdot)]\,\tau_W^{(E_0 - w)}\ .
  \end{align}
  Then let
  \begin{align}
    \mathcal{T}_{X^nW\to X'^nW}(\cdot)
    = \sum_{w\in\mathscr{W}}
    `\big[\mathcal{T}^{(w)} \otimes \Delta^{(w+\poly(n))}_W](\cdot)\ .
  \end{align}
  We find
  \begin{align}
    \mathcal{T}^{(w)}(\Gamma_X^{\otimes n})
    &\leq \tr_{E^n}`\big[ M^{(w)}_{E^nX'^n}\,\Gamma_{EX'}^{\otimes n}
      M^{(w)\,\dagger}_{E^nX'^n} ]
      \nonumber\\
    &\leq \poly(n)\, \ee^{n\beta w}\, \Gamma_{X'}^{\otimes n}\ ,
  \end{align}
  as we can see from the proof of Proposition~6.1 in
  Ref.~\cite{Faist2021CMP_impl}.
  We then see that
  \begin{align}
    \hspace*{2em}&\hspace*{-2em}
    \mathcal{T}_{X^nW\to X'^nW}(\Gamma_X^{\otimes n}\otimes\Gamma_W)
    \nonumber\\[.5ex]
    &\leq \sum_w
    \mathcal{T}^{(w)}(\Gamma_X^{\otimes n})\otimes\ee^{-\beta(w+\poly(n))}\Ident_W
    \nonumber\\
    &\leq 
      \Gamma_{X'}^{\otimes n}\otimes\Ident_W\ ,
  \end{align}
  as desired.

  For any quantum state $\sigma_{X^nR}$, we find that
  \begin{align}
    \tr_W`\Big[ \mathcal{T}`\big(\sigma_{X^nR}\otimes\tau_W^{(E_0)}) ]
    &= \sum_w \mathcal{T}^{(w)}`\big(\sigma_{X^nR})
      \nonumber\\
    &= \widetilde{\mathcal{E}}(\sigma_{X^nR})\ .
  \end{align}

  Finally, let $\sigma_X$ be any quantum state on $X$.  We seek a property
  analogous to~\eqref{eq:d09iodsajfknldsaj}.  Whenever $w$ is at least
  $\beta^{-1}\delta$-far from $W[\mathcal{E};\sigma]$, we have that at least one
  of $\tr`(\sigma_X H_X)$, $\beta^{-1}S(\sigma_X)$,
  $\tr`(\mathcal{E}(\sigma) H_{X'})$, $\beta^{-1}S(\mathcal{E}(\sigma))$ must be
  at least $\beta^{-1}\delta/4$-far away from $E_k$, $\beta^{-1}S(\lambda/n)$,
  $E'_\ell$, $\beta^{-1}S(\mu/n)$, respectively (cf.\@ proof of Proposition~6.1
  of Ref.~\cite{Faist2021CMP_impl}, as well as Propositions~2.1 and~2.2 of that
  reference).  It follows that for any such $w$, we have
  \begin{multline}
    \tr`\big( M_{E^nX'^n}^{(w)\,\dagger} M_{E^nX'^n}^{(w)} \;
    V_{X\to EX'}^{\otimes n} \sigma_X^{\otimes n} (V^{\otimes n})^\dagger)
    \\[.5ex]
    \leq \poly(n)\,\ee^{-n\eta}\ .
  \end{multline}
  Therefore, up to probability at most $\poly(n)\,\ee^{-n\eta}$, only work
  events $w$ contribute where $w\approx W[\mathcal{E};\sigma]$.  This proves the
  final property.
\end{proof}

\section{Effect of work-cost dephasing}
\label{appx:WorkCostDephasingDetails}

In this appendix we derive the proofs of the statements made in
\cref{sec:WorkCostCovariance}.

We prove \cref{thm:WrkCostDephasBipartPINoRef} of the main text, which states
that the work-cost-dephasing $\mathcal{D}^{(\textup{W})}_{E^nX^n}$
[cf.~\cref{eq:WorkCostDephasingMapEX}] on a time-covariant,
permutation-invariant state $\rho_{E^nX^n}$ has no effect if $E^n$ is traced
out.
\begin{proof}[*thm:WrkCostDephasBipartPINoRef]
  Observe that all four projectors in~\eqref{eq:WorkCostDephasingMapEX} commute.
  Then, because $\rho_{E^nX^n}$ is permutation-invariant, for any
  $\lambda,\lambda',\mu,\mu'$,
  \begin{align}
    \hspace*{2em}&\hspace*{-2em}
    \Pi_{X^n}^\mu \tr_{E^n}`\big[
      \Pi_{(EX)^n}^{\lambda}
      \rho_{E^nX^n}
      \Pi_{(EX)^n}^{\lambda'}
    ]     \Pi_{X^n}^{\mu'}
    \\
    &= \delta_{\lambda,\lambda'} \,
    \Pi_{X^n}^\mu
    \tr_{E^n}`\big[
      \Pi_{(EX)^n}^{\lambda}
      \rho_{E^nX^n}
    ]
    \Pi_{X^n}^{\mu'}
    \nonumber\\
    &= \delta_{\lambda,\lambda'} \delta_{\mu,\mu'} \,
    \Pi_{X^n}^\mu
    \tr_{E^n}`\big[
      \Pi_{(EX)^n}^{\lambda}
      \rho_{E^nX^n}
    ]
    \ ,
    \label{eq:dfabuibfhreu9qfjk}
  \end{align}
  where the last step follows from the fact that the expression
  $\tr_{E^n}(\cdot)$ is permutation-invariant over the copies $X^n$.
  Furthermore,
  $R_{E^nX^n}^{E_k} \rho_{E^nX^n} R_{E^nX^n}^{E_{k'}} = \delta_{k,k'}\,
  R_{E^nX^n}^{E_k} \rho_{E^nX^n} R_{E^nX^n}^{E_k}$ because $\rho_{E^nX^n}$ is
  time-covariant.  We can then write
  \begin{align}
    \hspace*{1em}&\hspace*{-1em}
    \tr_{E^n}`\Big{
     \mathcal{D}^{(\textup{W})}_{E^nX^n} `\big(\rho_{E^nX^n})
    }
    = \sum_{w \in\mathscr{W}}
    P^{(w)}_{E^nX^n} \rho_{E^nX^n} P^{(w)}_{E^nX^n}
    \nonumber\\
    &=
      \hspace*{-2.8em}
      \sum_{\substack{
      w,\lambda,\mu,k,\ell:\\
      E_k-\beta^{-1}S(\bar\lambda)\ \ \  \\\ \ \ \ \ \ -E_\ell+\beta^{-1}S(\bar\mu) = w
    }} \hspace*{-2.8em}
      S_{X^n}^{E_\ell} \Pi^{\mu}_{X^n} \tr_{E^n}`\Big{
    \Pi^{\lambda}_{(EX)^n}
    R_{E^nX^n}^{E_k}
    \rho_{E^nX^n}
    } S_{X^n}^{E_{\ell}}\,,
    \nonumber\\
    &=
      \tr_{E^n}`\big{
      \rho_{E^nX^n}
      } = \rho_{X^n}\ ,
  \end{align}
  where we can use the same $\ell$ on both $S_{X^n}^{E_{\ell}}$ terms because
  the $\ell$ inside each $P^{(w)}_{E^nX^n}$ operator is fixed from the
  constraint
  $E_\ell = E_k - \beta^{-1}S(\bar\lambda) + \beta^{-1}S(\bar\mu) - w$, and
  where the final equality follows from the facts that the trace term
  $\tr_{E^n}(\cdot)$ is time-covariant (making the second $S_{X^n}^{E_\ell}$
  redundant) and that the sum simply ranges over all possible
  $k,\ell,\lambda,\mu$.
\end{proof}

A corresponding claim holds for implementing quantum channels.
This is, in fact, an immediate consequence of
\cref{thm:WrkCostDephasBipartPINoRef} and
\cref{thm:WorkCostDephasedChannelIsStinespringDephased}:
\begin{proof}[*thm:WrkCostDephasProcessPINoRefStateTransf]
  Let $V_{XE}$ be a time-covariant Stinespring dilation of $\mathcal{E}$ as in
  \cref{thm:WorkCostDephasedChannelIsStinespringDephased}.  Setting
  \begin{align}
    \rho_{E^nX^n} = 
    V_{XE}^{\otimes n} \, `\big[\proj 0_E^{\otimes n}\otimes\sigma_{X^n}] \,
    (V_{XE}^{\otimes n})^\dagger\ ,
  \end{align}
  we find by definition
  \begin{align}
    \tr_{E^n}(\rho_{E^nX^n})
    = \mathcal{E}^{\otimes n}(\sigma_{X^n})\ .
  \end{align}
  The claim follows by \cref{thm:WrkCostDephasBipartPINoRef}, since
  $\rho_{E^nX^n}$ is time covariant and permutation invariant.
\end{proof}

For the rest of this appendix, we assume that all systems have trivial
Hamiltonians: $H_E=0$, $H_X=0$.
We would like to show that the conditional erasure channel on $EX\to EX$,
\begin{align}
  \mathcal{E}_{\mathrm{cond.reset}\,EX}`(\cdot)
  &= (\tr_E\otimes\IdentProc[X]{})`(\cdot)\ ,
\end{align}
is not work-cost covariant.
Namely, we exhibit an example of a pure input state $\ket\rho_{E^nX^nR}$ such
that
\begin{align}
  \mathcal{E}_{\mathrm{cond.reset}\,EX}^{\otimes n}`(\rho_{E^nX^nR})
  \not\approx
  \widetilde{\mathcal{E}}_{\mathrm{cond.reset}\,EX}`(\rho_{E^nX^nR})\ ,
  \label{eq:WorkCostDephasCondResetNotWkCovar}
\end{align}
where $\widetilde{\mathcal{E}}_{\mathrm{cond.reset}\,EX}$ is the
work-cost-dephased channel associated with
$\mathcal{E}_{\mathrm{cond.reset}\,EX}^{\otimes n}$.

This argument provides an example of a state $\rho_{E^nX^nR}$ such that
$\tilde\rho_{X^nR} \neq \rho_{X^nR}$, where
$\rho_{E^nX^nR} = \mathcal{D}^{(\textup{W})}_{E^nX^n}(\rho_{E^nX^nR})$.  Given
that $H_X=H_E=0$, the state is trivially time covariant.  This is a situation in
which a work-cost dephasing affects the output state in
\cref{thm:UnivCondErasureForEachIidInput}.

The example state is constructed as follows.  The state on $E^nX^n$ is a convex
combination of a constant number of mixed i.i.d.\@ states, with fixed
probabilities $p_i$ independent of $n$:
\begin{align}
  \rho_{E^nX^n} = \sum p_i \, \sigma_{EX}^{(i)\,\otimes n}\ .
\end{align}
Such a state can be purified onto a system $R \equiv \bar{R}^n R'$, given
purifications $\ket[\big]{\sigma^{(i)}}_{EX\bar R}$ of each $\sigma_{EX}^{(i)}$, as
\begin{align}
  \ket\rho_{E^nX^n\bar{R}^nR'}
  = \sum \sqrt{p_i} \, \ket[\big]{\sigma^{(i)}}_{EX\bar R}^{\otimes n}\otimes\ket{i}_{R'}\ .
\end{align}
Then we have
\begin{multline}
  \rho_{X^nR} = \rho_{X^n\bar{R}^nR'}
  \\
  = \sum_{i,j} \sqrt{p_i p_j}
  \tr_{E^n}`\Big{ \ketbra[\big]{\sigma^{(i)}}{\sigma^{(j)}}_{EX\bar R}^{\otimes n} }
  \otimes\ketbra{i}{j}_{R'}\ .
  \label{eq:fokdih0q9wiadokfjdi}
\end{multline}
Our task is to show that this state differs significantly from
$\tr_{E^n}`\big{ \mathcal{D}^{(\textup{W})}(\rho_{E^nX^nR})}$, for suitable
choices of the states $`{ \ket{\sigma^{(i)}}_{EX\bar R} }$.
\begin{widetext}
We investigate the dephased state
$\tilde\rho_{E^nX^n\bar{R}^nR'} =
\mathcal{D}^{(\textup{W})}(\rho_{E^nX^n\bar{R}^nR'})$.  To simplify the
expression, we consider the operators $\matrixel{i}{\cdot}{j}_{R'}$
individually:
\begin{align}
  \frac1{\sqrt{p_ip_j}} \bra{i}_{R'}\, \tilde\rho_{E^nX^n\bar{R}^nR'}\,\ket{j}_{R'}
  &= 
    \sum_{w\in\mathscr{W}}
    \sum_{\substack{\lambda,\mu,\lambda',\mu':\\
      S(\bar\mu)-S(\bar\lambda)=\beta w\\
      S(\bar\mu')-S(\bar\lambda')=\beta w
  }}
  \Pi_{X^n}^\mu \Pi_{(EX)^n}^\lambda \,
    \ketbra[\big]{\sigma^{(i)}}{\sigma^{(j)}}_{EX\bar R}^{\otimes n} \,
  \Pi_{(EX)^n}^{\lambda'} \Pi_{X^n}^{\mu'}\ .
  \label{eq:afijegqw9dfkfdnq9id}
\end{align}
Using Eq.~(6.23) of Ref.~\cite{PhDHarrow2005},
\begin{subequations}
\begin{align}
  \norm[\big]{ \Pi_{(EX)^n}^\lambda \ket{\sigma^{(i)}}^{\otimes n} }^2
  &= \tr`\big[ \Pi_{(EX)^n}^\lambda \sigma_{EX\bar R}^{(i)\,\otimes n} ]
  \leq \poly(n)\,\exp`\Big{
    -\frac{n}{2}\,\onenorm[\big]{\bar\lambda - \spec`\big(\sigma^{(i)}_{EX})}^2
  }\ ;
    \\
  \norm[\big]{ \Pi_{X^n}^\mu \ket{\sigma^{(i)}}^{\otimes n} }^2
  &= \tr`\big[ \Pi_{X^n}^\mu \sigma_{EX\bar R}^{(i)\,\otimes n} ]
  \leq \poly(n)\,\exp`\Big{
    -\frac{n}{2}\,\onenorm[\big]{\bar\mu - \spec`\big(\sigma^{(i)}_{X})}^2
  }\ .
    \label{eq:ewu90ofjknlsadosakl}
\end{align}
\end{subequations}
Therefore, the object $\Pi_{X^n}^\mu \Pi_{(EX)^n}^\lambda %
\, \ket[\big]{\sigma^{(i)}}_{EX\bar R}^{\otimes n}$ has exponentially small norm
unless
$\onenorm[\big]{ \bar\lambda - \spec`\big(\sigma_{EX}^{(i)}) } \leq O(\sqrt{n})$
and $\onenorm[\big]{ \bar\mu - \spec`\big(\sigma_{X}^{(i)}) } \leq O(\sqrt{n})$.
Then, for any $\delta'>0$, there is an $\eta>0$ as well as an operator
$\Delta_{E^nX^n\bar{R}^n}$ with $\onenorm{\Delta} \leq \poly(n)\exp`{-n\eta}$,
such that
\begin{align}
  \eqref{eq:afijegqw9dfkfdnq9id} - \Delta
  &= 
    \sum_{w\in\mathscr{W}}
    \sum_{\substack{\lambda,\mu,\lambda',\mu':\\
      S(\bar\mu)-S(\bar\lambda)=\beta w\\
      S(\bar\mu')-S(\bar\lambda')=\beta w\\
     \onenorm{\bar\lambda - \spec(\sigma^{(i)}_{EX})} \leq \delta' \\
     \onenorm{\bar\lambda' - \spec(\sigma^{(j)}_{EX})} \leq \delta' \\
     \onenorm{\bar\mu - \spec(\sigma^{(i)}_{X})} \leq \delta'\\
     \onenorm{\bar\mu' - \spec(\sigma^{(j)}_{X})} \leq \delta'
  }}
  \Pi_{X^n}^\mu \Pi_{(EX)^n}^\lambda \,
    \ketbra[\big]{\sigma^{(i)}}{\sigma^{(j)}}_{EX\bar R}^{\otimes n} \,
  \Pi_{(EX)^n}^{\lambda'} \Pi_{X^n}^{\mu'}\ .
  \label{eq:fdkjhar8932ioawdsjk}
\end{align}

We now make the following assumption on the $\sigma^{(i)}_{EX\bar R}$:  There
exists $\delta>0$ such that
\begin{align}
  \spec`\big(\sigma_{EX}^{(i)}) & = \spec`\big(\sigma_{EX}^{(j)}) \quad\forall\ i,j\ ;
  &&\text{and}
  &
    \abs[\Big]{ S`\big(\sigma_X^{(i)}) - S`\big(\sigma_X^{(j)}) }
  &\geq \delta\quad\forall\ i\neq j\ .
    \label{eq:f8eiujdskljdaofi9io}
\end{align}

Let us investigate objects of the type
$\Pi_{X^n}^\mu \text{\eqref{eq:afijegqw9dfkfdnq9id}}\,\Pi_{X^n}^{\mu'}$, for
$\mu,\mu'$ corresponding to certain ``far away off-diagonal blocks.''  For
$i\neq j$, suppose that $\mu,\mu'$ are such that
\begin{align}
  \onenorm[\big]{\bar\mu - \spec`\big(\sigma_{X}^{(i)})}
  &\leq \delta'\ ;
  &
  \onenorm[\big]{\bar\mu' - \spec`\big(\sigma_{X}^{(j)})}
  &\leq \delta'\ .
    \label{eq:fidow9hewqonkj}
\end{align}
Let $\lambda,\lambda'$ be any Young diagrams such that
$\onenorm[\big]{\bar\lambda - \spec`\big(\sigma_{EX}^{(i)})}\leq\delta'$ and
$\onenorm[\big]{\bar\lambda' - \spec`\big(\sigma_{EX}^{(j)})}\leq\delta'$ (while
noting we've assumed
$\spec`\big(\sigma_{EX}^{(i)}) = \spec`\big(\sigma_{EX}^{(j)})$).  Then
\begin{align}
  \abs[\Big]{
  `\big[S(\bar\mu) - S(\bar\lambda)]
  - 
  `\big[S(\bar\mu') - S(\bar\lambda')]
  }
  &\geq
    \abs[\big]{S(\bar\mu) - S(\bar\mu')}
    - \abs[\big]{S(\bar\lambda) - S(\bar\lambda')}
    \nonumber\\
  &\geq 
    \abs[\big]{ S`\big(\sigma_X^{(i)}) - S`\big(\sigma_X^{(j)}) }
    - 4`\Big(h(\delta') + \delta'\log(d_X-1))
    \nonumber\\
  &\geq  \delta  - 4`\Big(h(\delta') + \delta'\log(d_X-1))\ ,
\end{align}
making use of the Fannes-Audenaert inequality, where
$h(p) = -p\log(p) - `(1-p)\log(1-p)$ is the binary entropy, and
recalling~\eqref{eq:f8eiujdskljdaofi9io}.  If we choose $\delta'>0$ sufficiently
small such that $\delta > 4`\big[h(\delta') + \delta'\log`(d_X-1)]$, there is no
$w\in\mathbb{R}$, let alone $w\in\mathscr{W}$, such that
$`\big[S(\bar\mu) - S(\bar\lambda)] = \beta w = `\big[S(\bar\mu') -
S(\bar\lambda')]$.  Therefore, if we fix the choice of $\mu,\mu'$ given above,
there is no term in the sum in~\eqref{eq:fdkjhar8932ioawdsjk} with these
$\mu,\mu'$; we find
\begin{align}
  \Pi_{X^n}^\mu 
  \tr_{E^n}`\Big{
    \text{\eqref{eq:afijegqw9dfkfdnq9id}} - \Delta
  }\,
  \Pi_{X^n}^{\mu'}
  =  0\ .
\end{align}
Define the following projector:
\begin{align}
  M^i_{X^n} = \sum_{\substack{
  \mu:\\ \onenorm{\bar\mu - \spec(\sigma_{X}^{(i)})} \leq\delta'
  }} \Pi_{X^n}^\mu\ .
\end{align}
We then find,
\begin{align}
  \onenorm[\bigg]{
  M^i_{X^n}
  \frac1{\sqrt{p_ip_j}} \bra{i}_{R'}\, \tilde\rho_{X^n\bar{R}^nR'}\,\ket{j}_{R'}
  M^j_{X^n}
  }
  \leq \poly(n)\,\exp`\big{-n\eta}\ .
  \label{eq:5jbvdiuowafnjdsafksanfk}
\end{align}
On the other hand, making use of~\eqref{eq:ewu90ofjknlsadosakl} we have that
\begin{align}
  \bra{\sigma^{(i)}}_{EX\bar R}^{\otimes n} \,
  M^i_{X^n} \, \ket{\sigma^{(i)}}_{EX\bar R}^{\otimes n}
  \;  &=\;   \tr`\big[ M^i_{X^n} \sigma^{(i)\,\otimes n}_{EX\bar R} ]
        \nonumber\\
  \ &= \ 1\  - \sum_{\substack{\mu:\\
  \onenorm{\bar\mu - \spec(\sigma_{X}^{(i)})} > \delta'
  }} \tr`\big[ \Pi_{X^n}^{\mu} \sigma^{(i)\,\otimes n}_{X^n} ]
  \; \geq \;
  1 - \poly`(n)\,\ee^{-n\eta}\,.
\end{align}
This implies
\begin{align}
  M^i_{X^n} \, \ket{\sigma^{(i)}}_{EX\bar R}^{\otimes n}
  &=
  \ket{\sigma^{(i)}}_{EX\bar R}^{\otimes n}
  +
  \ket{\epsilon^{(i)}}_{E^nX^n\bar R^n}\ ;
  &
    \ket{\epsilon^{(i)}}_{E^nX^n\bar R^n}
    &= `(\Ident - M^i_{X^n}) \, \ket{\sigma^{(i)}}_{EX\bar R}^{\otimes n}\ ,
\end{align}
noting that $`(\Ident - M^i_{X^n})$ is a projector, and with
\begin{align}
  \norm[\big]{\ket{\epsilon^{(i)}}_{E^nX^n\bar R^n}}^2
  = \braket{\epsilon^{(i)}}{\epsilon^{(i)}}_{E^nX^n\bar R^n}
  = 
  \bra{\sigma^{(i)\,\otimes n}}_{EX\bar R} \,
  `(\Ident - M_{X^n}^i)
  \, \ket{\sigma^{(i)\,\otimes n}}_{EX\bar R}
  \leq \poly(n)\,\ee^{-n\eta}\ .
\end{align}
Recalling~\eqref{eq:fokdih0q9wiadokfjdi}, we find for the
state without the dephasing:
\begin{align}
  \hspace*{2em}&\hspace*{-2em}
  \onenorm[\bigg]{
  M^i_{X^n}
  \frac1{\sqrt{p_ip_j}} \bra{i}_{R'}\, \rho_{X^n\bar{R}^nR'}\,\ket{j}_{R'}
  M^j_{X^n}
  }
  = 
  \onenorm[\Big]{
    M^i_{X^n}
    \tr_{E^n}`\Big{ \ketbra[\big]{\sigma^{(i)}}{\sigma^{(j)}}_{EX\bar R}^{\otimes n} } \,
    M^j_{X^n}
  }
    \nonumber\\
  &= 
  \onenorm[\Big]{
    \tr_{E^n}`\Big{
    `\Big( \ket[\big]{\sigma^{(i)}}_{EX\bar R}^{\otimes n} + \ket{\epsilon^{(i)}} )
    `\Big( \bra[\big]{\sigma^{(j)}}_{EX\bar R}^{\otimes n} + \bra{\epsilon^{(j)}} ) \,
  } }
\nonumber\\
  &\geq  \onenorm[\Big]{
    \tr_{E^n}`\Big{ \ketbra[\big]{\sigma^{(i)}}{\sigma^{(j)}}_{EX\bar R}^{\otimes n} } \,
  } -  \onenorm[\Big]{
    \tr_{E^n}`\Big{ \ket[\big]{\sigma^{(i)}}^{\otimes n}\bra[\big]{\epsilon^{(j)}} } \,
  } - \onenorm[\Big]{
    \tr_{E^n}`\Big{ \ket[\big]{\sigma^{(j)}}^{\otimes n}\bra[\big]{\epsilon^{(i)}} } \,
  } -  \onenorm[\Big]{
    \tr_{E^n}`\Big{ \ket[\big]{\epsilon^{(i)}}\bra[\big]{\epsilon^{(j)}} } \,
  }
\nonumber\\
  &\geq `\big( \onenorm[\big]{
    \tr_{E^n}`\big{ \ketbra[\big]{\sigma^{(i)}}{\sigma^{(j)}}_{EX\bar R} } \,
  } )^n - \poly(n)\,\ee^{-n\eta/2}\ ,
    \label{eq:diohe2dksjiue9oqfs90dask}
\end{align}
where the last step follows from the fact that for any
$\ket\psi_{AB},\ket\epsilon_{AB}$ with $\braket\psi\psi\leq 1$, we have
$\onenorm[\big]{\tr_A`(\ketbra\psi\epsilon)}^2 = %
\max_{U_B} \abs[\big]{ \tr`\big[U_B\,\ketbra\psi\epsilon_{AB}] }^2
= 
\max_{U_B} \bra{\epsilon}_{AB} U_B\,\proj\psi U_B^\dagger \ket{\epsilon}_{AB}
\leq \braket\epsilon\epsilon = \norm{\ket\epsilon}^2$.
\end{widetext}

An example with the desired properties is then constructed by finding states
$`{ \ket{\sigma^{(i)}} }$ that satisfy~\eqref{eq:f8eiujdskljdaofi9io} along with
$\onenorm[\big]{ \tr_{E}`\big{ \ketbra[\big]{\sigma^{(i)}}{\sigma^{(j)}}_{EX\bar
      R} } } = 1$ for some $i\neq j$.  In this case, the
$\bra{i}_{R'}(\cdot)\ket{j}_{R'}$ off-diagonal block of $\rho_{X^n\bar R^n R'}$,
in the $M_{X^n}^i(\cdot) M_{X^n}^j$ submatrix, remains close to one; yet the
corresponding block in $\tilde\rho_{X^n\bar R^n R'}$ vanishes exponentially in
$n$, as per~\eqref{eq:5jbvdiuowafnjdsafksanfk}.

A possible choice is the following.
We choose
$E\simeq X\simeq \bar R$ as one qubit, and we pick two states with
$p_1=p_2 = 1/2$:
\begin{align}
  \sigma^{(1)}_{EX\bar R}
  &= \ket{GHZ}_{EX\bar R} = \frac1{\sqrt{2}}`\big[\ket{000} + \ket{111}]_{EX\bar R}\ ;
    \nonumber\\
  \sigma^{(2)}_{EX\bar R}
  &= (\ket{\phi^+}_{E\bar R})\ket0_X
    = \frac1{\sqrt{2}}`\big[\ket{000} + \ket{101}]_{EX\bar R}\ ,
\end{align}
where $\ket{\phi^+} =`[\ket{00}+\ket{11}]/\sqrt2$.
Then
\begin{align}
  \begin{aligned}
  \sigma_{EX}^{(1)} &= \frac12\proj{00}_{EX} + \frac12\proj{11}_{EX}\ ;
  \\
  \sigma_{EX}^{(2)} &= \frac{\Ident_E}{2}\otimes\proj0_X\ ,
  \end{aligned}
\end{align}
noting that
\begin{align}
  \spec`\big(\sigma_{EX}^{(1)}) =   \spec`\big(\sigma_{EX}^{(2)}) = `(1/2, 1/2, 0, 0)\ .
\end{align}
Also,
\begin{align}
  \begin{aligned}
  \sigma_{X}^{(1)} &= \frac12\Ident_X\ ;
    &
  \sigma_{X}^{(2)} &= \proj0_X\ ,
  \end{aligned}
\end{align}
which ensures that
\begin{align}
  \spec`\big(\sigma_{X}^{(1)})
  &= `(1/2,1/2)\ ;
    &
  \spec`\big(\sigma_{X}^{(2)})
  &= `(1, 0)\ ,
\end{align}
so that we can choose $\delta=\log(2)$.
We can then inspect
\begin{align}
  \hspace*{1em}&\hspace*{-1em}
  \tr_{E}`\Big{ \ketbra[\big]{\sigma^{(1)}}{\sigma^{(2)}}_{EX\bar R} }
  \nonumber\\
  &=
    \frac12\tr_{E}`\Big{ `\big(\ket{000}+\ket{111})`\big(\bra{000}+\bra{101})_{EX\bar R} }
    \nonumber\\
  &=
    \frac12\ketbra{00}{00}_{X\bar R} + \frac12\ketbra{11}{01}_{X\bar R}\ .
\end{align}
This expression is already a singular value decomposition for
$ \tr_{E}`\big{ \ketbra[\big]{\sigma^{(1)}}{\sigma^{(2)}}_{EX\bar R} } $, given
that $\ket{00}\perp\ket{11}$ and $\ket{00}\perp\ket{01}$, so we can read off the
nonzero singular values $(1/2, 1/2)$.  The operator one-norm being the sum of
the singular values, we find
\begin{align}
  \onenorm[\Big]{
  \tr_{E}`\big{ \ketbra[\big]{\sigma^{(1)}}{\sigma^{(2)}}_{EX\bar R} }
  } = 1\ .
  \label{eq:1kjdsnabfioewdafdkssa}
\end{align}

To conclude, we have exhibited a state $\rho_{E^nX^n R}$ such that
$\rho_{X^n R}$ differs significantly from the work-cost-decohered version
$\tilde\rho_{X^n R}$.  A major difference is exhibited in certain off-diagonal
blocks of the matrix associated with the state on $R'$ and $\mu,\mu'$ Schur-Weyl
blocks on $X^n$.  Whereas the considered off-diagonal block of
$\tilde\rho_{X^nR}$ vanishes exponentially fast asymptotically for $n\to\infty$
[\cref{eq:5jbvdiuowafnjdsafksanfk}], the same off-diagonal block in
$\rho_{X^nR}$ remains of constant norm
[\cref{eq:diohe2dksjiue9oqfs90dask,eq:1kjdsnabfioewdafdkssa}].

\vfill\hbox{}
\onecolumngrid
\vspace{1cm}
\begin{center}%
\par\noindent\hbox{\hfill\fbox{\vbox{\textbf{REFERENCES}}}\hfill}\par\end{center}
\twocolumngrid

%

\end{document}